\definecolor{BLUE}{RGB}{0,0,100}
\newenvironment{proofsketch}{\begin{proof}[Proof sketch]}{\end{proof}}
\theoremstyle{plain}
\newtheorem{theorem}{Theorem}%
\newtheorem{proposition}[theorem]{Proposition}
\newtheorem{corollary}{Corollary}[theorem]
\newtheorem{lemma}{Lemma}
\newtheorem*{corollary*}{Corollary}
\theoremstyle{definition}
\newtheorem{observation}{Observation}%
\newtheorem{remark}{Remark}%
\newtheorem*{remark*}{Remark}
\newtheorem{example}{Example}%
\newtheorem*{claim*}{Claim}
\DeclareMathOperator{\supp}{supp}
\DeclareMathOperator{\E}{\mathbf{E}}
\DeclareMathOperator*{\argmax}{arg\,max}
\DeclareMathOperator{\conv}{conv}
\DeclareMathOperator{\ext}{ext}
\renewcommand{\d}{\mathop{}\!\mathrm{d}}
\newcommand{\lbar}{\underline}
\newcommand{\reals}{\mathbb R}
\DeclareSymbolFont{operators}   {OT1}{lmr} {m}{n}
\DeclareMathOperator{\MPS}{\mathrm{MPS}}
\newcommand{\MPSw}[1]{\MPS(#1 \cdot \mathbf{1}_{[\theta_0,\bar\theta]})}
\newtheorem*{example*}{Example}
\title{Allocating Positional Goods: A Mechanism Design Approach%
\thanks{I thank Krishna Dasaratha, Alex Gershkov, Jonathan Libgober, Bart Lipman, Chiara Margaria, Teddy Mekonnen, Juan Ortner, Marcel Preuss, and participants at EC'25, IIOC 2026, Midwest Theory Conference (Ohio State), and 36th Stony Brook Game Theory Conference for their helpful comments.
}
}
\author{
    Peiran Xiao\thanks{Department of Economics, University of Southern California. Email: \href{mailto:peiran.xiao@usc.edu}{{peiran.xiao@usc.edu}}.}
}
\date{\today %
}
\begin{document}
\maketitle

\onehalfspacing

\begin{abstract}
    I study the optimal allocation of positional goods, where consumers' concern for relative consumption creates externalities. Applications include luxury goods, priority services, education, and organizational hierarchies. Using a mechanism design approach, I characterize feasible allocations through a majorization condition. Under Myerson regularity, the revenue-maximizing mechanism fully separates participating buyers, with possible exclusion at the bottom. Selling a single level guarantees at least half the maximum revenue.
    When all buyers are served, restricting the seller to a single level increases consumer surplus under an increasing failure rate (IFR).
    When the seller is restricted to a single level, expanding coverage also benefits consumers under IFR but may harm them otherwise.
    I also characterize the welfare-maximizing mechanism with and without subsidies.
\end{abstract}

\textbf{Keywords}: Positional goods, mechanism design, externalities.

\clearpage

\begin{quote}
    ``\emph{If everyone stands on tiptoe, no one sees better.}''

    \hfill  --- Fred Hirsch (1976), \emph{Social Limits to Growth}
\end{quote}

\section{Introduction}
Positional goods are goods or services whose value to consumers depends on their relative position in consumption.
\cite{Hirsch1976} introduced the concept of positional goods, describing them as either ``scarce in some absolute or socially imposed sense'' or ``subject to congestion or crowding through more extensive use.''
The allocation of such goods generates \emph{externalities}:
moving one consumer up necessarily pushes another down, and as more
consumers buy the same good, its value to each diminishes.

Examples of positional goods abound.
Many consumer goods, such as jewelry, luxury cars, and non-fungible tokens (NFTs), are positional because individuals derive utility from social comparisons \citep{Frank1985a, CarlssonJohansson-StenmanMartinsson2007, LundyRamanKominers2025}.
These goods are also called status goods \citep{CharlesHurstRoussanov2009, Rayo2013, BursztynFermanFiorin2018}, as their value stems from the status they confer relative to others.
Other goods are positional due to capacity constraints.
An example is priority services \citep{GershkovWinter2023}, such as priority boarding and priority rideshare pick-up.
Consumers purchase these services to reduce waiting times by moving ahead of others, but their value diminishes as more consumers purchase them; with multiple priority
levels, the value of each level depends on relative consumption as well.
More broadly, education can be viewed as a positional good, as students compete for scarce college seats and job opportunities \citep{Durst2021,KimTertiltYum2024,KrishnaLychaginOlszewski2026}.
\footnote{
Education is the classic example in \cite{Hirsch1976}: ``The value to me of my education depends not only on how much I have but also on how much the man ahead of me in the job line has.''
In East Asia, the education system is viewed as positional \citep{Economist2021} and ``a zero-sum game'' \citep{JiaLiCousineau2025}.
}

Positional goods differ fundamentally from ordinary goods. 
For ordinary goods, consumers derive utility from the good's intrinsic quality; for positional goods, consumers' utility depends on others' consumption.
This distinction gives rise to different optimal mechanisms.
When selling ordinary goods of the same intrinsic quality, the optimal mechanism is a posted price: the seller announces a single price, and each buyer decides whether to purchase. A buyer who purchases the good obtains its value regardless of how many others do so.
By contrast, for positional goods, the value of the good diminishes as more buyers purchase it: a good meant to confer the highest status ceases to do so once sold to many buyers.
Therefore, selling them at a single posted price is no longer optimal, even when they are homogeneous in intrinsic quality. Instead, by branding them into different tiers, the seller creates differences in status value that are endogenously determined in equilibrium, thereby enabling price discrimination.
Indeed, positional goods in practice are commonly offered in multiple tiers. %

The provision of multiple tiers of positional goods is largely driven by externalities.
This logic differs from that in standard second-degree price discrimination for ordinary goods \citep{MussaRosen1978}, where the seller offers products of varying intrinsic quality to screen consumers.
There, product differentiation hinges on a positive marginal cost of producing a higher-quality good: when the marginal cost is zero, the optimal mechanism collapses to a posted price.
\footnote{
This assumes that the buyer's utility is multiplicatively separable and that quality is bounded above.
When the buyer's utility is linear in quality, product differentiation further requires a strictly increasing marginal cost: with a constant marginal cost, the optimal mechanism collapses to a posted price.
}
For positional goods, by contrast, the marginal cost of producing a higher tier can be virtually zero because intrinsic quality hardly differs across tiers, and
yet offering multiple tiers remains optimal.
\footnote{
For pure positional goods, such as credit cards \citep{BursztynFermanFiorin2018}, NFTs, digital collectibles, and priority boarding, tiers differ only in the relative position they confer and cost virtually the same to produce.
Even for luxury goods where quality does vary across tiers, the price premium is predominantly driven by status concerns \citep{BagwellBernheim1996,ChaoSchor1998,KapfererValette-Florence2021}.
}

Positional goods also raise distinct welfare considerations. Unlike ordinary goods, which can in principle be allocated to all consumers, a positional good cannot be allocated to all without diminishing its value: “what each of us can achieve, all cannot” \citep{Hirsch1976}. As a result, positional goods remain scarce and competitive despite economic growth, which raises a natural question: how should such goods be allocated from a welfare perspective?
The answer is nontrivial due to positional externalities.
Dampening competition may benefit consumers by sparing them from a zero-sum status game, but it may also harm those who value status highly. Expanding access to positional goods creates a similarly distinctive trade-off, as it may dilute the status of existing consumers.

In this paper, I study the optimal allocation of positional goods using a mechanism design approach. %
I assume that buyers care about their relative positions, or \emph{status}, defined by the mass of consumers who purchase goods at lower levels (or opt out) plus half the mass of consumers at the same level.
Buyers privately know their valuations of status (i.e., types).
The seller offers a menu with one or more levels of positional goods and sets a price for each level. Each buyer can purchase a good to obtain its intrinsic value and the status it confers, or opt out and pay nothing.
Using a mechanism design approach, I study direct mechanisms that
specify a status profile and a payment schedule as functions of buyers'
(reported) types.

I first characterize which status profiles are feasible, in the sense that
they can arise from some allocation of positional goods.
Positional externalities impose a nontrivial feasibility constraint: for example, it is impossible to assign every consumer the highest status at its nominal value.
Building on the theory of extreme points and majorization \citep[henceforth KMS]{KleinerMoldovanuStrack2021}, I show that feasibility admits a simple characterization in quantile space. After transforming types into quantile ranks, an incentive-compatible status profile is feasible if and only if it is majorized by (i.e., a mean-preserving spread of) the identity function on the participating interval.
Intuitively, if each participating buyer receives a
distinct tier, they are fully separated, and each participant's status is their quantile rank. At the other extreme, if all participants are pooled into a single tier, their status is the average quantile rank of the participating interval.
In general, pooling buyers into the same tier coarsens the status profile while preserving average status, making it a mean-preserving spread of the identity function.

Then, I turn to revenue maximization. If the type distribution is regular in the sense of \cite{Myerson1981}, the seller's revenue increases as she offers more levels of positional goods.
Therefore, the revenue-maximizing mechanism fully separates participating buyers, with possible exclusion at the bottom.
This allocation can be implemented by an all-pay auction with a reserve price, where the buyer who pays more obtains a higher position.
If the type distribution violates regularity, the optimal mechanism may pool some types into the same level.

Although a single-tier posted-price mechanism is never optimal for the seller, it guarantees at least half the maximum revenue.
This result is useful when the number of tiers is subject to practical considerations.
The argument is particularly simple when the positional good has no intrinsic value.  
Consider an auxiliary problem of selling an indivisible ordinary good.
In this problem, the optimal mechanism is a posted price that allocates the good with probability one to every buyer who purchases it.
By contrast, when the good is positional, the seller cannot assign the highest status of $1$ to every participating buyer without diluting it.
Thus, the maximum revenue from selling the positional good is (strictly) below that in the auxiliary problem. 
Nevertheless, by selling a single positional good, the seller can assign every participating buyer a status of at least $1/2$. Charging half the corresponding price from the auxiliary problem therefore generates at least half of the auxiliary problem’s maximum revenue, and hence at least half of the maximum revenue attainable from selling positional goods.

I also analyze the welfare effects of regulations on the number of tiers and service coverage. The directions of these effects are \emph{a priori} ambiguous.
When more tiers are allowed, the seller may offer a finer status profile, which generates efficiency gains by assigning higher status to consumers who value it more, but it also allows the seller to extract more surplus through higher prices.
I show that when the seller optimally chooses to serve all consumers, if the type distribution has an increasing failure rate (IFR), consumer surplus decreases as the seller offers a finer status profile.
Consequently, restricting the seller to a single tier increases consumer surplus.
The reason is that under IFR, the seller extracts more surplus than a finer status generates, leaving consumers worse off.
If, instead, the distribution has a decreasing failure rate (DFR), a finer status profile increases consumer surplus. Under DFR, superstars at the top benefit from being separated from lower types. Although low types still lose, the heavy tail places sufficient weight on these superstars so that their gains dominate.

Similarly, expanding coverage by lowering the price of the lowest tier has mixed effects: it benefits consumers on the extensive margin but can do harm on the intensive margin, as serving more customers may reduce the status of existing ones.
When the number of levels is unconstrained, expanding coverage increases consumer surplus under Myerson regularity, because existing participants are not pooled with new entrants and thus do not lose status.
However, when the seller is restricted to a single tier, expanding coverage may decrease or increase consumer surplus.
Under IFR, the gain to new participants outweighs the loss to existing ones, so consumer surplus rises.

Next, I study optimal mechanisms that maximize consumer surplus.
Unlike ordinary goods, a positional good cannot be allocated to all consumers without diminishing its value, so the optimal mechanism may involve exclusion and multiple tiers.
When negative transfers are allowed and subject to budget balance, consumer surplus is maximized by full separation with cross-subsidization and without exclusion.
Intuitively, assortative matching is efficient, and a fixed subsidy can redistribute efficiency gains among consumers while maintaining incentive compatibility \citep[see also][]{GershkovSchweinzer2010}.
This mechanism can be implemented by an all-pay auction with a lump-sum subsidy financed by the auction payments, where the buyer who pays more obtains a higher position, and every buyer receives the same subsidy.

When subsidies are unavailable, total pooling without exclusion maximizes
consumer surplus if the type distribution satisfies IFR\@. If instead
the distribution satisfies DFR (e.g., Pareto), consumer surplus under
the nonnegative-price constraint is maximized by full separation without
exclusion. Intuitively, separation harms agents by requiring them to pay
more, but high types may benefit because they value status more,
particularly when the distribution has a heavy tail. In both cases, the
consumer-optimal mechanism involves no exclusion. Unlike with ordinary
goods, exclusion can raise the status of existing consumers who would
otherwise be pooled with the excluded, but its harm to excluded
consumers always exceeds its potential benefit.

Finally, I study social welfare, defined as a weighted sum of the
seller's revenue and consumer surplus. In the application to education,
the welfare weight on revenue captures the productivity of education:
the higher the weight, the more productive education is relative to pure
signaling. I characterize conditions under which full separation with
potential exclusion maximizes social welfare.

The model can also be interpreted as an organizational design problem in which a principal designs the number and size of status categories \citep{MoldovanuSelaShi2007}, and agents who care about their status exert effort to climb the hierarchy or opt out.
Under this interpretation, the
``price'' is effort, which is nonnegative and entails a linear cost, and
an agent's type is their ability, which determines the marginal cost of
effort.
If the ability distribution satisfies Myerson regularity, the effort-maximizing mechanism excludes the lowest types and fully separates all participating agents. 
This resembles the ``rank-and-yank'' system, which ranks employees and terminates underperformers.
In terms of agent welfare, the results imply that an egalitarian organization is optimal under IFR, while a strict hierarchy is optimal under DFR.

The results also have implications for education. %
The public has called for collective efforts to curb educational arms races, in which students invest enormous time, effort, and money to outperform their peers \citep{Economist2021}.
My result supports this view when the ability distribution has a thin tail, in which case coarser performance rankings or lottery-based school assignment can improve student welfare on average.
\footnote{
\cite{KrishnaLychaginOlszewski2026} find that pooling a large fraction of the lowest-performing students leads to a Pareto improvement in college admissions.
}
Conversely, when there are a few superstars in the upper tail, meritocracy can be optimal.
While meritocracy creates a rat race that forces students to exert effort, superstars can benefit from it because they have lower marginal costs and are sufficiently spread out in the tail.
Although low abilities still suffer, the heavy tail also makes the gain to higher abilities dominate in the aggregate.
Thus, when the ability distribution has a thin (heavy) tail, softer competition benefits (harms) students on average.
This dichotomy offers one possible explanation for why admissions at higher levels of education, such as graduate schools, are more meritocratic, while those at lower levels tend to be lottery-based.

As an extension, I consider alternative specifications of status, including convex or concave specifications, varying concerns about same-tier consumption, and signaling-based formulations.
The revenue maximization result also extends to the setting where different tiers of positional goods differ in intrinsic quality (à la \citet{MussaRosen1978}). 
In addition, motivated by the priority services application, I allow the seller to
delay service and use excessive waiting time as an ordeal mechanism to screen consumers.
Instead of excluding types with negative virtual valuations, the revenue-maximizing mechanism serves them with excessive waiting time. 
Finally, I study a variant of
the model where higher types obtain \emph{lower} net utility from
purchasing the good so that agents have incentives to overreport.

\paragraph{Literature Review.}
The idea that consumers' utility depends on comparisons with others' consumption dates back at least to \cite{Veblen1899} \citep[see also][]{Duesenberry1949, Leibenstein1950, BagwellBernheim1996}.
The theoretical literature on positional goods focuses on consumer choice in purchasing positional goods
and the welfare effects of income distribution and government policies \citep{Frank1985, Frank2005, Frank2008,Robson1992,HopkinsKornienko2004}.
\footnote{
    Another strand of literature uses the formulation that stems from \cite{Duesenberry1949} and \cite{Pollak1976} (see the survey by \cite{Truyts2010}), which assumes the consumer's utility depends on others' consumption in addition to their absolute consumption.
    Empirical evidence of positional goods and externalities includes \cite{Luttmer2005}, \cite{AlpizarCarlssonJohansson-Stenman2005}, \cite{CarlssonJohansson-StenmanMartinsson2007}, \cite{BursztynFermanFiorin2018}.
}
This paper complements the literature by studying the supply side: the monopolist provision of positional goods.

The monopolist provision of nonpositional goods has been studied extensively.
\cite{MussaRosen1978} establish the optimality of price discrimination
through product differentiation, but the optimal mechanism degenerates
to a posted price when the marginal cost is constant (and quality is bounded above).
\footnote{With multiplicatively separable utility, a posted price is optimal when the marginal cost is zero.}
More generally, without multiplicative separable utility, \cite{AndersonDana2009} show that product differentiation can be optimal when the marginal cost is nonpositive
(see also \cite{DeneckereMcAfee1996}).
This paper differs from both frameworks in two respects: the value of positional goods is
determined endogenously in equilibrium, and product differentiation is
optimal even with zero marginal cost and multiplicatively separable utility.

\citet[henceforth MSS]{MoldovanuSelaShi2007} study the optimal number
and size of status categories in an organization to maximize agents'
effort when agents care about their relative position. They find that
full separation is optimal if the ability distribution has an increasing
failure rate (IFR).
Methodologically, I use a mechanism design
approach that characterizes feasibility via majorization, while they
use optimal contest design with a finite number of agents. Conceptually,
my framework generalizes theirs by allowing for \emph{exclusion} of
agents and stochastic allocations, while also analyzing agents'
welfare.\footnote{Their definition of status is equivalent, up to an
affine transformation, to mine (see Footnote~\ref{ftn:MSS}).}
Furthermore, in cases without exclusion, I show that IFR is unnecessary
for full separation to maximize effort; in fact, full separation maximizes both
effort and welfare if the distribution satisfies Myerson's regularity
and has a \emph{decreasing} failure rate.

Relatedly, \cite{Rayo2013} studies monopolist provision of conspicuous goods and characterizes the optimal allocation.
Although his definition of conspicuous goods is different in that consumers care about the average type in their assigned category rather than relative positions, his results are qualitatively similar to those of MSS.
In Section~\ref{sec:signaling}, I show an analogous feasibility condition for his specification.
\footnote{
    More generally, \cite{Board2009} allows the value of belonging to a category to take a general form and shows that, under a regularity condition, the revenue-maximizing mechanism segregates agents too finely and excludes too many agents compared to the socially efficient benchmark. 
}

Similar to MSS, \cite{ImmorlicaStoddardSyrgkanis2015} study the optimal design of badges to incentivize contribution via status rewards. 
The most important difference is that they assume status is decreasing in the mass of agents who are \emph{weakly} superior, including those at the same level.
Under this specification, pooling assigns agents in the pool the status of the lowest type of that pool, thereby lowering status for all other pooled agents relative to full separation.
Consequently, pooling is never optimal for agent welfare, and opting out is equivalent to pooling at the bottom tier (see Section~\ref{sec:extensions:gamma}).

In a different vein, \citet{GershkovWinter2023} study the welfare implications of introducing priority service.
Because priority service is a positional good generated by capacity constraints, my analysis extends theirs to multiple, potentially stochastic priority levels while allowing for exclusion using a mechanism design framework.

Another closely related paper is by \cite{LoertscherMuir2022}, who study revenue maximization when selling a fixed inventory of goods with heterogeneous qualities.
Under the linear status specification and without intrinsic value, my model is analogous to a continuous version of theirs in which each quality is in unit supply,
so that positional externalities arise from limited supply.
In particular, pooling multiple types at the same status level corresponds to randomization (or conflation) over qualities in their model.
\footnote{
Under nonlinear status specifications or when consumers place different weights on others at the same level (see Section~\ref{sec:extensions}), this analogy no longer holds.
}
My framework, however, also applies to settings in which positional externalities arise from social comparisons rather than physical scarcity, thereby yielding new insights for applications such as luxury goods.

More broadly, this paper contributes to the literature on mechanism
design with allocative externalities
\citep[e.g.,][]{JehielMoldovanuStacchetti1996,JehielMoldovanu2006,OstrizekSartori2023,AkbarpourDworczakKominers2024,DworczakReuterKominers2026}.
The model features a continuum of agents who sort endogenously into
tiers that induce different status levels; positional externalities
arise from agents' relative consumption rather than from average
characteristics of others \citep{DworczakReuterKominers2026} or network
effects \citep{Csorba2008,OstrizekSartori2023,MeisnerPillath2025}.

Finally, this paper relates to the literature on signaling and matching
tournaments
\citep[e.g.,][]{McAfee2002,HoppeMoldovanuSela2009,Hopkins2023,KrishnaLychaginOlszewski2026}.
This literature studies how agents use costly signals or investments to
improve their rank or matching outcomes. In contrast, this paper studies how a
designer allocates positional goods to induce agents' status.

\section{Model}
\subsection{Setup}
A monopolist seller (she) sells positional goods ${x}\in X\subseteq \reals_{++}$ to a continuum of buyers (he) with unit mass.
A larger ${x}$ represents a higher-level good, such as a higher-end luxury good or a higher-priority boarding group, and the set $X$ is determined by the seller. 
\footnote{
    In general, it is without loss of generality to take $X=\reals_{++}$.
    If the number of levels is constrained, then $X$ is a finite subset of $\reals_{++}$.
}
Let ${x}=0 \notin X$ denote the outside option of not buying.
Because consumers care about their relative position, consumption of positional goods confers \emph{status} (defined later).
Buyers are heterogeneous in their valuations of status (i.e., types), denoted by $\theta$, which has a distribution $F(\theta)$ with continuous density $f(\theta)>0$ on $\Theta = [0,\bar \theta]$.%
\footnote{
I allow $\bar\theta=\infty$ under the assumption that $\E[\theta]<\infty$.
The lower bound can also be generalized to $\lbar\theta>0$, which leads to a complication in consumer surplus maximization under DFR (see Remark~\ref{complications}).
}
If a buyer with type $\theta$ purchases a good at the price $p$, he obtains a payoff of $u(p,s,\theta) = \theta s - p + v(\theta)$, where $s\in[0,1]$ is the status conferred by the good and $v(\theta)\geq0$ is the intrinsic value of the good. 
Assume $v'(\theta)\geq0$ and $v''(\theta)\leq0$, so that higher types derive greater intrinsic value from the good at a diminishing marginal rate.
If the buyer does not purchase (i.e., $x=0$), his status is defined as $s=0$, and he does not obtain the intrinsic value, so his payoff is zero.

Equivalently, $-v(\theta)$ can be interpreted as a type-dependent outside option, and the assumptions imply that higher types suffer more from not buying the good.

\paragraph{Definition of status.}
Let $G$ denote the distribution of buyers' consumption over ${x}\in\tilde{X}\equiv X\cup\{0\}$, so $G({x})$ represents the mass of buyers who consume weakly lower-level goods than ${x}$ or opt out.
Following \cite{Robson1992} and \cite{HopkinsKornienko2004}, I define the \emph{status} of a buyer who consumes ${x}\in X$ (i.e., participant) as
\begin{equation} \label{eqn:status}
    S({x}, G(\cdot)) 
    = 
    \begin{cases}
        \dfrac{G^{-}({x}) + G({x})}{2} \in [0,1], &\text{if } {x}>0 \\
        0, &\text{if } {x}=0
    \end{cases}    
\end{equation}
where $G^-({x}) \equiv \lim_{{x} \to {x}^-} G({x})$ is the mass of buyers who consume \emph{strictly} lower-level goods than ${x}$ (or opt out).
In other words, participants' status equals the mass of buyers who consume strictly lower-level goods plus half the mass of buyers at the same level.%
\footnote{
    \cite{HopkinsKornienko2004} assume a more general specification: $S({x}, G(\cdot)) = \gamma G({x}) + (1-\gamma) G^{-}({x}) + \alpha$, where $\gamma\in [0,1)$ captures the concern about the number of consumers at the same level, and $\alpha\geq0$ is a constant representing a guaranteed minimum status for participants. %
    My model incorporates $\alpha$ through the intrinsic value $v(\theta)$, and many results are robust to general $\gamma$ (see Section~\ref{sec:extensions:gamma}).
    }%
\footnote{
    Without exclusion, this specification is equivalent to the one used in MSS and \cite{DubeyGeanakoplos2010} (up to an affine transformation):
    \(
        S({x}, G(\cdot)) 
        = G^-({x}) - (1-G({x})) \in [-1,1].
    \)
    \label{ftn:MSS}
}
Non-participants ($x = 0$) obtain zero status and contribute to the status of every participant because they are ranked strictly below participants.

Because buyers only care about the induced status, 
the seller essentially sells status $s\in [0,1]$, with distribution $\Psi$, subject to the feasibility constraint
\begin{equation*}\label{eqn:status2}
    s %
      =  
        \begin{cases}
            \dfrac{\Psi^{-}({s}) + \Psi({s})}{2} \in [0,1], &\text{if } {s}\in \supp(\Psi) \setminus \{0\}\\
            0, &\text{if } {s} =0.
        \end{cases}
\end{equation*}
For a given mass of participants, their expected total status is constant, regardless of the distribution of status (whether it has mass points due to pooling).
This reflects the zero-sum nature of status among participants.

This specification of status can arise from interpersonal comparisons \citep{Frank1985, Robson1992, HopkinsKornienko2004}, especially for status goods.
Consumers gain utility from outranking those who buy lower-level goods or cannot afford the good, but gain less from comparisons with consumers who purchase the same level.

In addition to psychological reasons, the specification can also arise from physical capacity constraints. The following example illustrates this through priority services in queuing, where status corresponds to the reduction in waiting time.

\begin{example}[Priority Services \citep{GershkovWinter2023}] \label{ex:queuing}
    Consider a unit mass of consumers who arrive simultaneously, and a seller serves one consumer per unit time.
    The seller offers tiered priority services: consumers at higher priority levels are served first, and those at the same level are served in random order.
    A consumer's payoff is $u(p,t,\theta) = \tilde{v}(\theta) - \theta t - p$, where $t \in [0,1]$ is the (expected) waiting time, $\theta$ is the cost of waiting, and $\tilde{v}(\theta)$ is the value of the good.
    Defining $s = 1 - t$ as the status, or priority value, and $v(\theta) = \tilde{v}(\theta) - \theta$ as the net value of the good, this model is equivalent to the baseline model.
    \footnote{
    The assumption $v'(\theta) \geq 0$ implies $\tilde{v}'(\theta) \geq 1$: higher types have higher net value of the good after accounting for waiting costs when served last.
    Section~\ref{suffering} considers an extension in which $v'(\theta) \leq -1$ (equivalently, $\tilde{v}'(\theta) \leq 0$), so that higher types derive lower net value from the good.
    }   
    In particular, excluded consumers contribute to participants' status because participating buyers need not wait for nonparticipants.
\end{example}

Relatedly, education can be viewed as a positional good, with externalities arising from competition for jobs or social positions, because an individual's education level determines their position in the job queue or social hierarchy.
This application also resembles the application to status contests within organizations.

\begin{example}[Status Contests \citep{MoldovanuSelaShi2007}]
    Consider a principal who maximizes agents' effort by designing status categories, where agents care about their relative position  within the organization.
    The positional good ${x}$ represents the status category,
    the price $p \geq 0$ represents the agent's effort, and the type $\theta$ represents the agent's ability (which determines the marginal cost of effort).
    Each agent has a linear effort cost $p/\theta$ and receives a payoff of $s - p/\theta$ upon attaining status $s \in [0,1]$.
    An excluded agent receives a zero payoff.
    After scaling agents' payoffs by $\theta$, this is a special case of the model with $v(\theta) = 0$ and an additional constraint $p \geq 0$.
\footnote{
The type $\theta=0$ can be treated as a limiting boundary type, who has an infinite marginal cost and therefore optimally exerts zero effort.
}
\end{example}

\subsection{A Mechanism Design Approach}
Consider a direct mechanism $(\chi(\theta,\omega), p(\theta), \sigma(\theta))$, consisting of a (potentially stochastic) allocation rule $\chi\colon \Theta\times \Omega \to \reals_{+}$, a payment function $p\colon \Theta\to \reals$, and
a purchase indicator $\sigma\colon \Theta \to \{0,1\}$.
To allow for stochastic allocations, I introduce a random variable $\omega\in \Omega = [0,1]$ to capture all randomness in the mechanism. 
The random variable $\omega$ is drawn uniformly, independent of types, and is common to all buyers.

In the direct mechanism, the buyer reports his type $\theta$.
If $\sigma(\theta)=1$, the buyer pays $p(\theta)$ and receives the good $\chi(\theta,\omega)>0$.
If $\sigma(\theta)=0$, the buyer does not purchase and receives the outside option $0$.
\footnote{
The purchase indicator $\sigma$ is taken to be deterministic, which captures the assumption that the buyer obtains the intrinsic value $v(\theta)$ if and only if he purchases. When $v(\theta)=0$, it is without loss to set $\sigma\equiv 1$ and allow stochastic nonparticipation by incorporating it to the allocation rule as $\chi(\theta,\omega)=0$ and $p(\theta)=0$.
\label{ftn:v=0}
}
By the taxation principle, the direct mechanism can be implemented by a pricing scheme as a function of lotteries over the set of positional goods $X =\operatorname{range}(\chi)$.
To incorporate the purchase decision, define the extended allocation and payment functions as $\chi \cdot \sigma$ and $p \cdot \sigma$. For simplicity, I will use $\chi$ and $p$ to denote them respectively, with the caveat that $\sigma(\theta) = \mathbf1_{\chi(\theta,\omega)>0}$ does not depend on $\omega$.

Given an allocation $\chi$, each realization $\omega\in\Omega$ induces a distribution over consumption levels $x\in X \cup \{0\}$, given by
$G_{\chi}(x\mid\omega) = \int_\Theta \mathbf 1_{\chi(\theta,\omega)\leq x} \d F(\theta)$.
Since buyers' payoffs depend on the allocation $\chi$ only through the status it confers, the allocation $\chi$ can be summarized by its induced status. Thus, it is equivalent to consider a direct mechanism $(s(\theta),p(\theta),\sigma(\theta))$, where 
$s\colon\Theta\to[0,1]$ is the \emph{status profile} induced by $\chi$, defined as
\[
s(\theta) = \E_\omega [S( \chi(\theta,\omega) ,G_\chi(\cdot\mid\omega))].
\]
In particular, if $\sigma(\theta)=0$, then $\chi(\theta,\omega)=0$ for all $\omega\in\Omega$, and thus $s(\theta)=S(0,G_\chi)=0$.

Let $U(\hat\theta|\theta) = \theta s(\hat\theta) - p(\hat\theta) + v(\theta)\sigma(\hat\theta)$ denote the buyer's payoff when he reports $\hat\theta$ while his true type is $\theta$.
Define $U(\theta) = U(\theta|\theta)$.
By the standard envelope argument, incentive compatibility requires that $U(\theta)=\max_{\hat\theta} U(\hat\theta|\theta)$, which implies that $U$ is absolutely continuous with
$U'(\theta) = s(\theta) + v'(\theta) \sigma(\theta)$ and the following lemmas.

\begin{lemma}\label{lemma:IC}
    A direct mechanism $(s(\theta),p(\theta),\sigma(\theta))$ is incentive-compatible and individually rational if and only if 
    \begin{itemize}
    \item there exists $\theta_0\in[0,\bar\theta]$ such that $\sigma(\theta)=1$ and $U(\theta)>0$ for all $\theta>\theta_0$, and  $\sigma(\theta)= 0$ and $U(\theta)=0$ for all $\theta<\theta_0$;
    \item $s(\theta)$ is increasing;
    \item $U(\theta) = U(0) + \int_{\theta_0}^\theta (s(t) + v'(t))  \d {t}$ for all $\theta \in [\theta_0,\bar\theta]$.
    \footnote{In particular, this implies that $U(\theta_0) = U(0) = 0$ if $\theta_0>0$.}
    \end{itemize}
\end{lemma}

An incentive-compatible status profile $s(\theta)$ is not necessarily feasible, as it may be unable to be induced by an allocation $\chi$.
For example, a binary status $s(\theta)= \mathbf{1}_{\theta\geq \theta_0}$ cannot be induced by any allocation $\chi$.
Formally, say $s(\theta)$ is \emph{feasible} if there exists an allocation $\chi\colon \Theta \times \Omega \to \reals_+$ that induces it---i.e., such that $s(\theta)=\E_\omega [S(\chi(\theta,\omega), G_\chi(\cdot\mid\omega))]$.
The following theorem provides the necessary and sufficient condition for an incentive-compatible (and hence increasing) status profile to be feasible.

\begin{theorem}[Feasibility]\label{thm:feasibility}
    An increasing $s(\theta)$ is feasible if and only if there exists $\theta_0\in[0,\bar\theta]$ such that $s(\theta)$ is majorized by $F(\theta)$ in quantile space on $[\theta_0,\bar\theta]$, that is,  
    \begin{align}
        \int_{\theta}^{\bar\theta}   s(\tilde \theta)  \d F(\tilde \theta) \leq \int_{\theta}^{\bar\theta}  F(\tilde \theta)   \d F(\tilde \theta),\quad  \text{ for all } \theta \in [\theta_0, \bar \theta], %
    \end{align}
    with equality at $\theta=\theta_0$, and $s(\theta)=0$ for all $\theta\in[0,\theta_0)$.
\end{theorem}

The proof (in Appendix~\ref{sec:proofs}) is à la Theorem 3 (Border's Theorem) in KMS. 
Let $\MPS(F\cdot\mathbf1_{[\theta_0,\bar\theta]})$ denote the set of increasing functions $s\colon\Theta\to[0,1]$ that satisfy the majorization condition on $[\theta_0,\bar\theta]$.
Let $\MPS_0(F)=\bigcup_{\theta_0\in[0,\bar\theta]} \{s\in \MPS(F\cdot\mathbf1_{[\theta_0,\bar\theta]})\mid s(\theta)=0 \text{ for all } \theta\in[0,\theta_0)\}$ denote the set of increasing feasible status profiles.

\begin{remark}
    For deterministic allocations $\chi$, a feasible $s(\theta)$ must be an extreme point of $\MPS(F\cdot\mathbf1_{[\theta_0,\bar\theta]})$%
    ---i.e., either the majorization constraint or the monotonicity constraint binds on $[\theta_0,\bar\theta]$---and $s(\theta)=0$ for all $\theta\in [0,\theta_0)$. %
    For stochastic allocations $\chi$, a feasible $s(\theta)$ can be a non-extreme point.
    \footnote{
       If the seller can randomly not serve participants, the feasibility condition becomes: $s$ is weakly majorized by $F$ in quantile space, denoted by $s\in \MPS_w(F)$.
       However, the model assumes that types above (below) the cutoff $\theta_0$ always (never) purchase and obtain an intrinsic value $v(\theta)$.
       If $v(\theta)=0$, this assumption has no bite (see also Footnote~\ref{ftn:v=0}).
    }
\end{remark}

\begin{remark} \label{MPS(F)}
    When exclusion is impossible, we have $s\in \MPS(F)$ and $\E[s]=\E[F]=1/2$.
    \footnote{
        Under MSS's specification $S({x}, G) = G^-({x}) - (1 -  G({x}))$ without exclusion, the feasibility condition becomes $s\in \MPS(2F-1)$ with $\E[s]=\E[2F-1]=0$.%
    }
\end{remark}

To provide some intuition, first consider the full separation case where each participating buyer is assigned to a distinct tier.
Each participant's status is just their quantile: $s(\theta)=F(\theta)$. 
Then, when buyers $\theta \in [\lbar\theta_i,\bar\theta_i]$ are pooled into the same tier $x_i>0$, the status of that tier becomes the average quantile of the buyers in the pool because
\begin{equation*}
    s(\theta) = \frac{G^-({x}_i)+G({x}_i)}2 = \frac{F(\lbar\theta_i)+F(\bar\theta_i)}{2}, \quad \text{ for all } \theta\in [\lbar\theta_i,\bar\theta_i].
\end{equation*}
Thus, pooling coarsens the status profile across buyers, while preserving
the aggregate status, making their status a mean-preserving spread of the quantile.
Randomization also preserves this relationship for each realization $\omega$ and expands the set of feasible status profiles to the exact set of mean-preserving spreads through convexification.

\begin{example}[Two tiers] \label{ex:twolevels}
    Suppose there are two levels of positional goods, $(p_L, s_L)$ and $(p_H, s_H)$.
    Assume $p_L = 0$ so that every buyer participates ($\theta_0=0$).
    Let $\theta^*$ denote the type who is indifferent between the two levels, which is determined by $p_H$. 
    Then, the induced status profile is
    \[s(\theta)=  \begin{cases}
        s_L =  {F(\theta^*)}/{2}, &\text{if }\theta < \theta^*,\\
        s_H =  ({1+F(\theta^*)})/{2}, &\text{otherwise}. 
            \end{cases}\] 
    It is straightforward to check that $\E[s] = 1/2$ and that $ \int_{\theta}^{\bar\theta}  s(t)  \d F(t) \leq \int_{\theta}^{\bar\theta} F(t) \d F(t)$ for all $\theta \in [0,\bar \theta]$, with equality at $\theta = 0$, $\theta^*$, and $\bar \theta$.
    \begin{figure}[htb]
        \centering
        \includegraphics[width=0.4\textwidth]{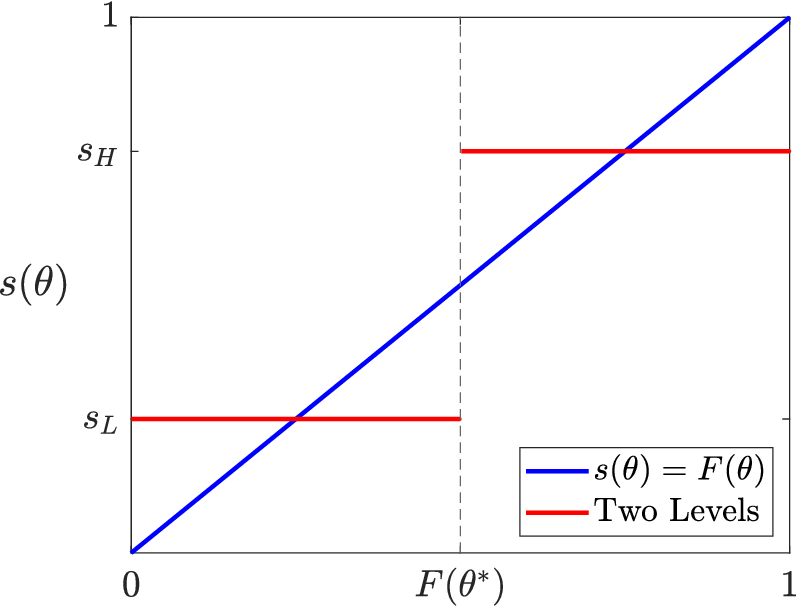}
        \hspace{25pt} \caption{Two Priority Levels}
    \end{figure}
\end{example}

It is worth noting that a binary status profile $s(\theta)=\mathbf{1}_{\theta \geq \theta^*}$, which resembles a posted-price mechanism when selling ordinary goods, is infeasible because it violates the mean-preserving spread condition at the top.
Intuitively, this is because a good meant to confer the highest status $s=1$ ceases to do so once sold to multiple buyers.

Finally, when the seller is constrained to offer at most \(n\geq1\) levels of positional goods, so that \(|X|\le n\), the feasibility condition is more restrictive. Say that a status profile \(s(\theta)\) is \emph{\(n\)-feasible} if there exists an allocation \(\chi\colon \Theta\times\Omega\to X\cup\{0\}\), with \(|X|\le n\), that induces it. For deterministic \(\chi\), any \(n\)-feasible incentive-compatible (and hence increasing) $s(\theta)$ is a step function on \([\theta_0,\bar\theta]\), generated by partitioning participating types into at most \(n\) pooling intervals.

\section{Revenue Maximization}
\label{exclusion}

In this section, I characterize the optimal mechanism that maximizes the seller's revenue and study its welfare implications.
Two extreme allocations are particularly of interest: full separation of participants (i.e., $s(\theta) = F(\theta) \cdot \mathbf{1}_{\theta \geq \theta_0}$ at different prices) and pooling among participants (i.e., $s(\theta) =  \frac{1+F(\theta_0)}{2} \cdot \mathbf{1}_{\theta \geq \theta_0}$ at the same price) for some cutoff $\theta_0 \in [0, \bar \theta)$. %
The former can be induced by offering a continuum of positional goods, and the latter can be induced by offering a single good (tier).

\subsection{Seller's Problem}
\label{sec:revenue}
\allowdisplaybreaks
The revenue maximization problem is given by
\begin{align}
    \max_{s(\theta), p(\theta), \theta_0} \int_{\theta_0}^{\bar \theta} p(\theta) \d F(\theta) 
\end{align}
subject to the following constraints for all $\theta\in [\theta_0,\bar\theta]$:
\begin{align}\label{constraintsw}
    & U(\theta) \equiv \theta s(\theta) - p(\theta) + v(\theta) \geq  0  &&\mbox{(IR)}\\
    & U(\theta) =  U(\theta_0) + \int_{\theta_0}^\theta (s(t) + v'(t)) \d {t}  &&\mbox{(IC)}\\
    & s(\theta) \mbox{ is increasing} \\
    & s \in \MPSw{F} &&\mbox{(MPS)}
    \label{MPSw} 
\end{align}

Define $J(\theta) = \theta - \frac{1-F(\theta)}{f(\theta)}$ and $J_v(\theta) = v(\theta) - \frac{1-F(\theta)}{f(\theta)}v'(\theta)$.
Say $F$ satisfies Myerson's regularity if $J(\theta)$ is increasing.
For expositional simplicity, if $J(\theta)$ is increasing, assume it is strictly single-crossing, that is,
there exists a unique $\theta_0^J\in[0,\bar\theta)$ such that $J(\theta)<0$ for all $\theta<\theta_0^J$ and $J(\theta)>0$ for all $\theta>\theta_0^J$.

By standard arguments, expected revenue is given by
\begin{align*}
    R 
    =  \int_{\theta_0}^{\bar \theta} \left( J(\theta) s(\theta) + J_v(\theta)  - U(\theta_0) \right) \d F(\theta)
\end{align*}
It is optimal to set $U(\theta_0)=0$ by setting $p(\theta_0) = \theta_0 s(\theta_0) + v(\theta_0) \geq 0$.  %
Thus, the revenue maximization problem is equivalent to
\begin{equation}
    \max_{s \in \MPSw{F}, \theta_0} \int_{\theta_0}^{\bar \theta}   \left( J(\theta) s(\theta) + J_v(\theta) \right)   \d F(\theta)  
\end{equation}
For any $\theta_0\in[0,\bar\theta]$, the objective is a
continuous linear functional of $s$ over
$\MPS(F\cdot \mathbf1_{[\theta_0,\bar\theta]})$. By KMS, this set is convex and
compact, so Bauer's maximum principle implies that the maximum is
attained at an extreme point, which is induced by a deterministic allocation.%

For any two increasing status profiles $s,\hat{s}\in\MPS(F\cdot \mathbf1_{[\theta_0,\bar\theta]})$ such that $s(\theta)=\hat s (\theta) = 0$ for all $\theta<\theta_0$, say $s$ is \emph{finer} than $\hat{s}$ ({in terms of majorization}), denoted by $s\succ \hat s$, if 
$\int_{\theta}^{\bar\theta} (s(t) - \hat{s}(t)) \d F(t) \geq 0$ for all $\theta\in[\theta_0,\bar\theta]$.
\footnote{
    If $s$ induces a finer partition of types than $\hat{s}$, then $s$ is \emph{finer} than $\hat{s}$ ({in terms of majorization}); the converse is not true.
}
If $J(\theta)$ is increasing, the Fan--Lorentz Theorem  implies that the seller can increase revenue by offering a mechanism with a finer status profile.
Hence, we have the following proposition.

\begin{proposition}[Revenue Maximization]\label{prop:revmax}
    If $J(\theta)$ is increasing,
    the revenue-maximizing mechanism excludes $\theta< \theta_0^*$ and fully separates $\theta\geq \theta_0^*$:
    \[
        s^*(\theta) =  F(\theta) \cdot \mathbf1_{\theta\geq\theta_0^*},\quad p^*(\theta) =  \left( \theta_0^*F(\theta_0^*)  + \int_{\theta_0^*}^\theta {t} \d F(t) + v(\theta_0^*) \right)  \cdot \mathbf1_{\theta\geq\theta_0^*},
    \]
    where the optimal cutoff type is $\theta^*_0 \in \argmax_{\theta_0}  \int_{\theta_0}^{\bar \theta} J(\theta) F(\theta) \d F(\theta) + v(\theta_0)(1-F(\theta_0))$.

    If $J(\theta)$ is not monotonic, define $\tilde{J}(\theta;\theta_0) = \int_{\theta_0}^\theta J(t) \d F(t)$ and let $K(\tau) = \conv \tilde J(F^{-1}(\tau))$ denote its convex hull on $\tau \in [F(\theta_0),1]$ in quantile space.
    The revenue-maximizing mechanism excludes $\theta<\theta^*_0$, pools types on each ironing interval where $K\circ F < \tilde J$, and separates types if $K\circ F = \tilde J$, where $\theta^*_0 \in \argmax_{\theta_0}  \int_{\theta_0}^{\bar \theta} F(\theta) \d K(F(\theta)) + v(\theta_0)(1-F(\theta_0))$.
\end{proposition}

If the type distribution satisfies Myerson's regularity, the revenue-maximizing mechanism can be implemented by an all-pay auction with a reserve price, where buyers who pay more (and reach the reserve price) receive higher levels of positional goods.
In applications such as education or status within organizations, this may be interpreted as an all-pay contest with a minimum effort requirement.

In general, the optimal mechanism is derived by applying the ironing technique to obtain the convex hull of $\tilde{J}(\theta)$ in quantile space \citep{Myerson1981,Toikka2011}.
If the virtual valuation $J(\theta)$ is decreasing for some types, the revenue-maximizing mechanism pools them into the same positional good level along with some adjacent types.

The results have several implications.
First, under Myerson's regularity condition, it is optimal to offer as many levels of positional goods as possible.
As the number of levels increases, the seller can induce a finer status profile, which not only generates efficiency gains but also allows the seller to extract higher revenue.
This offers an explanation for why airlines introduce many boarding groups, and luxury companies proliferate products from factory stores to high-end exclusives.

Compared to selling ordinary goods with intrinsic value $v(\theta)$ only, attaching status value benefits the seller, which may explain why luxury companies invest heavily in marketing to cultivate status value.

In addition, the seller also benefits from excluding low types.
In particular, when the intrinsic value $v(\theta)=0$, all types with negative virtual valuation are excluded.
The following corollary characterizes properties of the optimal exclusion.

\begin{corollary}[Exclusion] \label{cor:exclusion}
    If $J(\theta)$ is increasing, the optimal $\theta^*_0$ has the following properties.
    \begin{enumerate}[label=(\roman*)]
        \item Intrinsic value reduces exclusion: $\theta^*_0 \leq \theta_0^J$.
        The equality holds if $v(\theta)$ is linear (i.e., $v(\theta) = \alpha\theta$ for some $\alpha \geq 0$).
        
        \item If $v(0) \geq \frac{v'(0)+1}{f(0)}$, then zero exclusion is optimal---i.e., $\theta_0^* = 0$.
        \footnote{
          If \(J(\theta)F(\theta)+J_v(\theta)\) is increasing, then \(v(0)\ge v'(0)/f(0)\) is sufficient.
        }
    \end{enumerate}
\end{corollary}
The intrinsic value $v(\theta)$ reduces the optimal exclusion compared to the zero intrinsic value benchmark,
 where agents with negative virtual valuation $J(\theta)<0$ are excluded.
In addition, if the baseline intrinsic value $v(0)$ is large and there are many low types ($f(0)$ is large), it is optimal for the seller to serve every type.

\subsection{One-tier Approximation}
\label{sec:approx}

In practice, the number of levels may be subject to implementation costs or regulatory constraints.
When there are costs associated with offering multiple levels, a finite number of levels may be optimal.
\footnote{Let $C_{n}\geq0$ denote the cost of offering $n$ levels. If $C_{n}$ is strictly increasing and convex in $n$, the optimal number of levels is finite, since the maximal revenue is bounded above (by $R^*$).}
When the number of levels is constrained, lifting the constraints strictly increases revenue, as shown in Corollary~\ref{cor:sep}.

\begin{corollary}\label{cor:sep}
    The revenue-maximizing mechanism never pools $\bar\theta$ with a positive measure of other types. Moreover, allowing more levels of positional goods \emph{strictly} increases revenue.
\end{corollary}

The result implies that selling a single tier of positional goods at a posted price is never optimal for the seller, in contrast to ordinary goods, as she can always increase revenue by separating the highest type from the rest.
Nevertheless, selling a single tier at the optimal posted price achieves at least half the maximum revenue attainable from selling positional goods.

\begin{proposition}[Approximation] \label{prop:approx}
   The seller can obtain at least half the maximum revenue by selling a single tier.
\end{proposition}

\begin{proofsketch}
    For ease of exposition, assume $v(\theta)=0$, so that the buyer's payoff is $u(p,s) = \theta s - p$.
    Consider an auxiliary problem of selling an indivisible ordinary good, in which $u(p,q) = \theta q - p$ when the buyer receives the good with probability $q\in[0,1]$.
    In the auxiliary problem, the set of incentive-compatible allocations is $\mathcal M = \{q\colon \Theta\to [0,1] \mid  \text{$q$ is increasing}\}$, and
    a posted price $p^* \in \argmax_p p (1-F(p))$ (and $q^*(\theta) = \mathbf{1}_{\theta\geq p^*}$) is optimal. Let $R_{aux}^* = p^*(1-F(p^*))$ denote the maximum revenue.

    Let $R^*$ denote the maximum revenue from selling positional goods.
    When selling positional goods, because the feasible set $\MPS_0(F)\subseteq\mathcal M$, we have $R^* \leq R_{aux}^*$.
    \footnote{
        Moreover, inequality is strict because any optimal mechanism in the auxiliary problem must assign $q=1$ a positive mass of highest types, which is infeasible when selling positional goods.
    }
    Let $R_1$ denote the maximum revenue from selling a single tier of positional goods.
    Selling a single tier at price $\frac{1+F(p^*)}{2}p^*$ excludes buyers with types $\theta < p^*$, as in the auxiliary problem, and guarantees each participant at least $s = \frac{1+F(p^*)}{2} \geq \frac{1}{2}$. The resulting revenue is thus at least $\frac{1}{2} R_{aux}^*$. Hence, we have $R_1 \geq \frac{1}{2} R_{aux}^* > \frac{1}{2} R^*$.
    
    The complete proof, which allows $v(\theta) \geq 0$, is given in Appendix~\ref{sec:proofs}.
\end{proofsketch}

\begin{remark}
    The lower bound does not require IFR or Myerson's regularity. 
\end{remark}

The auxiliary problem of selling indivisible ordinary goods highlights the distinction between selling ordinary goods and positional goods.
In the auxiliary problem, selling an item to one consumer has no externalities on others, so the seller can allocate $q=1$ to multiple buyers. By contrast, when selling positional goods, because consumers care about their relative positions, negative externalities arise---the seller cannot allocate the highest status $s=1$ to multiple buyers.
Thus, the seller of positional goods is subject to an additional feasibility constraint, resulting in a lower revenue.

\paragraph{Graphical Illustration.}
Figure~\ref{fig:approx} plots the \emph{revenue curve} $R(\tau)=(1-\tau)F^{-1}(\tau)$ in quantile space $\tau=F(\theta)$.
For exposition, I focus on the regular case where $J(\theta)$ is increasing, so that the revenue curve is concave, and assume the intrinsic value $v(\theta)=0$.

The \emph{blue} area in Figure~\ref{fig:approx} represents $R^*$, the revenue generated by the revenue-maximizing allocation $s^*(\theta) = F(\theta)\cdot \mathbf{1}_{\theta\geq\theta_0^*}$.
To see this, note that $R'(\tau)=-J(F^{-1}(\tau))$, so substituting $\tau=F(\theta)$ and integrating by parts,
the maximum revenue is
\[\int_{\theta_0^*}^{\bar\theta} J(\theta) F(\theta) \d F(\theta) = - \int_{\tau^*_0}^{1} \tau \d R(\tau) = \tau^*_0 R(\tau^*_0) + \int_{\tau^*_0}^{1} R(\tau)\d\tau,\]
which equals the area of the blue region.

\begin{figure}[hbt]
    \centering
    \begin{subfigure}[h]{0.45\textwidth}
        \centering
        \includegraphics[width= \textwidth]{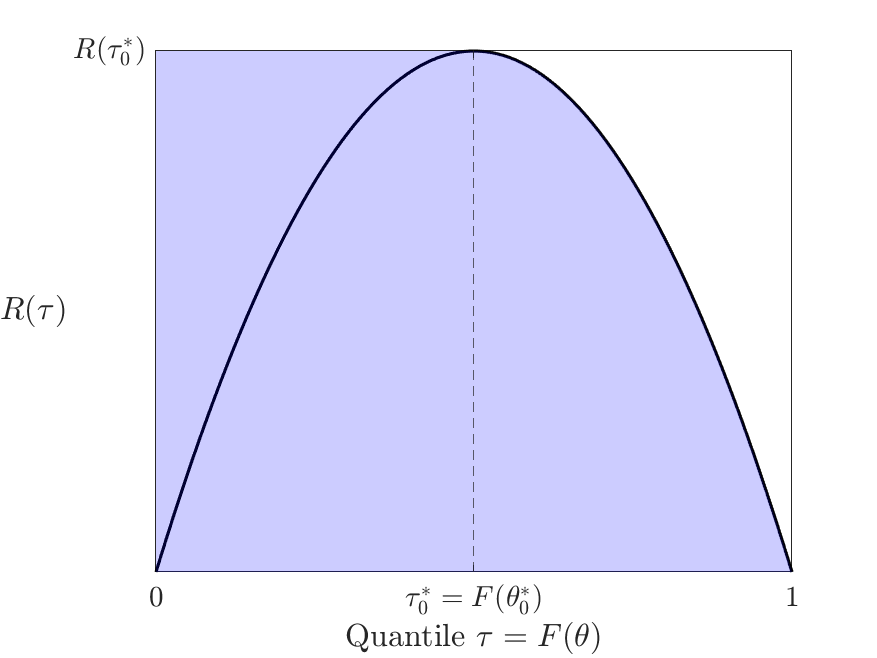}
    \end{subfigure}
    \begin{subfigure}[h]{0.45\textwidth}
        \centering
        \includegraphics[width= \textwidth]{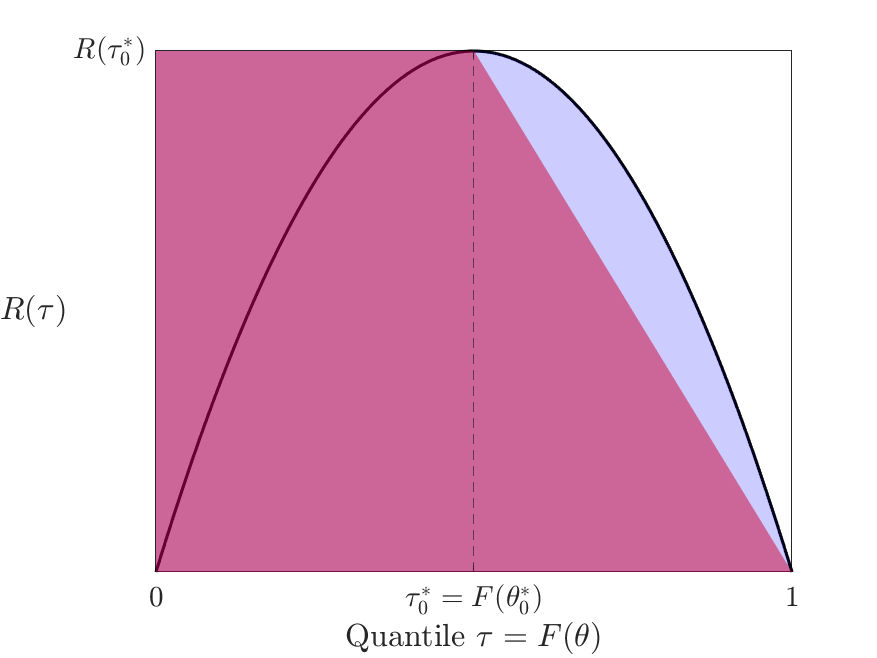}
    \end{subfigure}
    \caption{Revenue curve $R(\tau)$ for the uniform distribution}
    \label{fig:approx}
\end{figure}

The \emph{red} trapezoid in the right panel represents the revenue from selling a single good at price $p_0^* = \frac{1+F(\theta_0^*)}{2} \theta_0^*$, which induces $\bar s(\theta) = \frac{1+F(\theta_0^*)}{2}\cdot \mathbf{1} [\theta\geq\theta_0^*]$.
This is because the revenue under $\bar s$ is 
\[\int_{\theta_0^*}^{\bar\theta} J(\theta) \frac{1+F(\theta_0^*)}{2} \d F(\theta) 
= \frac{1+\tau^*_0}{2}R(\tau^*_0),\]
which equals the area of the red trapezoid in the right panel.
Since the seller can set the optimal price and $p_0^*$ is a feasible option, the revenue from selling a single tier is at least the red area.

Finally, the entire box, which  has area $R(\tau^*_0)$, represents the maximum revenue from selling ordinary goods in the auxiliary problem.
It is straightforward to verify that the blue area is strictly smaller than the entire box, and that the red area is at least half of the entire box.
Hence, the red area is strictly larger than half of the blue area.

The approximation is close for many common distributions, as shown in the following examples with $v(\theta)=0$.

\begin{example*}[Exponential distribution]
    If $F(\theta)=1-\exp(-\lambda\theta)$ where $\lambda>0$, selling a single good can obtain $91.9\%$ of the maximum revenue.
\end{example*}
\begin{example*}[Uniform distribution]
    If $F(\theta)=\theta$ on $[0,1]$, selling a single good can obtain $92.4\%$ of the maximum revenue. %
\end{example*}
\begin{example*}[Power distribution]
    Assume $F(\theta)=\theta^\beta$ on $[0,1]$.
    If $\beta\geq 1$, $J(\theta)$ is strictly increasing, so the optimal mechanism fully separates participants.

    If $\beta< 1$, because $J(\theta)$ is strictly single-crossing (from below) and increasing whenever $J(\theta) > 0$, the optimal mechanism still fully separates participants.
    In both cases, the approximation ratio is
    \begin{equation*}
        \frac{R_1}{R^*} =\frac{(1+\beta)\left(\frac{1}{1+2\beta}\right)^{\frac{1}{2\beta}}}{\beta+(1+\beta)^{-1-\frac{1}{\beta}}}   > 91.4\%.
    \end{equation*}
    The ratio approaches 1 as  $\beta\to \infty$ or $\beta\to0$ (i.e., $F$ is sufficiently convex or concave).
\end{example*}

\begin{example*}[Pareto distribution]
    If $F(\theta) = 1 - \theta^{-\beta}$ on $[1,\infty)$ where $\beta>1$ (so that the mean is finite), the approximation ratio is
    $$\frac{R_1}{R^*} = \left( 1 -\frac{1}{2\beta - 1} \right)^{\frac{\beta - 1}{\beta}}>79.29\%.$$
\end{example*}

\subsection{Welfare Comparative Statics}
\label{sec:welfare}

I now analyze how consumer welfare responds to regulations or other exogenous constraints on the number of levels and service coverage, given that the seller maximizes revenue subject to these constraints.

The directions of these welfare effects are \emph{a priori} ambiguous.
When more levels of positional goods are available, the seller may offer a finer status profile, which generates efficiency gains by assigning higher status to consumers who value it more.
However, as the seller charges higher prices for higher levels, she may also extract more surplus than she creates, potentially reducing consumer welfare.

Expanding coverage (by reducing the price of the lowest tier) also creates a tradeoff. 
When the seller offers a single tier, lowering its price has two effects.
On the extensive margin, it always benefits new participants who previously could not afford the good.
On the intensive margin, although existing consumers face lower prices, expanding coverage also reduces their status because they need to be pooled with new consumers.
For example, when more buyers gain access to a luxury good, the status 
conferred by the entry-level good may decline. When a service provider 
serves more customers, those at the basic paid tier may face longer wait times.

Moreover, these two effects are interdependent. When the seller offers more tiers, she may attract additional consumers, thereby expanding coverage. Conversely, when she expands coverage by lowering the price of the lowest tier, both consumers' choices and the seller's pricing among the higher tiers may vary, which in turn changes the status profile. 
To isolate the welfare effects of each policy dimension, I conduct comparative statics by varying one dimension at a time, under assumptions that fix the other dimension.

Using the envelope theorem, consumer surplus can be written as
\begin{align} \label{eqn:W}
   W =  \int_{\theta_0}^{\bar \theta} U(\theta) \d F(\theta) 
   =  \int_{\theta_0}^{\bar \theta} \left( \frac{1-F(\theta)}{f(\theta)}  (s(\theta)+ v'(\theta))  + U(\theta_0) \right)  \d F(\theta).
\end{align}
In the comparative statics, when the seller offers an optimal mechanism in response to policy changes, we have $U(\theta_0)=0$; alternatively,
when the lowest tier is free, we have $\theta_0=0$ and $U(0)=v(0)$.
Thus, the results are determined by the failure rate $\frac{f(\theta)}{1-F(\theta)}$.
Say $F$ satisfies IFR (DFR) if it has an increasing (decreasing) failure rate.

\begin{proposition}\label{prop:consumer}
    Assume $J(\theta)$ is increasing.
    The following comparative statics hold.
     \begin{enumerate}[label=(\roman*)]
        \item Assume $v(0)\geq\frac{v'(0)+1}{f(0)}$ or that the lowest tier is free.
        \footnote{If $v(0)\geq\frac{v'(0)+1}{f(0)}$, the optimal exclusion level is $\theta_0^*=0$.
        }
        Then, if $F$ satisfies IFR (DFR), consumer surplus decreases (increases) as the seller offers a finer status profile, and restricting the seller to a single tier increases (decreases) consumer surplus.

        \item 
        When the number of tiers is unrestricted, expanding coverage (by lowering the price of the lowest tier) increases consumer surplus.  

        When the seller is restricted to a single tier, expanding coverage increases consumer surplus if \(F\) satisfies IFR (and can decrease surplus otherwise).
    \end{enumerate}
\end{proposition}

The assumption in part (i) implies that the seller serves all consumers, either because it is optimal or it is required.
The result highlights the tension between revenue and consumer surplus under IFR\@: holding participation fixed, a finer status profile increases revenue but harms consumers.
Although a finer status profile creates efficiency gains, the seller extracts more surplus than she generates, leaving consumers worse off.
Therefore, when exclusion is fixed, restricting the seller to a single tier increases consumer surplus under IFR.
Nevertheless, consumer surplus need not decrease monotonically when more tiers are allowed, as raising the number of tiers does not necessarily lead the seller to a finer status profile.

The comparative statics are reversed if the distribution has a \emph{decreasing} failure rate.
Intuitively, a finer status profile shifts status from lower to higher types.
While higher types value status more, this shift does not automatically benefit them because the seller also extracts more surplus from them through higher prices.
DFR matters in two ways.
First, it makes types in the upper tail sufficiently dispersed, so improving the status of very high types creates information rents for them rather than being fully dissipated through higher prices.
Second, DFR implies that the weight $h(\theta)=\frac{1-F(\theta)}{f(\theta)}$ is increasing, so aggregate consumer surplus places greater weight on the gains of higher types than the losses of lower types.
Hence, a finer status profile raises aggregate consumer welfare. 

If consumer participation is \emph{not} fixed, the comparative statics become ambiguous because when the seller offers more levels, consumer participation may also increase, which affects consumer surplus.
For example, when $F(\theta)=\theta$ and $v(\theta)=0$, the seller excludes $\theta<1/\sqrt{3}\approx 0.58$ when constrained to a single tier, and excludes $\theta<0.5$ (while fully separating participants) when unconstrained.
Consequently, consumer surplus is higher in the latter case, as the gain from increased participation outweighs the loss from a finer status profile.
In general, the aggregate effect depends on the distribution of consumer types, and there is no simple condition for its sign.
This result highlights the difficulty of regulating the number of positional-good levels: unless the regulator also controls the exclusion level (for example, by mandating a free basic service), it is unclear whether such regulation benefits consumers.

When the number of levels is unconstrained, the seller fully separates participants if $F$ satisfies Myerson   regularity.
Thus, expanding coverage benefits new participants without lowering the status of existing ones.
Moreover, it benefits existing participants by requiring them to pay less for the same status.
Hence, consumers benefit unambiguously.

When the seller is constrained to offer a single tier, the welfare effect becomes more complicated.
If $F$ satisfies IFR, expanding coverage by lowering its price increases consumer surplus,
but for general distributions, expanding coverage may \emph{decrease} consumer surplus.
The reason is that while expanding coverage benefits the additional participants, it also reduces the status of current participants by pooling them with the newcomers.
When the type distribution has a thin tail, implied by IFR, the gain on the extensive margin outweighs the loss to current participants on the intensive margin.
In general, however, the loss to current participants may dominate, leading to a decrease in consumer surplus.

Finally, when the seller is restricted to $n>1$ tiers, the welfare effect of expanding coverage is ambiguous even under IFR. The reason is that when the seller expands coverage by lowering the price of the lowest tier, the prices and cutoffs of the higher tiers can also vary, thereby changing the status of consumers at those tiers. The boundary cases  $n=\infty$ and $n=1$ shut down this other dimension by fixing the status of higher-tier consumers (or making it vacuous), and are therefore tractable.

\section{Welfare Maximization}
In this section, I study the optimal mechanism that maximizes welfare.
Without positional externalities, the welfare-maximizing mechanism is to allocate the good to all consumers.
For positional goods, however, the value of the good diminishes as more consumers receive it, so the welfare-maximizing mechanism may involve offering multiple levels of goods and excluding some consumers.

First, I consider consumer surplus maximization when negative transfers (i.e., subsidies) are allowed and subject to budget balance.
Then, I study consumer surplus maximization under the constraint that transfers are nonnegative, which is relevant for applications where subsidies are unavailable or where the price represents effort.
Finally, I characterize the optimal mechanism that maximizes social welfare---a weighted sum of revenue and consumer surplus.

\subsection{Consumer Surplus Maximization}

I first consider consumer surplus maximization when negative transfers are allowed and subject to budget balance (i.e., $\int_{\theta_0}^{\bar \theta} p(\theta) \d F(\theta)=0$).
Then, substituting the budget balance constraint, consumer surplus becomes
\begin{align} \label{eqn:W2}
   W  =  \int_{\theta_0}^{\bar \theta} U(\theta) \d F(\theta) 
      = \int_{\theta_0}^{\bar \theta} (\theta s(\theta)  + v(\theta))   \d F(\theta).
\end{align}
It is straightforward that consumer surplus is maximized by full separation without exclusion and with cross-subsidization.
Intuitively, this is because assortative matching is efficient, and a fixed subsidy can redistribute efficiency gains among consumers while maintaining incentive compatibility.%
\begin{observation}\label{obs:1}
    When negative transfers are allowed and subject to budget balance, consumer surplus is maximized by $s(\theta)=F(\theta)$ and $p(\theta) = \int_0^\theta {t} \d F(t) - \E[(1-F(\theta)) \theta]$. 
\end{observation}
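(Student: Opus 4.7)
The plan is to exploit the fact that budget balance removes transfers from the objective, reducing the problem to a pure allocation problem that is solved by full separation via a one-line majorization argument, and then to recover the price from the IC envelope with the free constant $U(0)$ pinned down by budget balance.

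\textbf{Step 1 (Simplify the objective).} I first substitute $\E[p(\theta)]=0$ into $W=\int_{\theta_0}^{\bar\theta}(\theta s(\theta)-p(\theta)+v(\theta))\d F(\theta)$ to get $W=\int_{\theta_0}^{\bar\theta}(\theta s(\theta)+v(\theta))\d F(\theta)$. Since $s,v\ge 0$, the integrand is nonnegative, so lowering the cutoff $\theta_0$ weakly raises $W$ provided we remain in the feasible set. Taking $\theta_0=0$ with $s=F$ is feasible (it is in $\MPS(F)$, increasing, and satisfies IR since $U$ will be nondecreasing with $U(0)\ge 0$ as verified below), so exclusion is without loss.

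\textbf{Step 2 (Full separation via majorization).} With $\theta_0=0$, the problem reduces to $\max_{s}\int_0^{\bar\theta}\theta s(\theta)\d F(\theta)$ over increasing $s\in\MPSw(F)$. I would show $s=F$ is optimal via an Abel/integration-by-parts argument: define $D(\theta)=\int_\theta^{\bar\theta}(F(x)-s(x))\d F(x)$, which is nonnegative by Theorem~\ref{thm:feasibility} and satisfies $D(\bar\theta)=0$. Then
\begin{equation*}
\int_0^{\bar\theta}\theta(F(\theta)-s(\theta))\d F(\theta)=-\int_0^{\bar\theta}\theta\,\d D(\theta)=[-\theta D(\theta)]_0^{\bar\theta}+\int_0^{\bar\theta}D(\theta)\d\theta=\int_0^{\bar\theta}D(\theta)\d\theta\ge 0,
\end{equation*}
with equality only when $D\equiv 0$, i.e.\ $s=F$. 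Intuitively, since $\theta$ is increasing and the (MPS) condition puts weakly more cumulative mass at the top under $F$ than under any feasible $s$, pairing $\theta$ with $F$ dominates.

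\textbf{Step 3 (Recover $p$ from IC and budget balance).} With $s=F$ and $\theta_0=0$, the envelope formula gives $U(\theta)=U(0)+\int_0^\theta F(x)\d x+v(\theta)-v(0)$, hence
\begin{equation*}
p(\theta)=\theta F(\theta)+v(\theta)-U(\theta)=\theta F(\theta)-\int_0^\theta F(x)\d x+v(0)-U(0)=\int_0^\theta x\,\d F(x)+v(0)-U(0),
\end{equation*}
by integration by parts. Taking expectations and applying Fubini gives $\E[\int_0^\theta x\,\d F(x)]=\E[\theta(1-F(\theta))]$, so $\E[p]=0$ forces $U(0)=v(0)+\E[\theta(1-F(\theta))]\ge 0$ (confirming IR). Substituting back yields the claimed $p(\theta)=\int_0^\theta x\,\d F(x)-\E[(1-F(\theta))\theta]$. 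The only nontrivial step is the majorization argument of Step~2, and even that is a short calculation once $D$ is identified; the rest is bookkeeping.
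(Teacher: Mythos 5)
Your proof is correct and follows essentially the same route the paper takes: the paper leaves this Observation as ``straightforward,'' appealing to the efficiency of assortative matching (i.e., Fan--Lorentz with the increasing weight $\theta$) plus the envelope formula with $U(0)$ pinned down by budget balance, and your Step~2 is just a self-contained integration-by-parts proof of that majorization inequality via the deficit function $D$. Your Steps~1 and~3 correctly fill in the bookkeeping (no exclusion, recovery of $p(\theta)$, and the IR check) that the paper omits.
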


Under this mechanism, higher types pay more to attain higher status than lower types.
The payments will be used to cross-subsidize lower types, who accept lower status but are compensated by a subsidy. This mechanism can be implemented through an all-pay auction (without a reserve price) with a lump-sum subsidy of $\E[(1-F(\theta)) \theta]$, where buyers who pay more receive higher positions, and every buyer receives the same subsidy financed by the auction payments.
For sufficiently low types, the transfer $p(\theta)$ is negative, resulting in a net subsidy.

\subsection{Consumer Surplus Maximization without Subsidies}
\label{sec:CSmax_no_subsidy}

The result above relies on negative transfers, which may be infeasible in many applications.
In particular, for luxury or status goods, it is implausible to use subsidies.
Moreover, in applications to organizational hierarchies and education, the ``price'' represents effort and must be nonnegative.
Imposing a nonnegative price constraint $p(\theta)\geq 0$ significantly changes the optimal mechanism.

To maximize consumer surplus under the nonnegative price constraint, we need to pin down $U(0)$ in equation~\eqref{eqn:W}: 
\begin{align}  
   W =  \int_{\theta_0}^{\bar \theta} \left( \frac{1-F(\theta)}{f(\theta)}  (s(\theta)+ v'(\theta))  + U(\theta_0) \right)  \d F(\theta).
\end{align}
If the cutoff type $\theta_0>0$, then $U(\theta_0)=0$.
Otherwise, if $\theta_0=0$, because $U(0)=v(0)-p(0)$, consumer surplus  under the {nonnegative} price constraint ($p(\theta) \geq 0$) is maximized by $p(0)=0$ (so that $U(0) = v(0)$ and $p(\theta) = \int_{0}^\theta {t} \d s(t)\geq 0$). 
Since $U(\theta_0)$ is pinned down, the optimal mechanism is again determined by the failure rate.

\begin{proposition}\label{prop:welfaremax}
    Under the nonnegative price constraint, consumer surplus is maximized by
    \begin{enumerate}[label=(\roman*)]
    \item total pooling (i.e., $s(\theta)=1/2$) and $p(\theta)=0$ if $F$ satisfies IFR; %
    \footnote{
    The necessary and sufficient condition is 
    \[
    \int_0^\theta \left( \frac{1-F(t)}{f(t)} -\E[\theta] \right)\d F(t) \geq0 \quad \mbox{ for all $\theta\in[0,\bar \theta]$.}
    \]
    }
    \item full separation (i.e., $s(\theta)=F(\theta)$) and $p(\theta)=\int_0^\theta {t} \d F(t)$ if and only if $F$ satisfies DFR;%
    \item bottom pooling and top separation if the failure rate is single-peaked;
    \item bottom separation and top pooling if the failure rate is single-dipped.
    \end{enumerate}
    In (iii) and (iv), either pooling or separating region may be empty, and 
        $p(\theta)=\int_0^\theta t \d s(t)$.
\end{proposition}

\begin{remark}
    \label{complications}
    If the lowest type is changed to $\lbar\theta>0$, then pooling at the bottom can increase consumer surplus even when the failure rate is decreasing, as pooling increases $U(\lbar\theta) = \lbar\theta  s(\lbar\theta) + v(\lbar\theta)>0$ by raising the lowest status $s(\lbar\theta)$.
\end{remark}

Under the nonnegative price constraint, the consumer-optimal mechanism does not exclude consumers.
The result is in contrast to Proposition~\ref{prop:consumer}, where exclusion may increase consumer surplus under DFR\@, because Proposition~\ref{prop:welfaremax} focuses on the consumer-optimal mechanism.
Under IFR, because it is optimal to pool participants, exclusion raises the status of current participants, who would otherwise be pooled with more consumers.
Nevertheless, the loss to excluded consumers on the extensive margin exceeds the gain to participants on the intensive margin, so the net effect is detrimental.
Under DFR, exclusion has no benefit to current participants because the optimal mechanism does not pool them with additional participants.
\footnote{Although exclusion also changes prices, the price adjustments are
pinned down by incentive compatibility through revenue equivalence, so
the welfare comparison need only consider status and the exclusion level.}

Under IFR, the consumer-optimal regulation of positional goods without using subsidies is to serve all consumers with the same level of positional goods for free.
The results reverse under DFR, for the same reason as Proposition~\ref{prop:consumer}.  
For nonmonotone failure rates, the consumer-optimal mechanism can have pooling and separating regions. When the failure rate is single-peaked, such as the log-normal distribution, the optimal mechanism pools low types and separates high types.
When the failure rate is single-dipped, such as the power distribution $F(\theta)=\theta^\beta$ with $\beta\in(0,1)$, the optimal mechanism separates low types and pools high types.

The results also have implications for organizational hierarchies and education.
If the ability distribution has a thin tail, eliminating status competition entirely is optimal for agents' welfare.
\footnote{
Because the agent's payoff is scaled by $\theta$, the relevant distributional conditions are imposed on 
$\tilde f(\theta)\propto f(\theta)/\theta$ instead.
Thus, IFR of $\tilde F$ corresponds to a relatively thin upper tail of $F$ after weighting types by $1/\theta$, while DFR corresponds to a relatively thick upper tail under the same weighting.
}
In the education application, this implies complete educational disarmament---for example, randomly assign students to schools.
If instead there are a few superstars in the upper tail (e.g., under generalized Pareto distributions), meritocracy is optimal for agents' welfare.
\footnote{
    A generalized Pareto distribution with shape parameter $\xi\in(0,1)$ has a distribution function $F(\theta)= 1 - (1+\xi \theta)^{-1/\xi}$ on the support $[0, \infty)$,
    which has finite mean and satisfies DFR\@.
}
Intuitively, while meritocracy creates a rat race that forces all agents to exert effort, superstars can benefit from it because they have lower marginal costs of effort and are sufficiently spread out in the upper tail.
Although low-ability agents still suffer, the heavy tail also makes the gain to higher-ability agents dominate in the aggregate.
Hence, more (less) intense competition is optimal for agents' welfare when the ability distribution has a heavy (thin) tail.

The dependence on the failure rate is reminiscent of optimal bidding rings without transfers in \cite{McAfeeMcMillan1992}. Indeed, both results stem from assumptions that pin down $U(\theta_0)$.
An analog of their results also applies to this setting if agents can collude and transfers are impossible. In the application to organizational hierarchies or education, optimal collusion is collective disarmament under IFR, where all agents invest the minimal effort and share the same status, and status competition under DFR\@.

\subsection{Social Welfare Maximization}
In many settings, a social planner maximizes a weighted sum of the seller's revenue and consumer surplus.
Let $W_S = \E [\lambda p(\theta) + U(\theta)]$ (where $\lambda\geq0$) denote expected social welfare.
In applications to monopolist provision of positional goods, $p$ is monetary transfer, and $\lambda$ is the welfare weight on the seller's revenue relative to consumer surplus. 
When $\lambda\geq1$, the social planner puts higher weight on the seller's revenue than consumer surplus.

In applications to education or status within organizations, $p\geq0$ is nonnegative effort, and $\lambda$ captures the degree to which effort is productive from the social perspective.
When $\lambda=0$, the agents' effort is purely signaling.
As $\lambda$ increases, effort becomes more productive, and the social planner has greater incentives to induce agents to exert effort.

If negative transfers are allowed and subject to budget balance ($\E[p(\theta)]=0$), then $W_S = \E[U(\theta)]$, so full separation without exclusion, as shown in Observation~\ref{obs:1}, also maximizes social welfare.
Therefore, the more interesting case to consider is when budget balance is not required, or when negative transfers are not allowed. In the former case, I assume the welfare weight $\lambda\geq1$, since otherwise a transfer from the seller to consumers would always increase social welfare without bound.

Define the weighted virtual value as $J_\lambda(\theta) = 
\lambda J(\theta) + h(\theta)= \lambda \theta - (\lambda-1)\frac{1-F(\theta)}{f(\theta)}$.
The expected social welfare is
\begin{equation*}
        W_S = \int_{\theta_0}^{\bar \theta}J_\lambda(\theta) s(\theta) \d F(\theta) +
        \int_{\theta_0}^{\bar \theta} \left( \lambda v(\theta)-(\lambda-1) v'(\theta) \frac{1-F(\theta)}{f(\theta)} - (\lambda-1)U(\theta_0) \right) \d F(\theta).  
\end{equation*}
In particular, if $\lambda=1$, transfers are welfare-neutral, and we have $J_\lambda(\theta) = \theta$. %
The social welfare is maximized by full separation without exclusion because a finer status profile strictly increases social welfare by matching the higher types with higher status.

If $\lambda>1$, the social planner puts higher weight on the revenue, so it is optimal to set $U(\theta_0)=0$, and as in revenue maximization, the payment identity implies that the price is nonnegative.
If $\lambda<1$ and negative transfers are not allowed, analogous to the consumer surplus maximization without subsidies (Section~\ref{sec:CSmax_no_subsidy}), we can also pin down $U(0) = v(0)$ if $\theta_0=0$ and $U(\theta_0)=0$ if $\theta_0>0$.
Therefore, we have the following result.

\begin{proposition}\label{prop:social}
    Assume that $\lambda\geq1$ or that negative transfers are not allowed.
    If $J_\lambda(\theta)$ is increasing, the social welfare-maximizing mechanism fully separates the participants and excludes types below a certain threshold.  
\end{proposition}

In general, the social welfare-maximizing mechanism depends on the welfare weight $\lambda\geq0$ on the seller relative to consumers, or the extent to which effort is productive relative to signaling.

\section{Extensions}
\label{sec:extensions}

\subsection{Alternative Specifications of Status}
\subsubsection{Convex or Concave Status}
When positional externalities arise from interpersonal comparison, a concave or convex specification of status may arise. Assume $\phi\colon [0,1] \to [0,1]$ is strictly increasing, continuously differentiable, and satisfies $\phi(0)=0$, and define the status function as
\begin{equation} \label{eqn:status_phi}
   \tilde S({x}, G)= \phi(S({x}, G)) 
    =  \phi \left( \frac{G^{-}({x}) + G({x})}{2} \right).
\end{equation}
Thus, the (interim) status profile is given by
\(
\tilde s(\theta)
=
\E_\omega\left[
\phi\left(
S(\chi(\theta,\omega),G_\chi(\cdot\mid\omega))
\right)
\right].
\)
Let
$$\mathcal F(\theta_0) = \{\tilde s \colon \Theta \to [0,1] \mid  \tilde s  = \phi \circ s,\;    s\in \ext \MPS(F\cdot\mathbf{1}_{[\theta_0,\bar\theta]}),\; s(\theta)=0 \mbox{ for all }\theta\in [0,\theta_0) \}$$ 
denote the set of status profiles obtained by transforming the extreme points of the original feasibility set for a given $\theta_0$.
Let $L^1$ denote the set of $F$-integrable functions $g\colon\Theta\to\mathbb R$ endowed with the norm $
\|g\|_{L^1}
=
\int_\Theta |g(\theta)|\d F(\theta)$.
Let $\overline{\conv} \mathcal F(\theta_0)$ denote the closed convex hull of $\mathcal F(\theta_0)$ under the $L^1$-norm topology.
\footnote{
Formally, define \(
\conv \mathcal F(\theta_0)
=
\left\{
\sum_{i=1}^n \lambda_i \tilde s_i
\;\middle|\;
n\in\mathbb N,\ 
\tilde s_i\in\mathcal F(\theta_0),\ 
\lambda_i\ge 0,\ 
\sum_{i=1}^n \lambda_i=1
\right\}
\)
and define $\overline{\conv} \mathcal F(\theta_0)$ as its closure under the $L^1$-norm topology.
}
Under this specification, an increasing status profile $\tilde{s}$ is feasible if and only if $\tilde s \in \bigcup_{\theta_0\in[0,\bar\theta]} \overline{\conv} \mathcal F(\theta_0)$.

The following proposition shows that the main results are robust to nonlinear status, although stronger regularity conditions may be required.
The proof relies on two observations. First, because the objective is a continuous linear functional of $\tilde s$ (with respect to the $L^1$-norm), the supremum over $\mathcal F(\theta_0)$ equals the supremum over its closed convex hull.

Second, instead of optimizing directly over $\mathcal F(\theta_0)$, 
the proof considers the relaxed problem over a larger set $\{\tilde s  \mid  \tilde s  = \phi \circ s,\;    s\in \MPS(F\cdot\mathbf{1}_{[\theta_0,\bar\theta]})\} \supseteq \mathcal F(\theta_0)$, obtained by dropping the extreme-point restriction.
Then, the proof verifies that the solution to the relaxed problem belongs to the feasible set $\mathcal F(\theta_0)$---that is, the optimal allocation is deterministic.
Hence, the relaxation is tight.
\footnote{
In general, $\E_\omega[\phi(\cdot)]
\neq
\phi\left(\E_\omega[\cdot]\right)$.
However, because a deterministic allocation is optimal, the randomization variable $\omega$ is degenerate, and the two expressions coincide. Therefore, the same conditions and conclusions apply to the alternative specification in which status is linear (i.e., $S(x,G)= \frac12 (G^-(x)+G(x))$), but the agent's payoff is nonlinear in status: 
$u(p,s,\theta) = \theta\phi(s) - p + v(\theta)$.
}

\begin{proposition}\label{prop:convex_concave}
    Assume $\phi$ is continuously differentiable and strictly increasing, and that $v(\theta)=\alpha\theta$ for some $\alpha\geq 0$.
    \begin{enumerate} 
        \item If, in addition, $\phi$ is convex, we have the following. 
    \begin{itemize}
        \item The revenue-maximizing mechanism fully separates participants and excludes $\theta<\theta_0^J$ if $J(\theta)$ is increasing.
        \item  Holding exclusion fixed, as the status profile becomes finer, consumer surplus increases if $F$ satisfies DFR.
        \item Under the nonnegative price constraint, consumer surplus is maximized by full separation under DFR.
        Without the constraint, Observation~\ref{obs:1} continues to hold.
        \item Assume that $\lambda\geq1$ or that negative transfers are not allowed. The social  welfare-maximizing mechanism fully separates participants if $J_\lambda(\theta)$ is increasing.
   \end{itemize}
    
    \item If, in addition, $\phi$ is concave, we have the following.
    \begin{enumerate}
        \item The revenue-maximizing mechanism fully separates participants if $J(\theta) \phi'(F(\theta))$ is increasing.
        \item  Holding exclusion fixed, as the status profile becomes finer, consumer surplus decreases if $F$ satisfies IFR.
        \item  Under the nonnegative price constraint, consumer surplus is maximized by total pooling under IFR, and maximized by full separation if $\frac{1-F(\theta)}{f(\theta)}\phi'(F(\theta))$ is increasing.

        \item  Assume that $\lambda\geq1$ or that negative transfers are not allowed. The social  welfare-maximizing mechanism fully separates participants if $J_\lambda(\theta)\phi'(F(\theta))$ is increasing.
    \end{enumerate}
    \end{enumerate}
\end{proposition}

\begin{remark}
    The conditions also apply to the specification where the status remains linear (i.e., $S(x,G)= \frac12 (G^-(x)+G(x))$), but the agent's payoff is strictly increasing and convex or concave in status---i.e., $u(p,s,\theta) = \theta\phi(s) - p + v(\theta)$.
\end{remark}

When $\phi$ is convex, the results for full separation follow from Fan--Lorentz Theorem, as $\Phi(z,s) = z\cdot\phi(s)$ is convex in $s$ and supermodular in $(s,\theta)$
for $z(\theta)\in\{J(\theta), J_\lambda(\theta), \frac{1-F(\theta)}{f(\theta)}\}$ if $z(\theta)$ is increasing and nonnegative.
In particular, the assumption $v(\theta)=\alpha\theta$ implies that the optimal exclusion
thresholds for revenue maximization and social-welfare maximization are
$\theta_0^J = \inf\{\theta\in\Theta:J(\theta)>0\}$ and
$\theta_0^\lambda = \inf\{\theta\in\Theta:J_\lambda(\theta)>0\}$,
respectively. Hence, $J(\theta)$ and $J_\lambda(\theta)$ are nonnegative
on their respective participation intervals.

When $\phi$ is concave, the results under IFR also follow from Fan--Lorentz because $\frac{1-F(\theta)}{f(\theta)}\phi(s)$ is concave in $s$ and submodular in $(s,\theta)$.
In the remaining cases, the conditions involve an increasing $\phi'(F(\theta))$.
Because $J(\theta)$ being increasing does not necessarily imply that $J(\theta)\phi'(F(\theta))$ is increasing, the revenue-maximizing mechanism may involve pooling even under Myerson regularity.

\subsubsection{Varying Positional Concerns about Same-tier Consumption}
\label{sec:extensions:gamma}

Now consider a more general specification à la \citet{HopkinsKornienko2004}:
\[ S({x}, G(\cdot)) = \gamma G({x}) + (1-\gamma) G^{-}({x}), \]
where $\gamma\in [0,1]$ measures the intensity of concern about the mass of consumers at the same level.
The benchmark case $\gamma=1/2$ ensures that pooling preserves the total status.
Hence, for a given level of exclusion, the participants' expected total status is invariant to the number of status levels.
If $\gamma<1/2$ ($\gamma>1/2$), consumers discount same-level consumers more (less) heavily, so pooling decreases (increases) expected aggregate status and the feasibility condition in Theorem~\ref{thm:feasibility} needs to be modified accordingly.

Two extreme cases, $\gamma=0$ and $\gamma=1$, admit simple characterizations of feasibility using monotone function intervals \citep{YangZentefis2024}. When $\gamma=0$, consumers only derive utility from the mass of consumers at \emph{strictly} lower levels and discount same-level consumers entirely. 
\footnote{This corresponds to the linear case in \cite{ImmorlicaStoddardSyrgkanis2015}.}
In this case, an increasing status profile $s$ is feasible if and only if $s\in\mathcal I(0,F)$, that is,
\[
0\leq s(\theta)\leq F(\theta) \quad \text{for all } \theta\in[0,\bar\theta].
\]
Intuitively, pooling strictly reduces the status of pooled types without affecting the status of other types.
The following result characterizes the optimal mechanisms, which fully separates buyers except for types that are excluded in the benchmark case $\gamma=1/2$.
\footnote{
In the absence of the intrinsic value $v(\theta)$, exclusion is equivalent to pooling at the bottom tier because both lead to $s(\theta)=0$.
}

\begin{observation} \label{obs:gamma=0}
    Assume $\gamma=0$. The optimal mechanisms are as follows:
    \begin{enumerate}[label=(\roman*)]
        \item Consumer surplus, whether negative transfers (subject to budget balance) are allowed or not, is maximized by full separation (i.e., $s(\theta)=F(\theta)$).%
        \item If $J(\theta)$ is strictly single-crossing (from below) and $v(\theta)=\alpha\theta$, the revenue-maximizing mechanism excludes $\theta<\theta_0^J$ and fully separates participants.
    \end{enumerate}
\end{observation}

At the other extreme, where $\gamma=1$ (see \cite{Frank1985}), consumers do not distinguish same-level consumers from lower-level ones and derive utility from them equally.
In this case,  an increasing status profile $s$ is feasible if and only if
there exists $\theta_0\in[0,\bar\theta]$ such that 
$$F(\theta)\le s(\theta)\le 1\quad \text{for all } \theta\in[\theta_0,\bar\theta],$$ 
and $s(\theta)=0$ for all $\theta\in[0,\theta_0)$.
Since $\gamma=1$ implies $S({x}, G(\cdot)) = G({x})$, pooling all participants assigns them the highest status $s=1$ without diluting it, as in selling ordinary goods without positional externalities.
Therefore, a single-tier posted-price mechanism that pools all participants is optimal for consumer surplus.
\begin{observation} \label{obs:gamma=1}
    Assume $\gamma=1$. The optimal mechanisms pool all participants, regardless of the distribution:
      \begin{enumerate}[label=(\roman*)]
        \item
        Consumer surplus, whether negative transfers (subject to budget balance) are allowed or not, is maximized by total pooling without exclusion (i.e., $s(\theta)=1$).
\item 
   If $v(\theta)=\alpha\theta$, the revenue-maximizing mechanism is a single-tier posted-price mechanism, i.e., $s(\theta)=\mathbf{1}_{\theta\geq\theta_0}$ for some $\theta_0\in[0,\bar\theta)$.
    \end{enumerate}
\end{observation}

In general, when $\gamma\in(0,1)\setminus\{1/2\}$, the feasibility condition is more complex, but
the baseline case $\gamma=1/2$ provides a useful benchmark.
For consumer surplus maximization, we have the following: if full separation is optimal under the $\gamma=1/2$ benchmark, then it remains optimal for $\gamma\in[0,1/2]$; if total
pooling is optimal under the benchmark, then it remains optimal for
$\gamma\in[1/2,1]$. Intuitively, $\gamma>1/2$ raises the status assigned
to pooled types, making pooling more attractive, while $\gamma<1/2$
lowers the status assigned to pooled types, making pooling less
attractive.
For revenue maximization, the same conclusion requires $J(\theta)\geq0$ on the participating region $[\theta_0,\bar\theta]$.
A sufficient condition is that $J(\theta)$ strictly single-crosses zero from
below and $v(\theta)=\alpha\theta$, which implies
$
\theta_0\geq \theta_0^J
$ and therefore $J(\theta)\geq0$ on $[\theta_0,\bar\theta]$.
Otherwise, when some participating types have negative virtual value, lowering their
status through pooling can increase revenue.

\subsubsection{Signaling}
\label{sec:signaling}

Consider an alternative specification where status arises from signaling concerns.
For example, consumers purchase conspicuous goods ${x}$ to obtain social status valued at $\E[\theta| {x}]$ \citep{Rayo2013}. Under this specification, the status induced by a positional good allocation $\chi\colon \Theta \times \Omega \to \tilde{X}$ is given by
\[S({x}, G_\chi) = \E[\theta \mid \chi(\theta,\omega) = {x}].\]
An important implication of this specification is that,
even if a consumer does not purchase the good (i.e., ${\sigma(\theta)} = 0$), he still obtains a baseline status $\lbar s =\E[\theta \mid {\chi(\theta)} = 0]$ rather than zero as in the main specification. 

This feature has two implications. First, the feasibility condition in Theorem~\ref{thm:feasibility} becomes analogous to the case without exclusion (see Remark~\ref{MPS(F)}): 
an incentive-compatible (and hence increasing) $s(\theta)$ is feasible if and only if $s\in \MPS(\theta)$ in quantile space, that is,
\begin{equation}
    \int_\theta^{\bar\theta} s(t) \d F(t) \leq \int_\theta^{\bar\theta} t \d F(t) \quad \mbox{ for all $\theta\in[0,\bar\theta]$}
\end{equation}
with equality at $\theta=0$.
In terms of status profiles, being excluded is equivalent to being pooled in a single bottom tier, although the buyer obtains an intrinsic value $v(\theta)$ in the latter case.

Moreover, the buyer's outside option becomes $\theta\lbar s$ instead of zero.
When the seller excludes types $\theta<\theta_0$, the baseline status is $\lbar s = \E[\theta \mid \theta<\theta_0]$.
If everyone participates, assume that the off-path belief following nonparticipation assigns probability one to the lowest type, so that $\lbar s=0$.
Thus, when the intrinsic value $v(\theta)=0$, instead of excluding types at the bottom, the seller can achieve the same revenue by pooling them into a single bottom tier, which replicates the same status level $\lbar s$ without affecting the incentives of higher types.
Hence, it is without loss of optimality to assume no exclusion; in other words, exclusion cannot increase revenue through the status channel, in contrast to the main specification, because nonparticipants still obtain the baseline status $\lbar s$.
\footnote{
Of course, if the intrinsic value $v(\theta)>0$, then exclusion may increase revenue through the standard channel.
}

Consequently, if $v(\theta)=0$, the main results still hold except that there is no exclusion, analogous to the case where exclusion is assumed to be impossible (see Appendix~\ref{app:noexclusion}): the revenue-maximizing mechanism fully separates all types if and only if $J(\theta)$ is increasing, while consumer surplus decreases (increases) as the seller offers more levels if $F$ has an increasing (decreasing) failure rate.

\subsection{Intrinsic Quality}
\label{intrinsic}
In the baseline model, to abstract from quality differentiation, I assume that positional goods have the same intrinsic quality across tiers and thus deliver the same intrinsic value $v(\theta)$.
In many applications, however, a higher-tier positional good, such as a larger car or a higher boarding class, also comes with a higher intrinsic quality.

In this subsection, I assume that the tier ${x}$ is not only ordinal. Instead, a good ${x}\in \tilde X = \reals_{+}$ has intrinsic quality equal to $x$, which delivers intrinsic value $v({x},\theta) = \theta {x}$ and incurs a production cost $c({x})$, à la \citet{MussaRosen1978}.
Assume that $c({x})$ is continuously differentiable, strictly increasing, strictly convex, and satisfies $c(0) = c'(0) = 0$ and $\lim_{x \to \infty} c'(x) = \infty$.
Because $v(0,\theta) = 0$ and $c(0) = 0$, it is still without loss of generality to denote the outside option of not buying by $x=0$.

As before, consider a direct mechanism $(\chi(\theta,\omega), p(\theta))$ consisting of an allocation rule $\chi\colon \Theta \times \Omega \to \reals_+$ and a payment schedule $p\colon \Theta \to \reals$.
The participation decision is incorporated into the allocation $\chi$, as $\chi(\theta,\omega) = 0$ is equivalent to opting out.
Let ${x}(\theta) = \E_\omega[\chi(\theta,\omega)]$ denote the interim allocation.
Consider the direct mechanism $(s(\theta), {x}(\theta), p(\theta))$, where $s(\theta)$ is the status profile induced by $\chi$.
Thus, the buyer's utility is $U(\theta) = \theta(s(\theta)+{x}(\theta))-p(\theta)$.
By convention, the opt-out decision is $\chi(\theta) = 0$ (which induces $s(\theta) = 0$) and $p(\theta) = 0$.
The following lemma on incentive compatibility replaces Lemma~\ref{lemma:IC}.

\begin{lemma}
    A direct mechanism $(s(\theta), {x}(\theta), p(\theta))$ is incentive-compatible if and only if 
    \begin{itemize}
     \item $s(\theta)+{x}(\theta)$ is increasing;
     \item $U(\theta) = U(0) + \int_{0}^\theta \left( s(t)+ {x}(t)\right) \d {t}$ for all $\theta \in[0,\bar \theta]$.
    \end{itemize}
\end{lemma}
In addition to incentive compatibility, the mechanism must satisfy the compatibility constraint that $s(\theta)$ and ${x}(\theta)$ must be generated by the same allocation $\chi$.
Say $s$ and ${x}$ are \emph{compatible} if $s(\theta) = \E_\omega[S(\chi(\theta,\omega),G_\chi)]$ and ${x}(\theta) = \E_\omega[\chi(\theta,\omega)]$ for some allocation $\chi\colon \Theta \times \Omega \to \reals_+$.
Using the same arguments as before and setting $U(0)=0$, the revenue maximization problem is 
\begin{equation}
    \max_{\chi\colon \Theta \times \Omega \to \reals_+} \int_{0}^{\bar \theta}   \left( J(\theta) (s(\theta)+{x}(\theta)) - \E_\omega[c(\chi(\theta,\omega))] \right) \d F(\theta)
\end{equation}
subject to monotonicity and compatibility constraints.
Under the regularity condition that $J(\theta)$ is increasing, there is no loss of optimality in restricting attention to deterministic increasing allocations
$\chi\colon \Theta \to \reals_+$
(see Lemma~\ref{lemma:compatibility}).

In general, because of the compatibility constraint, the optimization over status $s$ and quality $x$ cannot be separated. However, in this case, the separate optimizers over $s$ and $x$ happen to be compatible, as they can be generated by the same deterministic allocation $\chi$.

\begin{proposition} \label{prop:intrinsic}
    Assume $J(\theta)$ is increasing.
    The profit-maximizing mechanism excludes $\theta\leq\theta_0^J$  (i.e., $
    \chi^*(\theta) = 0$ for all $\theta\leq\theta_0^J$) and assigns $\chi^*(\theta) = c'^{-1}( J(\theta))$ for all $\theta>\theta_0^J$, which induces $x^*(\theta)=\chi^*(\theta)$ and 
    \[
    s^*(\theta) = \begin{cases}
        0, &\text{if } \theta\leq\theta_0^J\\
        \frac{F(\lbar\theta_i)+F(\bar\theta_i)}{2}, &\text{if } \theta\in [\lbar\theta_i,\bar\theta_i]\\
        F(\theta), &\text{otherwise}\\
    \end{cases}
    \]
    where $[\lbar\theta_i,\bar\theta_i]$ are the intervals where $J(\theta)$ is constant and positive.
\end{proposition}

The optimal status mechanism excludes types below $\theta_0^J$, for which the virtual valuation $J(\theta)$ is negative.
For types above $\theta_0^J$, if $J(\theta)$ is strictly increasing, the optimal mechanism separates all types and induces $s^*(\theta) = F(\theta)$; if $J(\theta)$ is constant, the optimal mechanism pools types with the same virtual valuation and induces the same status for them. 
The proposition implies that the profit-maximizing mechanism is robust to intrinsic quality, or equivalently, that the quality differentiation result in \citet{MussaRosen1978} is robust to positional externalities.

By attaching status value, the seller can extract $\int_{\theta_0^J}^{\bar \theta} J(\theta) s^*(\theta) \d F(\theta)>0$ additional revenue.
For the uniform distribution on $[0,1]$ and quadratic cost $c({x})={x}^2/2$, profit with status value is 0.29 compared to 0.08 with pure intrinsic value.
This may explain why luxury companies invest heavily in cultivating status value.

\subsection{Screening with Ordeals: Excessive Waiting Time}
\label{negative}
Priority service is an example of positional goods that arise from capacity constraints.
As shown in Example~\ref{ex:queuing}, the baseline model directly applies after defining the status by $s=1-t$, where $t$ is the waiting time.
In this subsection, I consider an extension where the service provider can delay service and prolong waiting time to $t(\theta) = 1-s(\theta) > 1$ based on the buyer's reported type $\theta$.

In this case, the range of status profiles becomes $(-\infty,1]$.
Incentive compatibility still requires the status profile $s(\theta)$ to be increasing, and the set of increasing and feasible status profiles is the lower set of the original feasible set, that is, 
$\{ s\colon \Theta\to (-\infty,1] \text{ increasing} \mid s \leq \hat{s} \text{ for some } \hat{s}\in \MPS_0(F)\}$.

Can the seller benefit from delaying service?
Excessive waiting time ($t>1$) can be viewed as an ordeal to screen low types.
Intuitively, instead of excluding low types, the seller can serve them with excessive waiting time, thereby extracting revenue from them while preventing higher types from mimicking them.
The following proposition confirms this intuition.

\begin{proposition} \label{prop:negstatus}
    Assume $J(\theta)$ is increasing and that delaying service is feasible.
    Then, the revenue-maximizing mechanism is given by 
    \[
            s^*(\theta) = \begin{cases}
                F(\theta), & \text{if } \theta\geq \theta_0^J\\
                -v'(\theta), & \text{if } \theta< \theta_0^J
            \end{cases} \mbox{ and } 
    p^*(\theta) = \begin{cases}
        \theta_0^J F(\theta_0^J) + \int_{\theta_0^J}^{\theta} z \d F(z) + v(\theta_0^J), & \text{if } \theta\geq \theta_0^J\\
        v(\theta)-\theta v'(\theta), & \text{if } \theta< \theta_0^J
    \end{cases}
    \]

   Allowing the seller to delay service increases revenue and decreases consumer surplus; both effects are strict if $f(0)<\infty$ and $v(\theta)/\theta > v'(\theta)$ for all $\theta>0$.
\end{proposition}

The optimal mechanism serves types with positive virtual valuations in descending order (i.e., $t^*(\theta)=1-F(\theta)$), with higher types served first, and assigns a service time of $t^*(\theta)=1+v'(\theta)\geq 1$ to types with negative virtual valuations.
In particular, $t^*(\theta)=1+v'(\theta)$ is decreasing and thus incentive-compatible because $v''(\theta)\leq 0$.

Delaying service increases revenue in two ways. First, the seller collects positive payments from types $\theta\in(0,\theta_0^*)$, who are excluded in the benchmark.
Second, for types $\theta\in (\theta_0^*, \theta_0^J)$ who are already served in the benchmark, because their virtual valuation $J(\theta)$ is negative, the seller can now extract strictly higher revenue from them by serving them with $t^*(\theta)=1+v'(\theta)$.
When the average intrinsic value is strictly higher than the marginal value (i.e., $v(\theta)/\theta > v'(\theta)$), the revenue gain from $\theta\in(0,\theta_0^*)$ is strict.
If $\theta_0^*=0$, the condition $f(0)<\infty$ implies $(\theta_0^*, \theta_0^J)$ is nonempty, so the revenue gain from types in this interval is also strict.
Consumers are indifferent between being excluded and served with $t^*(\theta)=1+v'(\theta)$, earning zero rents either way, so the first channel does not affect them.
However, the second channel increases the mass of consumers who receive zero rents, thereby reducing consumer surplus.

The following example illustrates how the seller can implement the optimal mechanism by pausing service.

\begin{example}[Delaying Service]
    Consider the payoff $u(p,s,\theta) = \tilde v (\theta)-\theta t - p$, where $t=1-s$ is the waiting time.
    Assume $\theta\sim \text{Unif }[0,1]$ and $\tilde v(\theta)=v_0+(1+\alpha)\theta$ (so that $v(\theta)= v_0+\alpha\theta$), where $v_0,\alpha\geq0$.
    Then, the revenue-maximizing mechanism is
    \[
        t^*(\theta) = \begin{cases}
            1-\theta, & \text{if } \theta\geq 1/2\\
            1+  \alpha, & \text{if } \theta<1/2
        \end{cases}
        \mbox{ and }
        p^*(\theta) = \begin{cases}
            \theta^2/2  + \alpha/2 + 1/8 + v_0, & \text{if } \theta\geq 1/2\\
            v_0, & \text{if } \theta<1/2.
        \end{cases}
    \]
    \begin{figure}[htbp]
        \centering
        \includegraphics[width=0.4\textwidth]{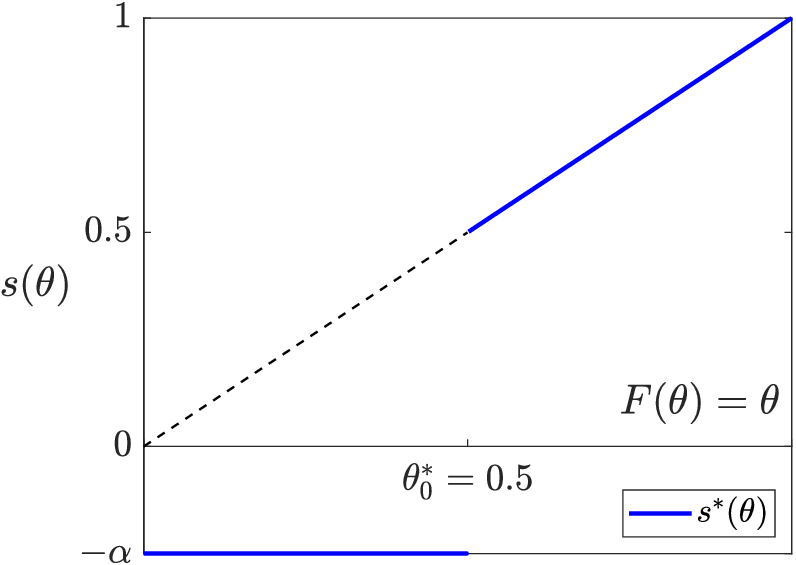}
        \hspace{25pt} \caption{Delaying Service}
    \end{figure}
    
    Instead of excluding low types, the seller assigns $t^*(\theta) = 1+\alpha$ to types with negative virtual valuations and charges $p^*(\theta) = v_0$, which can be implemented by pausing service for $\alpha+1/4$ units of time before serving types $\theta<1/2$ in random order (by pooling them into the same priority level).
\end{example}

The insight that excessive waiting time serves as a substitute for exclusion also applies to the case where exclusion is impossible, as the lowest-level good or service is free (see Appendix~\ref{app:noexclusion}).
In this case, excessive waiting time diminishes the value of the free service and effectively plays the role of exclusion, thereby increasing the seller's revenue.
In particular, by offering a priority service and a free regular service, the seller extracts more revenue as she increases the waiting time for the free regular service.

\subsection{When Higher Types Obtain Lower Utility}
\label{suffering}

Now I consider the case where $v(\theta)\geq0$ and $v'(\theta) \leq -1$, so that higher types obtain lower total utility from purchasing the good after accounting for status and intrinsic value.
This arises naturally in Example~\ref{ex:queuing}: if the value of the good is homogeneous across agents, higher types derive lower net utility after accounting for waiting costs.
\footnote{
In Example~\ref{ex:queuing}, if the good has the same value $v_0\geq\bar\theta$ to everyone, then the payoff $u(p,s,\theta) = v_0 - \theta t - p = (v_0 - \theta) + \theta s - p$ (where $s=1-t$) satisfies $v'(\theta) = -1$ and $v(\theta)\geq0$.
}
Therefore, the upward incentive constraint is the binding one: agents have incentives to \emph{overreport} in order to obtain shorter waiting times.
Intuitively, an agent’s type can be interpreted as his marginal cost of waiting, making the problem analogous to a procurement auction.

Formally, since \(s(\theta)\le 1\), when \(v'(\theta)\le -1\), the
envelope condition implies
\[
U'(\theta)=s(\theta)+v'(\theta) \le 0
\]
for all participating types $\theta$ (for which $\sigma(\theta)=1$).
Thus, in contrast to the main specification, higher types receive \emph{lower} information rents, so agents must be prevented from overreporting, and the local IC constraint binds from below rather than from above.
Moreover, it is now \emph{high} types that are potentially excluded: there exists a cutoff $\theta_0\in[0,\bar\theta]$ such that a buyer participates (i.e., $\sigma(\theta)=1$) for all $\theta<\theta_0$ and does not participate (i.e., $\sigma(\theta)=0$) for all $\theta>\theta_0$.

When participating buyers are fully separated, their status profile is $s(\theta) =  F(\theta) + (1-F(\theta_0))$.
In general, the feasibility condition in terms of majorization continues to hold, up to a shift, for all participants: an incentive-compatible $s$ is feasible if and only if there exists $\theta_0\in[0,\bar\theta]$ such that $$\int_{0}^{\theta} [s(t) - (1-F(\theta_0))] \d F(t) \geq \int_{0}^{\theta} F(t) \d F(t),\quad\mbox{ for all }\theta\in[0,\theta_0]$$
with equality at $\theta=\theta_0$, and $s(\theta) = 0$ for all $\theta\in(\theta_0, \bar\theta]$.

The following proposition summarizes the results for this setting.
\begin{proposition} \label{prop:suffering}
    The following results hold if $v(\theta)\geq0$ and $v'(\theta)\leq -1$:
    \begin{enumerate}[label=(\roman*)]
        \item %
        Assume $L(\theta) = \theta + \frac{F(\theta)}{f(\theta)}$ is increasing. Then,
        \begin{enumerate}
            \item The revenue is increasing in the number of positional good levels.
            \item 
            The revenue-maximizing mechanism excludes $\theta>\theta_0^*$ and fully separates $\theta\leq\theta_0^*$, where the optimal cutoff is $\theta^*_0 \in \argmax_{\theta_0}  \int_{0}^{\theta_0}   L(\theta) F(\theta) \d F(\theta) + [v(\theta_0) + \theta_0 (1-F(\theta_0))] F(\theta_0)$.
            \item If $v(\bar\theta)\geq-\frac{v'(\bar\theta)+1}{f(\bar\theta)}$ and $v$ is concave, then zero exclusion is optimal (i.e., $\theta_0^*=\bar\theta$).
        \end{enumerate}

        \item %
        The seller can obtain at least half the maximum revenue by selling a single tier.

        \item %
        Under the nonnegative price constraint, consumer surplus is maximized by total pooling (full separation) without exclusion under IFR (DFR).

        \item Delaying service cannot increase the seller's revenue.
    \end{enumerate}
\end{proposition}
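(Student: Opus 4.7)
The plan is to mirror the main-text arguments with the envelope ``run backwards'' from the high-end cutoff $\theta_0$, since $U'(\theta)=s(\theta)+v'(\theta)\leq 0$. The IR constraint binds at $\theta_0$ and $U(\theta)=-\int_\theta^{\theta_0}(s(x)+v'(x))\d x$. Substituting $p(\theta)=\theta s(\theta)+v(\theta)-U(\theta)$, applying Fubini, and integrating the $v'$-term by parts yields
\[
R=\int_0^{\theta_0}\!L(\theta)\,s(\theta)\,\d F(\theta)+v(\theta_0)F(\theta_0), \qquad W=-\!\int_0^{\theta_0}\!\frac{F(\theta)}{f(\theta)}\,s(\theta)\,\d F(\theta)+\int_0^{\theta_0}\!v(\theta)\,\d F(\theta)-v(\theta_0)F(\theta_0).
\]
The feasibility of $s$ restricted to $[0,\theta_0]$ is the natural analog of Theorem~\ref{thm:feasibility}: $s$ is increasing and weakly majorized by $F$ on that block. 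All comparative statics below reduce to a Fan--Lorentz/KMS exercise on this restricted set.

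For part (i), increasing $L$ makes Fan--Lorentz select full separation $s(\theta)=F(\theta)$ as the extreme-point maximizer, giving monotonicity in the fineness of the partition (a) and the displayed form of $\theta_0^*$ (b). For (c), the derivative of the reduced objective at an interior $\theta_0$ is $L(\theta_0)F(\theta_0)f(\theta_0)+v'(\theta_0)F(\theta_0)+v(\theta_0)f(\theta_0)$, which is nonnegative once $v'(\theta)F(\theta)+v(\theta)f(\theta)\geq 0$, i.e.\ $-v'/v\leq f/F$, so $\theta_0^*=\bar\theta$. For part (ii), I would copy the revenue-curve argument of Proposition~\ref{approx}: the externality-free auxiliary screening problem upper-bounds the positional-goods revenue, while a single positional good at a suitable price captures at least half of it, since pooled participants receive a status of at least $1/2$ (once the roles of $F$ and $1-F$ are swapped to reflect low-type participation).

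For part (iii), maximizing $W$ is equivalent to \emph{minimizing} $\int(F/f)\,s\,\d F$ over $s$ in the majorization class. By Fan--Lorentz with weight $F/f$: if $F/f$ is increasing, i.e.\ the reverse failure rate $f/F$ is decreasing, the minimizer is a constant $s$ (total pooling); if $F/f$ is decreasing, i.e.\ $f/F$ is increasing, the minimizer is $s=F$ (full separation). The derivative of the $v$-terms in $\theta_0$ equals $-v'(\theta_0)F(\theta_0)\geq F(\theta_0)>0$ since $v'\leq -1$, which together with the sign of the $s$-part's marginal contribution pushes $\theta_0^*$ to $\bar\theta$; nonnegativity of $p(\theta)=\theta s(\theta)+v(\theta_0)+\int_\theta^{\theta_0}s(x)\d x$ is then immediate. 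For part (iv), since $L(\theta)=\theta+F(\theta)/f(\theta)\geq 0$ everywhere, the pointwise coefficient on $s(\theta)$ in $R$ is nonnegative, so any downward perturbation of $s$ weakly reduces $\int L\,s\,\d F$ without touching the boundary term; hence the seller cannot gain by lowering status. This is exactly opposite to the main-text case: there $J<0$ on low types made negative status a useful substitute for exclusion, but here $L\geq 0$ leaves no such types to squeeze.

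The main obstacle is translating the majorization feasibility condition of Theorem~\ref{thm:feasibility} to this ``flipped'' setting in which non-participants sit at the high end of the type space, and carefully tracking sign conventions in the envelope and weight functions so that the Fan--Lorentz machinery delivers the two-sided comparative statics in parts (i) and (iii) cleanly. The rest is routine accounting.
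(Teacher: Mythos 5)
Your proposal follows the paper's proof essentially step for step: the same flipped envelope with IR binding at the high cutoff, the same reduced objectives with weights $L(\theta)=\theta+F(\theta)/f(\theta)$ for revenue and $-F(\theta)/f(\theta)$ for consumer surplus, the same Fan--Lorentz/KMS comparison on the participant block and sign check of the cutoff derivative, the same procurement-style auxiliary problem for the $1/2$-approximation, and for part (iv) your ``$L\geq 0$, so downward perturbations of $s$ only hurt'' argument is just the concrete form of the paper's observation that there are no negative-virtual-value types to squeeze. The one slip is the feasibility condition: you state that $s$ restricted to $[0,\theta_0]$ is \emph{weakly} majorized by $F$, but taken literally that would let the minimizer in part (iii) collapse to $s\equiv 0$ rather than to pooling or separation; since the excluded high types sit strictly below the participants in the status order, the correct condition is exact majorization (equality at the bottom, i.e., $\MPS$) of the shifted quantile $1-F(\theta_0)+F(\theta)$, under which a pooled participant gets $1-F(\theta_0)/2$ rather than $F(\theta_0)/2$. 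The constant shift does not affect any of the Fan--Lorentz comparisons or the signs of the cutoff derivatives, and your subsequent applications implicitly use the exact version, so the conclusions go through as in the paper.
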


Parts (i)--(iii) are qualitatively similar to those in the main
specification, except that
the revenue-maximizing mechanism excludes high types rather than
low types.

For part (iv), because $L(\theta)\geq0$, any downward distortion of status caused by delaying service reduces revenue. Intuitively, delaying service makes overreporting even more attractive and therefore increases the information rents required for incentive compatibility. Hence, unlike in the benchmark, excessive waiting cannot increase revenue.

\section{Conclusion}
Many goods are positional because consumers care about their relative consumption, either for psychological (such as luxury goods) or physical reasons (such as priority services).
Thus, allocating positional goods generates externalities on others: improving one consumer's position necessarily worsens another's, and the status value of the good diminishes as more consumers buy it.
Because of these externalities, the optimal mechanisms in terms of revenue and welfare are potentially different from those for ordinary goods.

In this paper, I study the allocation of positional goods using a mechanism design approach.
I characterize the optimal mechanism in terms of revenue, consumer surplus, and social welfare.
I study the welfare effects of restricting the seller to a single tier.
Under this restriction, I show that the seller can still guarantee at least half of the maximum revenue and that expanding coverage benefits consumers under IFR but may harm them otherwise.

The results also have implications for educational disarmament.
To maximize aggregate effort, meritocratic competition with potential exclusion is optimal.
For student welfare, however, meritocracy is harmful under common thin-tailed ability distributions, suggesting the need for disarmament policies, such as lottery-based school assignment or coarser performance rankings.
By contrast, when the distribution is heavy-tailed, meritocracy raises student welfare in the aggregate, as high-ability students benefit substantially from separation.
From a social welfare perspective, when effort is sufficiently productive, the effort gains from meritocracy render disarmament unnecessary.
Regardless of the distribution, the student-optimal mechanism does not involve exclusion.

Several extensions merit further investigation.
Competition among sellers of positional goods remains unexplored.
In the education context, this can be interpreted as competition between different tracks leading to different status levels (e.g., vocational versus academic education).
The problem of sellers competing strategically in mechanisms remains open.

\newpage
\counterwithin*{equation}{section}
\renewcommand{\theequation}{\thesection.\arabic{equation}}

\counterwithin*{theorem}{section}
\renewcommand{\thetheorem}{\thesection.\arabic{theorem}}

\counterwithin*{remark}{section}
\renewcommand{\theremark}{\thesection.\arabic{remark}}

\counterwithin*{lemma}{section}
\renewcommand{\thelemma}{\thesection.\arabic{lemma}}

\begin{appendices}
\section{Optimal Mechanisms without Exclusion} 
\label{app:noexclusion}
In this section, I study the optimal mechanisms when exclusion is impossible and the lowest-level positional good is available for free.%
\footnote{
This section subsumes results in my note ``A Mechanism Design Approach to `Gainers and Losers in Priority Services'.''
}
For example, in \cite*{MoldovanuSelaShi2007}, agents cannot be excluded and are guaranteed at least the lowest status in the organization; in the benchmark model of \citet[henceforth GW]{GershkovWinter2023}, buyers can use the regular (non-priority) service for free.
The free lowest-level position, rather than opting out, thus serves as the buyer's effective outside option.
Moreover, as noted in Remark~\ref{MPS(F)}, the feasibility condition becomes $s\in \MPS(F)$.

Based on the feasibility condition, I first characterize the revenue-maximizing mechanism (Proposition~\ref{prop8}).
In particular, I provide necessary and sufficient conditions under which the revenue-maximizing mechanism is fully separating.
I also establish a 2-approximation result: selling a premium tier alongside the free tier guarantees at least half the maximum revenue (Proposition~\ref{prop:approx2}).
Then, I study consumer surplus and provide conditions under which it is decreasing or increasing in the number of positional good levels offered by the seller (Proposition~\ref{prop7}).
I also characterize the necessary and sufficient condition under which introducing a premium tier alongside the free tier increases consumer welfare (Proposition~\ref{prop1}) and under which introducing premium tiers hurts every consumer (Proposition~\ref{prop2}).

\subsection{Revenue Maximization}
The revenue maximization problem is given by
\begin{align}
    \max_{s(\theta), p(\theta)} \int_{0}^{\bar \theta} p(\theta) \d F(\theta) 
\end{align}
subject to the following constraints on $\theta\in[0,\bar\theta]$
\begin{align}\label{constraints}
    & U(\theta) - v(\theta) = \theta s(\theta) - p(\theta) \geq \theta \lbar s &&\mbox{(IR)}\\
    & U(\theta) - v(\theta) =  U(0) - v(0) + \int_0^\theta s(t)  \d {t}  &&\mbox{(IC)}\\
    & s(\theta) \mbox{ is increasing} \\
    & s \in \MPS(F ) &&\mbox{(MPS)}\label{eqn:MPS}
\end{align}
where $\lbar s = \min\{s(\theta)\}$ denotes the lowest status, which is endogenously determined by the status allocation.
The (IR) constraint, which is equivalent to $\tilde U(\theta) \equiv U(\theta)-v(\theta)-\theta\lbar s\geq0$, can be reduced to $U(0)-v(0)\geq0$ because $U'(\theta)-v'(\theta)-\lbar s=s(\theta)-\lbar s\geq0$.

By standard arguments, the expected revenue is given by
\begin{align}
    R = \int_{0}^{\bar \theta} p(\theta) \d F(\theta) & = 
    \int_{0}^{\bar \theta} \left( \theta - \frac{1-F(\theta)}{f(\theta)} \right) s(\theta) \d F(\theta) - (U(0) - v(0)).
\end{align}
By (IR), it is optimal to set $U(0)=v(0)$.
The revenue maximization problem is equivalent to
\begin{equation}
    \max_{s \in \MPS(F)} \int_{0}^{\bar \theta}  J(\theta) s(\theta)  \d F(\theta).
\end{equation}
Theorem 4 (Fan--Lorentz) in KMS implies the following result.

\begin{proposition} \label{prop8}
    Offering more positional good levels always (strictly) increases the seller's revenue if and only if $J(\theta)$ is (strictly) increasing. 
    Thus, full separation maximizes revenue if and only if $J(\theta)$ is increasing.

    If $J(\theta)$ is not monotonic, define $\tilde{J}(\theta) = \int_0^\theta J(t) \d F(t)$ and let $K(\tau) = \conv \tilde J(F^{-1}(\tau))$ denote the convex hull of $\tilde J$ in quantile space.
    Then, the revenue-maximizing mechanism separates types if $K\circ F = \tilde J$ and pools types otherwise.
\end{proposition}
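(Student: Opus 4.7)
The plan is to take the reduced program from the excerpt, $\max_{s \in \MPS(F)} \int_0^{\bar\theta} J(\theta) s(\theta)\,\d F(\theta)$, and apply the Fan-Lorentz theorem (Theorem~4 of KMS). I would first move to quantile space via $\tau = F(\theta)$, $\tilde s(\tau)=s(F^{-1}(\tau))$, and $g(\tau) = J(F^{-1}(\tau))$, so that Theorem~\ref{thm:feasibility} (with equality at $x=0$ by Remark~\ref{MPS(F)}) translates the feasible set into the increasing functions on $[0,1]$ with mean $1/2$ that are majorized by the identity, and the objective becomes the linear functional $\int_0^1 g(\tau)\tilde s(\tau)\,\d\tau$.

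For the first half of the proposition, Fan-Lorentz says that a linear functional of this form, over increasing functions majorized by the identity, is maximized at $\tilde s(\tau)=\tau$ if and only if $g$ (equivalently $J$) is increasing. The same result applied to any two feasible allocations $s_1, s_2 \in \MPS(F)$ with $s_2\in\MPS(s_1)$ yields $\int J s_1\,\d F \geq \int J s_2\,\d F$ under monotone $J$, which is the ``finer partition raises revenue'' direction, and is strict when $J$ is strictly increasing and $s_1\neq s_2$ on a positive-measure set. For the converse, if $J$ is strictly decreasing on some interval $[a,b]$, I would compare full separation with the coarsening that pools $[a,b]$ into one level at $\bar s = (F(a)+F(b))/2$; a Chebyshev covariance argument on the oppositely-monotone functions $J$ and $F-\bar s$ on $[a,b]$ shows the coarsened allocation strictly beats separation, so ``more levels always strictly helps'' forces $J$ to be increasing.

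For the ironing half, the same Fan-Lorentz theorem characterizes the optimal extreme point of $\MPS(F)$ via the convex hull of $\tilde J(\theta)=\int_0^\theta J(x)\,\d F(x)$: on every maximal subinterval where $\conv\tilde J$ is affine, $\tilde J$ has been ironed out and the optimal allocation pools all types in that block into a single positional good level, whereas on every subinterval where $\conv\tilde J = \tilde J$ (so $\tilde J$ is already convex there), the optimum fully separates types with $s^*(\theta)=F(\theta)$. The main obstacle is careful translation between notions: one must verify that the feasibility set of Theorem~\ref{thm:feasibility} (mean-preserving spread of $F$ in quantile space) matches the KMS setting exactly, that the convex-hull ironing is taken in the appropriate quantile parameterization, and that the extremal characterization covers the stochastic allocations entertained here. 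Once these translations are in hand, Proposition~\ref{prop8} follows as a direct specialization of Fan-Lorentz.
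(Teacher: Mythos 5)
Your proposal is correct and follows essentially the same route as the paper, which reduces the problem to the linear program $\max_{s\in\MPS(F)}\int J(\theta)s(\theta)\,\d F(\theta)$ and invokes the Fan--Lorentz theorem (Theorem~4 in KMS) for the monotone case and the convex-hull/ironing characterization for the non-monotone case. Your added details---the explicit quantile-space translation, the Chebyshev covariance argument for the ``only if'' direction, and the caveat that $\conv\tilde J$ must be taken in the quantile parameterization---are all consistent with what the paper leaves implicit.
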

\begin{remark}
    GW's Proposition 8 provides a sufficient condition for the seller's revenue to be strictly increasing in the number of priority classes: $F$ satisfies the IFR property.%
\end{remark}

\begin{remark}
    Effort maximization in MSS's model is the same as revenue maximization because they assume linear effort costs (and no exclusion). In Theorem 4, they also provide a sufficient condition for full separation to be optimal: $F$ satisfies the IFR property.
\end{remark}

Infinitely many positional good levels can be implemented by an all-pay auction, in which the more money a consumer pays, the higher status he receives.

Now consider the approximation of selling a single positional good in addition to a free low-level position, as in Example~\ref{ex:twolevels}.
\begin{proposition} \label{prop:approx2}
   The seller can obtain at least half the maximum revenue by selling a single positional good in addition to a free low-level position.
\end{proposition}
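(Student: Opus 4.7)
The plan is to adapt the argument of Proposition~\ref{approx} to the no-exclusion setting. Let $R(\tau) = (1-\tau) F^{-1}(\tau)$ denote the revenue curve in quantile space $\tau = F(\theta)$, and let $R_{\text{ord}} \equiv \max_{\tau \in [0,1]} R(\tau) = \max_{p} p(1-F(p))$ denote the maximum revenue from selling a single indivisible \emph{ordinary} good (without positional externalities) via a posted price. The proposition will follow from two ingredients: (i) selling a single positional good alongside the free service attains revenue at least $R_{\text{ord}}/2$, and (ii) the maximum positional-good revenue $R^*$ is bounded above by $R_{\text{ord}}$.

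For (i), I would consider offering a single premium good above the free low-level position, with participation cutoff $\theta^*$. The two induced status levels are $\lbar s = F(\theta^*)/2$ and $s_H = (1+F(\theta^*))/2$, so $s_H - \lbar s = 1/2$. The indifference condition at $\theta^*$ pins down the premium price $p = \theta^*(s_H - \lbar s) = \theta^*/2$, and the resulting revenue is $p(1-F(\theta^*)) = \theta^*(1-F(\theta^*))/2 = R(F(\theta^*))/2$. Choosing $\theta^* = \arg\max_{\theta} \theta(1-F(\theta))$ yields $R_1 \geq R_{\text{ord}}/2$.

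For (ii), starting from $R^* = \int_0^{\bar\theta} J(\theta) s^*(\theta) \d F(\theta)$ (with $U(0) = v(0)$ pinned down by IR as in Section~\ref{sec:revenue}), I would change variables to $\tau = F(\theta)$. Using the identity $-R'(\tau) = J(F^{-1}(\tau))$ and integration by parts on $[0,1]$, together with $R(0) = R(1) = 0$, I would rewrite $R^* = \int_0^1 R(\tau) \d \hat s(\tau)$, where $\hat s(\tau) = s^*(F^{-1}(\tau))$ is increasing on $[0,1]$ with values in $[0,1]$. Bounding $R(\tau) \leq R_{\text{ord}}$ pointwise then gives $R^* \leq R_{\text{ord}} \cdot [\hat s(1) - \hat s(0)] \leq R_{\text{ord}}$.

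Combining (i) and (ii) yields $R_1 \geq R_{\text{ord}}/2 \geq R^*/2$, establishing the $1/2$ approximation. The main obstacle is the upper bound in (ii): it captures the essential fact that positional externalities (the $\MPS(F)$ constraint) prevent full extraction of the ordinary-good benchmark, and the key technical step is the change of variables plus integration by parts that relates the virtual value $J(\theta)$ to the revenue curve $R(\tau)$. A minor subtlety is handling possible atoms of $\hat s$ at the endpoints, but since $\hat s$ is bounded in $[0,1]$ and increasing, the total variation $\hat s(1)-\hat s(0)$ remains at most $1$, preserving the bound.
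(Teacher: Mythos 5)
Your proposal is correct and follows essentially the same route as the paper: compute the two-level revenue exactly as $\tfrac12\max_p p(1-F(p))$ via the indifference condition $\theta^* = 2p$, then bound the full positional-good revenue by the ordinary-good benchmark $\max_p p(1-F(p))$ by relaxing the $\MPS(F)$ constraint to mere monotonicity. The only cosmetic difference is that you establish the upper bound directly via the change of variables and integration by parts on the revenue curve $R(\tau)$, whereas the paper cites the standard result that posted prices are optimal over all monotone allocations $q\in\mathcal M$; these are the same bound.
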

\begin{proof}
    Denote by $p$ the price of the high-level position.
    The cutoff type $\theta(p)$ indifferent between two levels is given by \[\theta(p) \frac{1+F (\theta(p))}{2} -p =\theta(p)\frac{F(\theta(p))}{2} \iff \theta(p)=2p.\]
    Let $R_2$ denote the seller's maximum revenue from offering one positional good in addition to a free low-level position, which is given by
    \[R_2 = \max_p p(1-F(2p)) = \frac{1}{2}\max_p p(1-F(p)).\]
    Consider the auxiliary problem of selling an indivisible good to one buyer, in which a standard extreme-point argument implies a posted-price mechanism is optimal \citep[see, e.g.,][Proposition 2.5]{Borgers2015}. 
    Formally, let $\mathcal M= \{q\colon [0,\bar \theta ]\to [0,1] \mid q \text{ increasing}\}$ denote the set of incentive-compatible allocations, then 
    $$\max_{q\in\mathcal M} \int_0^{\bar \theta} J(\theta) q(\theta) \d F(\theta) = \max_p p(1-F(p)).$$
    Any maximizer $\tilde q^*$ must satisfy $\tilde q^*(\theta)=1$ on $[\tilde p^*,\bar\theta]$ for some $\tilde p^*<\bar\theta$.
    Because $\MPS(F)\subseteq \mathcal M$, we have
    \[
    \begin{aligned}
        \max R  = \max_{s \in \MPS(F)} \int_{0}^{\bar \theta}  J(\theta) s(\theta)  \d F(\theta)
        < \max_{{x}\in \mathcal M} \int_0^{\bar \theta} J(\theta) {x}(\theta) \d F(\theta)
         = \max_p p(1-F(p)) = 2R_2.
    \end{aligned}   
    \]
    The inequality is strict because any maximizer of the auxiliary problem assigns $q(\theta)=1$ to a positive measure of types, which is not in the feasible set $\MPS(F)$.
\end{proof}

This proposition extends Proposition~\ref{prop:approx} to the case where exclusion is impossible.

\paragraph{Graphical Illustration.}
Figure~\ref{fig:approx1} plots the \emph{revenue curve} $R(\tau)=(1-\tau)F^{-1}(\tau)$ in quantile space, where $\tau=F(\theta)$.
Assume $v(\theta)=0$ for simplicity.
Similar to Section~\ref{sec:approx}, 
(i) the area under the revenue curve (\emph{blue} area in the left panel) represents the revenue from the revenue-maximizing mechanism---i.e., $s^*(\theta) = F(\theta)$,
(ii) the \emph{red} triangle in the right panel represents the revenue from having two levels of positional goods---a higher level offered at price $\theta^*_0/2$ in addition to a lower level offered for free, %
and (iii) the entire box of either panel (which has an area of $R(\tau^*_0)$) represents the maximum revenue from selling nonpositional goods in the auxiliary problem.
It is straightforward that the red area (ii) is at least half of the entire box and thus strictly larger than half of the blue area (i).

\begin{figure}[hbt]
    \centering
    \begin{subfigure}[h]{0.45\textwidth}
        \centering
        \includegraphics[width= \textwidth]{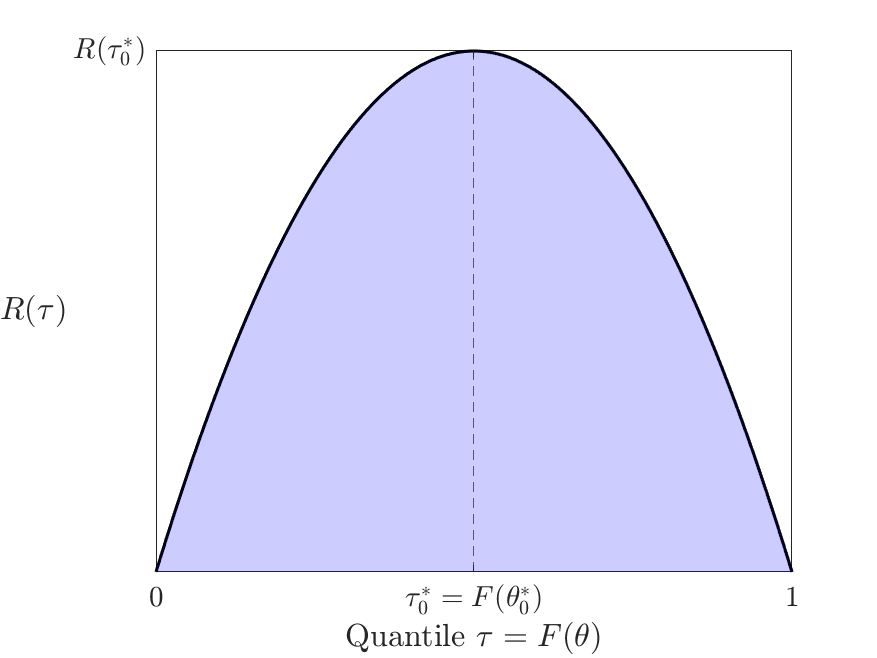}
    \end{subfigure}
    \begin{subfigure}[h]{0.45\textwidth}
        \centering
        \includegraphics[width= \textwidth]{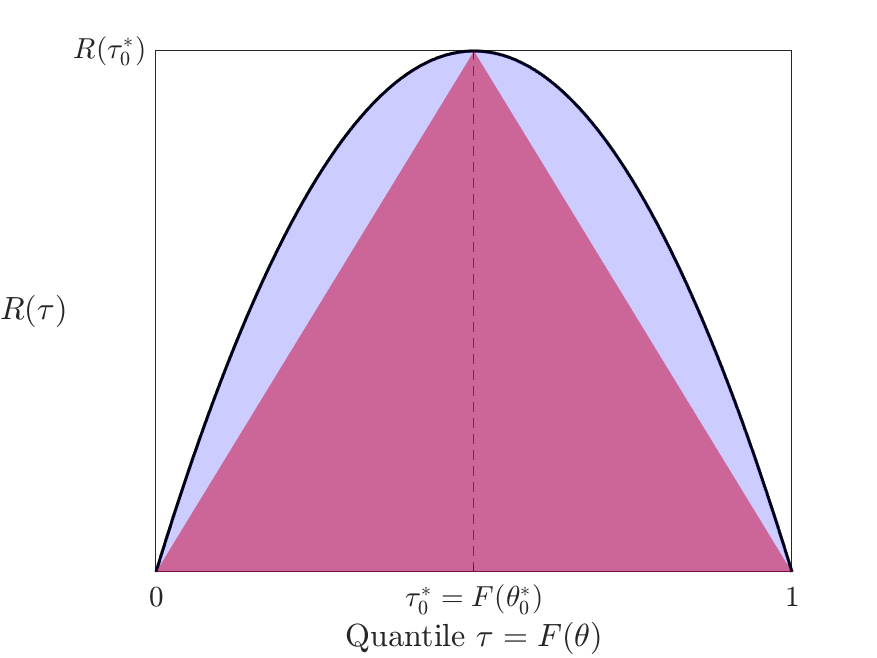}
    \end{subfigure}
    \caption{Revenue curve $R(\tau)$ for the uniform distribution}
    \label{fig:approx1}
\end{figure} 

The approximation performs well for many common distributions, as shown in the following examples (with $v(\theta)=0$).

\begin{example*}[Exponential distribution]
    If $F(\theta)=1-\exp(-\lambda\theta)$ where $\lambda>0$, offering a single good above the free lowest level can obtain $73.6\%$ of the maximum revenue.
\end{example*}
\begin{example*}[Uniform distribution]
    If $F(\theta)=\theta$ on $[0,1]$, 
    offering a single good above the free lowest level can obtain $75\%$ of the maximum revenue.
\end{example*}
\begin{example*}[Power distribution]
    If $F(\theta)=\theta^\beta$ on $[0,1]$ where $\beta>0$, offering a single good above the free lowest level can obtain 
    \begin{equation*}
        \frac{R_2}{\max R} =
        \begin{cases}
          \displaystyle  \frac{(1+2\beta)}{2\beta (1+\beta)^{1/\beta}}    > 72.1\%, &\text{ if }\beta\geq1\\
          \displaystyle  \frac{1+2\beta}
            {
            (1+\beta)^{1/\beta}
            \left[
            2\beta+(1-\beta)^{2+1/\beta}
            \right]
            } > \frac{3}{4}, &\text{ if }0<\beta<1
        \end{cases}
    \end{equation*}
    of the maximum revenue. The ratio approaches 1 as $\beta\to \infty$ (i.e., $F$ is sufficiently convex) or $\beta\to 0$ (i.e., $F$ is sufficiently concave).
\end{example*}

\subsection{Consumer Surplus}
\label{noexclusion:consumer}
Consumer surplus is given by
\begin{equation}
    W = \int_{0}^{\bar \theta} U(\theta) \d F(\theta)  = \int_{0}^{\bar \theta}  \left(\frac{1-F(\theta)}{f(\theta)} \right)  s(\theta)   \d F(\theta) + \E[v(\theta)] +  U(0) - v(0).
\end{equation}

\begin{proposition}\label{prop7}
    For any increasing status profiles $s,\hat{s}\colon [0,\bar\theta]\to [0,1]$ such that $\hat{s}\in \MPS(s)$,
    \begin{enumerate}[label=(\roman*)]
        \item $\hat{s}$ results in a higher consumer surplus if $F$ satisfies IFR;
        \item $\hat{s}$ results in a lower consumer surplus if $F$ satisfies DFR.
    \end{enumerate} 
    Therefore, consumer surplus decreases (increases) as the seller offers more levels of positional goods if $F$ satisfies IFR (DFR). 
\end{proposition}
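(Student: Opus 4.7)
The plan is to reduce the comparison of consumer surplus under $s$ versus $\hat{s}$ to a single integral whose sign is controlled by the monotonicity of $h(\theta)\equiv (1-F(\theta))/f(\theta)$, using integration by parts against the majorization ``slack.'' Starting from
\[
W(s) = \int_0^{\bar\theta} h(\theta)\, s(\theta)\, \d F(\theta) + \E[v(\theta)] + U(0) - v(0),
\]
the constants cancel when we compute $W(\hat{s}) - W(s) = \int_0^{\bar\theta} h(\theta)(\hat{s}(\theta) - s(\theta))\, \d F(\theta)$, so the sign of the welfare change depends only on this linear functional of $\hat{s}-s$.

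Next, I introduce the slack function $\Delta(x) \equiv \int_x^{\bar\theta} (s(\theta)-\hat{s}(\theta))\, \d F(\theta)$. The hypothesis $\hat{s}\in\MPS(s)$ translates exactly into $\Delta(x)\geq 0$ for all $x\in[0,\bar\theta]$, with $\Delta(0)=\Delta(\bar\theta)=0$. Since $\d \Delta(\theta) = -(s(\theta)-\hat{s}(\theta))\,\d F(\theta)$, integration by parts yields
\[
W(\hat s) - W(s) = -\int_0^{\bar\theta} h(\theta)\, \d\Delta(\theta) = \bigl[-h(\theta)\Delta(\theta)\bigr]_0^{\bar\theta} + \int_0^{\bar\theta} h'(\theta)\, \Delta(\theta)\, \d\theta = \int_0^{\bar\theta} h'(\theta)\, \Delta(\theta)\, \d\theta,
\]
with the boundary terms vanishing because $\Delta(0)=\Delta(\bar\theta)=0$.

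From this representation the two claims follow immediately. Under IFR, $h(\theta)$ is decreasing, so $h'\leq 0$ and $\Delta\geq 0$ together give $W(\hat s) - W(s) \geq 0$, proving (i). Under DFR, $h$ is increasing, so $h'\geq 0$ and $W(\hat s) - W(s) \leq 0$, proving the ``if'' direction of (ii). For the ``only if'' direction, I will use a local perturbation argument: if DFR fails, there is an interval on which $h' < 0$, and one can construct an increasing $\hat{s}\in\MPS(s)$ whose associated $\Delta$ is supported (essentially) on that interval, producing $W(\hat s) - W(s) > 0$, contradicting the hypothesis. The comparative statics in the number of positional good levels then follows by applying the pointwise inequality to any refinement pair, since a finer partition produces an $s$ majorizing the coarser $\hat{s}$.

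The only mildly delicate step is verifying that the boundary terms truly vanish (requiring $h$ to be integrable at the endpoints, which holds under the standing density assumptions) and constructing the explicit local perturbation for the necessity part while preserving monotonicity of $\hat{s}$ and the endpoint equalities of $\Delta$. Everything else is a direct consequence of integration by parts and the sign of the failure rate derivative.
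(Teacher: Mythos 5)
Your argument is correct in substance but takes a more elementary, self-contained route than the paper. The paper obtains both parts by citing the Fan--Lorentz theorem (Theorem 4 in Kleiner, Moldovanu, and Strack): the integrand $c(s,\theta)=\frac{1-F(\theta)}{f(\theta)}s$ is linear in $s$ and supermodular (submodular) in $(s,\theta)$ exactly when $\frac{1-F}{f}$ is increasing (decreasing), which delivers the DFR and IFR comparisons at once, and the ``only if'' in (ii) comes for free because Fan--Lorentz is itself an equivalence. You instead prove the relevant special case of that theorem directly: reduce the welfare difference to the linear functional $\int h\,(\hat s-s)\,\d F$ with $h=(1-F)/f$, integrate by parts against the majorization slack $\Delta$, and sign the result using $h'$ and $\Delta\ge 0$; the necessity half is handled by a local pooling perturbation on an interval where $h'<0$. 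This buys transparency and independence from KMS at the cost of having to verify the boundary terms and construct the perturbation explicitly, both of which are routine here. One caveat: there is a sign slip in your displayed chain. Since $\d\Delta=(\hat s-s)\,\d F$, the difference is $W(\hat s)-W(s)=+\int h\,\d\Delta$, not $-\int h\,\d\Delta$, so after integration by parts the correct representation is
\begin{equation*}
W(\hat s)-W(s)=-\int_0^{\bar\theta}h'(\theta)\,\Delta(\theta)\,\d\theta .
\end{equation*}
Your stated conclusions (nonnegative under IFR, nonpositive under DFR) are consistent with this corrected formula rather than with the one you wrote, so the slip is cosmetic, but it should be fixed. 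Also note that the cancellation of the constant $U(0)-v(0)$ across mechanisms requires holding the boundary payoff fixed, which is guaranteed here because the lowest level is free ($p(0)=0$, hence $U(0)=v(0)$ in both mechanisms).
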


I focus on the case where the price (or effort) is nonnegative, which pins down $p(0)= v(0)-U(0)=0$.

\begin{corollary}
        Consumer surplus is the highest when the seller offers
        \begin{enumerate}[label=(\roman*)]
            \item one position level (i.e., $s(\theta)=1/2$) if $F$ satisfies IFR;
            \item full separation (i.e., $s(\theta)=F(\theta)$) if and only if  $F$ satisfies DFR.
        \end{enumerate} 
    In the former case, consumer surplus is $\E[v(\theta)+\theta/2]$.
\end{corollary}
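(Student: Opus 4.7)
The plan is to exploit the majorization structure of $\MPS(F)$ together with Proposition~\ref{prop7}. First, I would use the nonnegative-price constraint to pin down boundary terms. Evaluating (IR) at $\theta=0$ gives $-p(0)\geq 0$; combined with $p(0)\geq 0$, this forces $p(0)=0$ and hence $U(0)=v(0)$, so the term $U(0)-v(0)$ drops out of the consumer surplus expression. The problem thus reduces to
\[ \max_{s\in\MPS(F)}\ \int_0^{\bar\theta}\frac{1-F(\theta)}{f(\theta)}\,s(\theta)\,dF(\theta) + \E[v(\theta)]. \]

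Next I would identify the two relevant extreme elements of $\MPS(F)$ under the mean-preserving-spread order. The constant allocation $s\equiv 1/2$ lies in $\MPS(F)$ and is a mean-preserving contraction of every $s\in\MPS(F)$, i.e., $1/2\in\MPS(s)$ for every feasible $s$. This is verified by a short check: since any feasible $s$ is increasing with mean $1/2$, we have $s(\theta)\leq 1/2$ for $\theta$ below some cutoff and $s(\theta)\geq 1/2$ above, which yields $\int_x^{\bar\theta}(1/2)\,dF\leq\int_x^{\bar\theta}s\,dF$ for all $x$ (with equality at $x=0$). Conversely, $F$ itself is the maximal element of $\MPS(F)$: every $s\in\MPS(F)$ is by definition a contraction of $F$.

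Applying Proposition~\ref{prop7}(i) under IFR, moving from any feasible $s$ to the more contracted $1/2$ weakly raises consumer surplus, so $s^\ast(\theta)=1/2$ is optimal. Applying Proposition~\ref{prop7}(ii) under DFR, any contraction of $F$ weakly lowers CS, so $s^\ast(\theta)=F(\theta)$ is optimal; the ``only if'' direction follows from the iff in Proposition~\ref{prop7}(ii), since if DFR fails one can exhibit a strict contraction of $F$ that does not strictly lower CS, ruling out unique optimality of full separation. Substituting $s\equiv 1/2$ into the simplified objective and using the identity $\int_0^{\bar\theta}(1-F(\theta))\,d\theta=\E[\theta]$ then gives
\[ W = \tfrac{1}{2}\int_0^{\bar\theta}(1-F(\theta))\,d\theta + \E[v(\theta)] = \tfrac{1}{2}\E[\theta] + \E[v(\theta)] = \E\!\left[v(\theta)+\tfrac{\theta}{2}\right]. \]

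The main obstacle is light here: most of the heavy lifting is delegated to Proposition~\ref{prop7}, so the work reduces to (i) the boundary-term bookkeeping that eliminates $U(0)-v(0)$, (ii) verifying that $1/2$ and $F$ are respectively the minimal and maximal elements of $\MPS(F)$ in the MPS order, and (iii) transferring the ``iff'' in Proposition~\ref{prop7}(ii) into uniqueness at the aggregate level.
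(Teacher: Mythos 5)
Your proof is correct and takes essentially the same route as the paper, which likewise pins down $p(0)=0$ (hence $U(0)=v(0)$) from the nonnegative-price constraint and obtains the corollary directly from Proposition~\ref{prop7} by observing that $s\equiv 1/2$ and $s=F$ are, respectively, the minimal and maximal elements of $\MPS(F)$ in the majorization order, with the closed form $\E[v(\theta)+\theta/2]$ following from $\int_0^{\bar\theta}(1-F)\,\d\theta=\E[\theta]$. The one nitpick is your ``only if'' step: to show full separation is not surplus-maximizing when DFR fails you need a contraction of $F$ that \emph{strictly raises} consumer surplus (which the Fan--Lorentz equivalence behind Proposition~\ref{prop7}(ii) does deliver), not merely one that ``does not strictly lower'' it.
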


\begin{remark}
    GW's Proposition 7 shows the sufficiency of the IFR property ($\frac{1-F(\theta)}{f(\theta)}$ is decreasing) for no priority service to be consumer surplus-maximizing.
\end{remark}

\begin{remark}
    If both $J(\theta) = \theta - \frac{1-F(\theta)}{f(\theta)}$ and $\frac{1-F(\theta)}{f(\theta)}$ are increasing (e.g., exponential and Pareto distributions), a finer partition (in terms of majorization) can increase both the seller's revenue and consumer surplus.
    Thus, full separation maximizes both the revenue and consumer surplus.
\end{remark}

\begin{proposition} \label{prop1}
  Consumer surplus is the highest when the seller offers a single (free) level if and only if 
  \[
  \int_0^\theta \left( \frac{1-F(t)}{f(t)} -\E[\theta] \right)\d F(t) \geq0 \quad \mbox{ for all $\theta\in[0,\bar \theta]$.}
  \]
 \end{proposition}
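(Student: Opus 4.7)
Write $g(\theta)=(1-F(\theta))/f(\theta)$, so (since the price is zero at type $0$ gives $U(0)=v(0)$) the consumer surplus differs from $\int_0^{\bar\theta} g(\theta) s(\theta)\,\mathrm{d}F(\theta)$ only by a constant that does not depend on $s$. The one-position-level mechanism corresponds to total pooling, $s\equiv 1/2$, and the claim is that this $s$ maximizes $\int g(\theta)s(\theta)\,\mathrm dF(\theta)$ over the feasible set $\{s\in\MPS(F):s\text{ increasing}\}$ iff $\Psi(\theta):=\int_0^{\theta}(g(x)-\E[\theta])\,\mathrm dF(x)\geq 0$ for all $\theta$. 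Note that $\int g\,\mathrm dF=\E[\theta]$, so $\Psi(0)=\Psi(\bar\theta)=0$.

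The plan is to express the surplus gap in a form that is linear in $\mathrm ds$. Using $\int s\,\mathrm dF=\int F\,\mathrm dF=1/2$ for any $s\in\MPS(F)$, I compute
\begin{equation*}
W(s)-W(\tfrac12)\;=\;\int_0^{\bar\theta} (g(\theta)-\E[\theta])\,s(\theta)\,\mathrm dF(\theta).
\end{equation*}
Since $\Psi$ is the antiderivative (with respect to $F$) of $g-\E[\theta]$, a Stieltjes integration by parts with $s$ of bounded variation and $\Psi(0)=\Psi(\bar\theta)=0$ gives
\begin{equation*}
W(s)-W(\tfrac12)\;=\;\int_0^{\bar\theta}s(\theta)\,\mathrm d\Psi(\theta)\;=\;-\int_0^{\bar\theta}\Psi(\theta)\,\mathrm ds(\theta).
\end{equation*}

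Sufficiency then follows at once: when $\Psi\geq 0$ on $[0,\bar\theta]$ and $s$ is increasing (so $\mathrm ds$ is a nonnegative measure), the right side is $\leq 0$, which is exactly $W(s)\leq W(1/2)$ for every feasible $s$.

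For necessity, suppose $\Psi(\theta^\ast)<0$ for some $\theta^\ast\in(0,\bar\theta)$. I will exhibit a feasible two-level $s$ improving on $s\equiv 1/2$. Take the two-level extreme point of $\MPS(F)$ with breakpoint $\theta^\ast$, namely $s(\theta)=F(\theta^\ast)/2$ for $\theta<\theta^\ast$ and $s(\theta)=(1+F(\theta^\ast))/2$ for $\theta\geq\theta^\ast$; this is increasing and satisfies the feasibility condition of Theorem~\ref{thm:feasibility}. For this $s$, $\mathrm ds$ is a point mass of size $1/2$ at $\theta^\ast$, so the formula above yields $W(s)-W(1/2)=-\Psi(\theta^\ast)/2>0$, contradicting optimality of total pooling. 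The only subtlety is the integration by parts with a jump discontinuity, which is standard Stieltjes calculus given that $\Psi$ is continuous and $s$ is of bounded variation; I will note this justification explicitly rather than belabor it.
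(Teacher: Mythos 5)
Your proof is correct, and it reaches the result by a more self-contained route than the paper. The paper's proof defines $I(\theta)=\int_0^\theta \frac{1-F(x)}{f(x)}\d F(x)$, observes that the stated condition says exactly that $I$ lies above the chord joining $I(0)=0$ to $I(\bar\theta)=\E[\theta]$ in the quantile space---i.e., that $\conv I$ is the affine function $\E[\theta]F(\theta)$---and then invokes KMS's Proposition~2, which characterizes maximizers of linear functionals over $\MPS(F)$ via the convex hull of the antiderivative of the weight. Your $\Psi$ is precisely $I(\theta)-\E[\theta]F(\theta)$, the gap between $I$ and that chord, so the two arguments rest on the same object; but where the paper delegates both directions to the KMS machinery, you prove them directly: sufficiency from $W(s)-W(1/2)=-\int\Psi\,\d s\le 0$ whenever $\Psi\ge 0$ and $s$ is increasing, and necessity by exhibiting the explicit two-level deviation with breakpoint at a point where $\Psi<0$, which yields the strict improvement $-\Psi(\theta^\ast)/2>0$. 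This buys transparency and independence from the cited result (and the improving deviation is constructive: it identifies which two-tier menu beats pooling), at the cost of handling the Stieltjes integration by parts yourself; your treatment of that is fine, since $\Psi$ is continuous and vanishes at both endpoints, so the boundary terms and any endpoint atoms of $\d s$ drop out.
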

 \begin{proof}
     Define $H(\theta) = \int_0^\theta \frac{1-F(t)}{f(t)} \d F(t)$.
     Then, the condition in the proposition is equivalent to $H(\theta) \geq H(\bar \theta) F(\theta)=\E[\theta]F(\theta)$ (i.e., $H(\theta)$ lies above the line connecting $H(0)=0$ and $H(\bar \theta)=\E[\theta]$ in quantile space).
      Therefore, by Proposition 2 in KMS, this condition is necessary and sufficient for total pooling to be welfare-maximizing.
 \end{proof}

 \begin{remark}
    A sufficient condition is that $\frac{1-F(\theta)}{f(\theta)}$ single-crosses $\E[\theta]$ from above.
    An even stronger sufficient condition is the IFR property, i.e., $\frac{f(\theta)}{1-F(\theta)}$ is increasing.%
 \end{remark}

 \begin{remark}
    This condition is necessary and sufficient for customers' welfare to be higher when a free level is offered than when \emph{any} $k>1$ levels are offered. 
    By contrast, GW's Proposition 1 provides a sufficient condition for customers' welfare to be higher when a free level is offered than when one additional level is introduced.
 \end{remark}

Moreover, fix a type $\theta\in [0,\bar\theta]$, maximizing his utility $U(\theta)$ subject to constraints~\eqref{constraints}--\eqref{eqn:MPS} is equivalent to
\begin{equation}
    \max_{s \in \MPS(F)} \int_0^{\theta} \frac{s(t)}{f(t)} \d F(t) + v(\theta) %
\end{equation}
The following proposition provides a necessary and sufficient condition for total pooling to maximize $U(\theta)$ for all $\theta\in [0,\bar \theta]$.
 
\begin{proposition} \label{prop2}
    Every consumer's utility is the highest when the seller offers one position level if and only if $F(\theta)\leq \theta/\bar \theta$ for all $\theta\in[0,\bar\theta]$ (i.e., $F$ first-order stochastic dominates the uniform distribution). A sufficient condition is that $f(\theta)$ is increasing.
\end{proposition}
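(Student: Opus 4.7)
The plan is to follow the template of Proposition~\ref{prop1} essentially verbatim: reformulate each consumer's utility as a linear functional of $s$ over $\MPS(F)$ and appeal to Proposition~2 of KMS to characterize when the constant function $s \equiv 1/2$ is pointwise optimal. Under the nonnegative-price constraint, (IR) pins down $U(0) = v(0)$, so (IC) gives $U(\theta^*) = v(\theta^*) + \int_0^{\theta^*} s(x)\,dx$. Since $v$ does not depend on $s$, maximizing $U(\theta^*)$ for a \emph{fixed} target type $\theta^* \in [0,\bar\theta]$ is equivalent to maximizing
\[
\int_0^{\theta^*} s(x)\,dx \;=\; \int_0^{\bar\theta} \frac{\mathbf{1}[x \leq \theta^*]}{f(x)}\, s(x)\,\d F(x)
\]
over $s \in \MPS(F)$. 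This is a linear functional of $s$ with cumulative weight
\[
\Phi_{\theta^*}(y) \;\equiv\; \int_0^y \frac{\mathbf{1}[x \leq \theta^*]}{f(x)}\,\d F(x) \;=\; \min(y,\theta^*).
\]

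Applying KMS's Proposition~2 exactly as in the proof of Proposition~\ref{prop1}, total pooling $s \equiv 1/2$ maximizes this functional over $\MPS(F)$ if and only if $\Phi_{\theta^*}$, viewed in the quantile space, lies weakly above the chord connecting $(0,0)$ and $(1,\Phi_{\theta^*}(\bar\theta))$---equivalently, $\min(y,\theta^*) \geq \theta^* F(y)$ for all $y \in [0,\bar\theta]$. For $y \geq \theta^*$ this is automatic, while for $y \leq \theta^*$ it reduces to $F(y) \leq y/\theta^*$. Requiring this simultaneously for \emph{every} target type $\theta^*$ amounts to $F(y) \leq y/\theta^*$ for all $\theta^* \geq y$; the tightest bound arises at $\theta^* = \bar\theta$, yielding the necessary and sufficient condition $F(y) \leq y/\bar\theta$ on $[0,\bar\theta]$. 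Conversely, if $F(y) \leq y/\bar\theta$ for all $y$, then $F(y) \leq y/\bar\theta \leq y/\theta^*$ for every $y \leq \theta^*$, so the characterization is also sufficient for every $\theta^*$.

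For the sufficient condition, if $f$ is increasing then $F$ is convex on $[0,\bar\theta]$ with $F(0)=0$ and $F(\bar\theta)=1$. A standard convex-combination bound then gives
\[
F(\theta) \;=\; F\!\left(\tfrac{\theta}{\bar\theta}\cdot \bar\theta + \bigl(1-\tfrac{\theta}{\bar\theta}\bigr)\cdot 0\right) \;\leq\; \tfrac{\theta}{\bar\theta}\,F(\bar\theta) + \bigl(1-\tfrac{\theta}{\bar\theta}\bigr)\,F(0) \;=\; \tfrac{\theta}{\bar\theta},
\]
which verifies the main condition. I do not expect substantive obstacles: the only delicate step is confirming that KMS's Proposition~2 applies with the indicator weight $w_{\theta^*}(x)=\mathbf{1}[x\leq\theta^*]/f(x)$, but since the feasible set is the same $\MPS(F)$ used in the proof of Proposition~\ref{prop1} and the weight is nonnegative and integrable, the appeal is entirely parallel to the one already in the paper.
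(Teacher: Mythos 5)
Your proposal is correct and follows essentially the same route as the paper: rewrite $U(\theta^*)$ as a linear functional of $s$ over $\MPS(F)$ with cumulative weight $\min(y,\theta^*)$, apply the chord condition from KMS's Proposition~2 as in Proposition~\ref{prop1}, and observe that the binding case across all target types is $\theta^*=\bar\theta$, yielding $F(\theta)\le\theta/\bar\theta$. The only cosmetic difference is in the sufficiency of increasing $f$, which you obtain by noting that convexity of $F$ implies $F(\theta)\le\theta/\bar\theta$ directly, whereas the paper argues that a decreasing weight $1/f$ makes total pooling optimal via Fan--Lorentz; both are valid and equivalent here.
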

\begin{proofsketch}
    $U(\theta) = \int_0^{\theta} \frac{s(t)}{f(t)} \d F(t) + v(\theta)$.
    If $f$ is increasing, then $\frac{1}{f(\theta)}$ is decreasing, so $s=1/2$ maximizes $U(\theta)$.  

    Note that $\int_0^{\theta} \frac{1}{f(t)} \d F(t) = \theta$, and the condition $F(\theta)\leq \theta/\bar \theta$ is equivalent to $\int_0^\theta (\frac{1}{f(t)} - \bar \theta) \d F(t) \geq 0$ for all $\theta\in[0,\bar\theta]$.
    By the same argument as in the proof of Proposition~\ref{prop1}, this condition is necessary and sufficient for $s(\theta)=1/2$ to maximize $U(\theta)$ for all $\theta \in [0,\bar\theta]$.
\end{proofsketch}
\begin{remark}
    Proposition~\ref{prop2} provides the necessary and sufficient condition for all consumers to be worse off after \emph{any} $k>1$ levels of positional goods are offered than if one level is offered (for free). 
    GW's Proposition 2 shows that if $F(c)\leq c/\bar c$, all consumers are worse off after the introduction of \emph{one} additional level of priority service.
\end{remark}

\section{Proofs}
\label{sec:proofs}

\subsection{Proof of Theorem~\ref{thm:feasibility}}
\begin{proof}
    ($\implies$) 
Suppose that an increasing status profile $s$ is induced by some allocation
$\chi\colon\Theta\times\Omega\to\mathbb R_+$.
By Lemma~\ref{lemma:IC}, there exists a participation cutoff
$\theta_0\in[0,\bar\theta]$ such that types below $\theta_0$ do not
participate in any realization, whereas types above $\theta_0$
participate in every realization. Hence,
$s(\theta)=0$ for all $\theta<\theta_0$.

Fix $t\in[\theta_0,\bar\theta]$. For each realization $\omega$, define the
realized status of type $\theta$ by
\[
S_\omega(\theta)
:=
S\bigl(\chi(\theta,\omega),G_\chi(\cdot\mid\omega)\bigr).
\]
Let $S_\omega^\uparrow\colon[\theta_0,\bar\theta]\to[0,1]$ denote the
increasing rearrangement of $S_\omega$ among participants. The rearrangement inequality gives
\[
\int_t^{\bar\theta}S_\omega(\theta) \d F(\theta)
\leq
\int_t^{\bar\theta}S_\omega^\uparrow(\theta) \d F(\theta).
\]
Moreover, in each realized tier, quantile ranks are replaced by their
average. Hence, the full-separation profile $F(\theta)$ majorizes
$S_\omega^\uparrow(\theta)$, so
\[
\int_t^{\bar\theta}S_\omega^\uparrow(\theta) \d F(\theta)
\leq
\int_t^{\bar\theta}F(\theta) \d F(\theta).
\]
Taking expectations over $\omega$ therefore yields
\[
\int_t^{\bar\theta}s(\theta) \d F(\theta)
\leq
\int_t^{\bar\theta}F(\theta) \d F(\theta),
\qquad
\forall t\in[\theta_0,\bar\theta].
\]
Finally, each realized tier preserves total status because it replaces
the quantile ranks assigned to that tier by their average. Summing over
all realized tiers and taking expectations over $\omega$ gives
\[
\int_{\theta_0}^{\bar\theta}s(\theta) \d F(\theta)
=
\int_{\theta_0}^{\bar\theta}F(\theta) \d F(\theta).
\]

    ($\impliedby$) Suppose there exists $\theta_0\in[0,\bar\theta]$ such that $s \in\MPS(F\cdot\mathbf{1}_{[\theta_0,\bar\theta]})$ and $s(\theta)=0$ for all $\theta<\theta_0$.
    I show that $s$ is feasible.

    First, by Theorem 1 in KMS, any extreme point $\tilde s$ of $\MPS(F\cdot\mathbf{1}_{[\theta_0,\bar\theta]})$ such that $\tilde s(\theta)=0$ for all $\theta<\theta_0$ is characterized by a collection of intervals $[\underline\theta_i,\bar\theta_i)\subseteq [\theta_0,\bar\theta]$, indexed by $i\in I$, such that 
    \[
        \tilde s(\theta)
        =
        \frac{1}{F(\bar\theta_i)-F(\underline\theta_i)}
        \int_{\underline\theta_i}^{\bar\theta_i} F(\tilde\theta)\,dF(\tilde\theta)
        =
        \frac{F(\underline\theta_i)+F(\bar\theta_i)}{2},
        \quad \text{ if }
        \theta\in[\underline\theta_i,\bar\theta_i],
    \]
    and $\tilde s(\theta)=F(\theta)$ if 
        $\theta\notin\bigcup_{i\in I} [\underline\theta_i,\bar\theta_i)$.

    Each such extreme point is implementable by an increasing deterministic allocation $\chi\colon \Theta \to \reals_+$. 
    For all $\theta<\theta_0$, assign the outside option $\chi(\theta)=0$. 
    For all $\theta\geq\theta_0$, choose a (weakly) increasing $\chi$ such that: on each pooling interval $[\underline\theta_i,\bar\theta_i)$, $\chi(\theta)=x_i>0$; outside the pooling intervals, assign a strictly increasing $\chi(\theta)$.  
    Under the status specification 
    \[ S(x,G)= 
    \begin{cases}
        \dfrac{G^-(x)+G(x)}{2} & \text{if } x>0,\\
        0 & \text{if } x=0,
    \end{cases} \] 
    this deterministic allocation induces the extreme point $\tilde s$.
    Indeed, if $\chi$ is strictly increasing at $\theta$, then $G_\chi(\chi(\theta))
=
G_\chi^-(\chi(\theta))
=
F(\theta)$.
    If $\theta\in[\underline\theta_i,\bar\theta_i)$, then $G_\chi(\chi(\theta))=F(\bar\theta_i)$ and $G_\chi^-(\chi(\theta))=F(\underline\theta_i)$.
    
    Finally, I show that $s$ is feasible.
    By Proposition 1 in KMS, there exists a probability measure $\lambda_s$ supported on the extreme points of $\MPS(F\cdot\mathbf{1}_{[\theta_0,\bar\theta]})$, denoted by $\mathcal{E}$, such that $s=\E[\tilde s\mid 
    \tilde s\sim \lambda_s]$ for all $\theta\in[\theta_0,\bar\theta]$.    
    Since $\omega\sim \text{Unif}[0,1]$, there exists a measurable function $\tilde{S}\colon [0,1]\to \mathcal{E}$ such that $\tilde{S}(\omega)\sim \lambda_s$.  
    In other words, the random variable $\omega$ draws an extreme point $\tilde{S}(\omega)$ of $\MPS(F\cdot\mathbf{1}_{[\theta_0,\bar\theta]})$ according to $\lambda_s$.

    For each extreme point $\tilde s$, let $\chi_{\tilde s}$ be a deterministic allocation that induces $\tilde s$. 
     Define the stochastic allocation $\chi(\theta,\omega) = \chi_{\tilde{S}(\omega)}(\theta)$.
     Conditional on the realization of $\omega$, the allocation $\chi(\cdot,\omega)$ induces $\tilde{S}(\omega)$. Therefore, for every $\theta\in[\theta_0,\bar\theta]$,
     \[ \E_\omega  [ S (\chi(\theta,\omega),G_\chi(\cdot\mid\omega) )  ]  = \E_\omega[\tilde{S}(\omega)(\theta)] = \int_{\mathcal{E}} \tilde s(\theta)\d\lambda_s(\tilde s) = s(\theta). \] 
    For $\theta<\theta_0$, we have $s(\theta)=0$ is induced by $\chi(\theta,\omega)=0$ because $S(0,G_\chi)=0$.
    Thus, $s$ is feasible.
\end{proof}

\subsection{Proofs of Section~\ref{exclusion}}
\label{app:proofs_2}
\subsubsection{Proofs of Section~\ref{sec:revenue}}

\begin{proof}[Proof of Proposition~\ref{prop:revmax}]
    Because it is optimal for the seller to set $U(\theta_0)=0$ and $p(\theta_0) = \theta_0 s(\theta_0) + v(\theta_0)$, revenue reduces to
    \begin{equation} \label{eqn:revmax}
        R(\theta_0) =  \int_{\theta_0}^{\bar \theta} \left( J(\theta) s(\theta)  + v(\theta) - \frac{1-F(\theta)}{f(\theta)} v'(\theta) \right) \d F(\theta).
    \end{equation}
    The following lemma follows from applying Theorem 4 (Fan--Lorentz) in KMS to $R(\theta_0)$.
    \begin{lemma}\label{lem:fanlorentz}
        Assume $J(\theta)$ is increasing. For any two incentive-compatible mechanisms $(s_1(\theta),p_1(\theta))$ and $( {s}_2(\theta),{p}_2(\theta))$ with the same cutoff $\theta_0$ and $U(\theta_0)=0$, if $s_1\succ s_2$ (in the majorization order), then $(s_1(\theta),p_1(\theta))$ generates higher revenue than $({s}_2(\theta),{p}_2(\theta))$.
    \end{lemma}

    As the seller offers more levels of positional goods, it is feasible to offer a finer allocation (in terms of majorization) with the same exclusion level $\theta_0$, which by Lemma~\ref{lem:fanlorentz}, increases revenue.

    Thus, the revenue-maximizing mechanism fully separates the participants (i.e., $s(\theta)=F(\theta)\cdot \mathbf{1}_{\theta\geq \theta_0^*}$) and the optimal cutoff type $\theta_0^*$ is given by
    \begin{align*}
        \theta_0^* 
        \in \argmax_{\theta_0}  \int_{\theta_0}^{\bar \theta} J(\theta) F(\theta) \d F(\theta) + v(\theta_0)(1-F(\theta_0)).
    \end{align*}
    
    If $J(\theta)$ is not monotonic, define $\tilde{J}(\theta;\theta_0) = \int_{\theta_0}^\theta J(t) \d F(t)$ and apply the ironing technique \citep{Toikka2011,KleinerMoldovanuStrack2021}.
    Let $$K(\tau) = \conv \tilde J(F^{-1}(\tau))$$ denote its convex hull on $[F(\theta_0),1]$ in quantile space $\tau = F(\theta)$.
    Then, the revenue-maximizing mechanism excludes types below $\theta^*_0$ and 
    pools types into the same positional good level if $K$ is affine and separates types if $K\circ F = \tilde J$.
    The optimal cutoff type is given by $\theta^*_0 \in \argmax_{\theta_0}  \int_{\theta_0}^{\bar \theta} F(\theta) \d K(F(\theta)) + v(\theta_0)(1-F(\theta_0))$.
\end{proof}

\begin{proof}[Proof of Corollary~\ref{cor:exclusion}]

    The following lemma shows that $J_v(\theta)$ is increasing if $J(\theta)$ is increasing under the assumptions on $v(\theta)$.
    \begin{lemma} \label{lem:Jv}
        If $J(\theta)$ is increasing, \(v'(\theta)\geq 0\), and \(v''(\theta)\le 0\), then
        $J_v(\theta)=v(\theta)-\frac{1-F(\theta)}{f(\theta)}v'(\theta)$
        is increasing.
    \end{lemma}
    \begin{proof}
    Define the inverse hazard rate of $F$ as $
    h(\theta)=\frac{1-F(\theta)}{f(\theta)}.$
    Thus,
    \[
    \begin{aligned}
    J_v'(\theta)
    &=
    v'(\theta)-h'(\theta)v'(\theta)-h(\theta)v''(\theta) \\
    &=
    J'(\theta)v'(\theta)-h(\theta)v''(\theta)\geq0
    \end{aligned}
    \]
    because \(J'(\theta)\ge 0\), \(v'(\theta)\geq 0\), \(h(\theta)>0\), and \(v''(\theta)\le 0\).
    Hence, $J_v(\theta)$ is increasing. 
    \end{proof}
    
    Define the marginal revenue function as $\psi(\theta)= J(\theta)
    F(\theta) + J_v(\theta)$.
    Let $\theta_1 = \inf\{\theta\in[0,\bar\theta]\colon J(\theta)> 0\}$ denote the optimal cutoff type in the absence of intrinsic value (and hence $J_v(\theta) \equiv 0$), which satisfies $J(\theta_1) = 0$.
    We have 
    \begin{equation}
        \psi(\theta_1) = J(\theta_1) F(\theta_1)  + J_v(\theta_1) = J_v(\theta_1) = v(\theta_1) - \theta_1 v'(\theta_1) \geq v(0) \geq 0,
    \end{equation}
    by the concavity of $v(\theta)$.
    Moreover, because $J(\theta)$ and $J_v(\theta)$ are increasing, we have
    \begin{equation}
        \psi'(\theta) = J'(\theta)F(\theta) + J(\theta)f(\theta) + J_v'(\theta) > 0 \mbox{  for all $\theta>\theta_1$.}
    \end{equation}
    Thus, we have $\psi(\theta)> 0$ and thus $R'(\theta)< 0$ for all $\theta>\theta_1$.
    Therefore, the optimal cutoff satisfies $\theta^*_0 \leq \theta_1$.

    In particular, if $v(\theta)=\alpha\theta$ for some $\alpha\geq0$, then $J_v(\theta) = \alpha J(\theta)$, so $\psi(\theta) = (F(\theta)+\alpha) J(\theta)$ has the same sign as $J(\theta)$. Thus, $\theta_0^* = \theta_1$.

    (ii) Because $J(\theta)$ is increasing, $J(\theta)F(\theta) \geq J(0) F(\bar\theta)= -1/f(0)$.
    Thus, if $J_v(0)-1/f(0) = v(0)- (v'(0)+1)/f(0) \geq 0$, the integrand is always nonnegative, so zero exclusion is optimal (i.e., $\theta_0^* =0$).
\end{proof}

\subsubsection{Proofs of Section~\ref{sec:approx}}

\begin{proof}[Proof of Corollary~\ref{cor:sep}]
Fix a participation cutoff $\theta_0$.
In quantile space, let $I(\tau)=\tilde J(F^{-1}(\tau)) = \int_{\theta_0}^{F^{-1}(\tau)} J(\theta) \d F(\theta)$ for all $\tau\in[F(\theta_0),1]$.
The following lemma establishes a useful property of chords ending at the upper endpoint.
\begin{lemma}\label{lem:chord}
    For all $\tau\in[F(\theta_0),1)$, the chord joining $(\tau,I(\tau))$ and $(1,I(1))$ has slope $F^{-1}(\tau)$ strictly increasing in $\tau$.
\end{lemma}

\begin{proof}
    For any $\theta<\bar\theta$,   we have
\begin{align}
\tilde J(\bar\theta)-\tilde J(\theta)=
\int_{\theta}^{\bar\theta}
\left(
t-\frac{1-F(t)}{f(t)}
\right)
 \d F(t) =
\theta(1-F(\theta)).
\label{eq:terminal-chord}
\end{align}
If $\bar\theta=\infty$, the integration-by-parts argument uses
$\lim_{\theta\to\infty}\theta(1-F(\theta))=0$, which follows from the
finite-mean assumption. 

Therefore, in quantile space, the slope of the chord joining $(\tau,I(\tau))$ and $(1,I(1))$ is
\[ \frac{I(1)-I(\tau)}{1-\tau} = F^{-1}(\tau)\]
which is strictly increasing in $\tau$.
\end{proof}

Based on Lemma~\ref{lem:chord},
I first show it is never optimal to pool a nondegenerate interval of highest types.
Suppose, toward contradiction, that there exists a maximal pooling interval of highest types
$[\hat\theta,\bar\theta]$, where $\hat\theta<\bar\theta$. Then, 
by Proposition~\ref{prop:revmax}, $K(\tau)$ is affine on $[F(\hat\theta),1]$ and coincides with $I(\tau)$ at both endpoints: $
K(1)=I(1)$ and $K(F(\hat\theta))=I(F(\hat\theta))$.
Hence, the slope of $K$ on $[F(\hat\theta),1]$ is
\[
\frac{I(1)-I(F(\hat\theta))}{1-F(\hat\theta)}
= 
\hat\theta.
\]
Now take any $\hat\theta'\in(\hat\theta,\bar\theta)$ and let
$\tau'=F(\hat\theta')$. Since $K$ is affine with slope $\hat\theta$ and
$K(1)=I(1)$, we have
\[
K(\tau')
=
I(1)-\hat\theta(1-\tau')< I(1)-\hat\theta'(1-\tau')=I(\tau')
\]
The strict inequality follows from $\hat\theta'>\hat\theta$.
This contradicts the fact that $K$ is the lower convex hull of $I$. Hence, the optimal mechanism cannot pool
a nondegenerate interval of highest types.

Let $R_n$ denote the maximum revenue when the seller is restricted to at most $n$ tiers.
Consider an optimal $n$-tier mechanism. By Lemma~\ref{lem:chord}, splitting its highest tier into two replaces the chord ending at $\tau=1$ with two chords that lie strictly below the original chord, thereby strictly increasing revenue. Therefore, $R_n < R_{n+1}$ for all $n\geq 1$.
\end{proof}

\begin{proof}[Proof of Proposition~\ref{prop:approx}]
    Denote by $p$ the price of the positional good.
    The cutoff type $\theta(p)$ indifferent between buying and not buying is given by 
    \[\theta(p)  \frac{1+F (\theta(p))}{2}   -p + v(\theta(p)) = 0 \implies p  = \theta(p)   \frac{1+F (\theta(p))}{2}   + v(\theta(p)).\]
    Because $v'(\theta)\geq0$, all types above $\theta(p)$ will buy the good.
    Thus,  by selling a single good, the seller's {maximum} revenue is
    \begin{equation} \label{eqn:R1}
        R_1 =  \max_\theta    \left( \theta \frac{1+F (\theta )}{2} + v(\theta ) \right) (1-F(\theta)) \geq  \frac{1}{2}\max_\theta (\theta + v(\theta))(1-F(\theta)) 
    \end{equation}
    because $\frac{1+F (\theta )}{2}\theta\geq \frac{1}{2}\theta$.
    Assume first for simplicity $v(\theta)=0$, so we have 
    $R_1  \geq  \frac{1}{2}\max_\theta \theta(1-F(\theta))$.
    
    Consider the auxiliary problem of selling an indivisible good to one buyer, in which a standard extreme-point argument implies a posted-price mechanism is optimal \citep[see, e.g.,][Proposition 2.5]{Borgers2015}. 
    Formally, let $q$ denote the probability that the agent receives the good, and let $\mathcal M= \{q\colon [0,\bar \theta ]\to [0,1] \mid q \text{ increasing}\}$ denote the set of incentive-compatible allocations.
    The agent's payoff is $u(p,q, \theta) = \theta q - p$.
    Then, the standard argument implies that 
    $$\max_{q\in\mathcal M} \int_0^{\bar \theta} J(\theta) q(\theta) \d F(\theta) = \max_p p(1-F(p)),$$ 
    and a posted-price mechanism $q^*(\theta) = \mathbf1_{\theta\geq p}$, where $p^*\in \argmax_p p(1-F(p))$, which is an extreme point of $\mathcal M$, is optimal.
    Further, any optimal mechanism $\tilde q$ must satisfy $\tilde q(\theta)=1$ on $[\tilde p,\bar\theta]$ for some $\tilde p<\bar\theta$

    Let $R^*$ denote the maximum revenue from selling positional goods.
    Because $\MPS_0(F) \subseteq {\mathcal M}$, we have
    \[  R^*  = \max_{s \in \MPS_0(F)} \int_{0}^{\bar \theta}  J(\theta) s(\theta) \d F(\theta) < \max_{q\in {\mathcal  M}} \int_0^{\bar \theta} J(\theta) q(\theta) \d F(\theta)
    = \max_p p(1-F(p)) \leq 2R_1.\]
    The inequality is strict because any maximizer of the auxiliary problem assigns $q(\theta)=1$ to a positive measure of types, which is not in the feasible set $\MPS_0(F)$.

    Now consider the general case where $v(\theta)\geq0$ and $v'(\theta)\geq0$. 
    I show that the same bound continues to hold.
    We have
    \begin{align*}
        R^*
        &= \max_{s \in \MPS(F\cdot\mathbf{1}_{[\theta_0,\bar\theta]}),\theta_0} \left[ \int_{\theta_0}^{\bar \theta} J(\theta) s(\theta)   \d F(\theta) + v(\theta_0)(1-F(\theta_0)) \right] \\
        &<\max_{\theta_0} \max_{q\in {\mathcal  M}}\left[ \int_{\theta_0}^{\bar \theta}  J(\theta) q(\theta) \d F(\theta) + v(\theta_0)(1-F(\theta_0)) \right] \\
        &\leq \max_{\theta_0} \left[ \max_{\theta\geq \theta_0} \theta (1-F(\theta)) + v(\theta_0)(1-F(\theta_0)) \right].
    \end{align*}
    The first strict inequality follows from the fact that every maximizer of the relaxed problem assigns $q(\theta)=1$ to a positive measure of types, which is not feasible in the original problem.
    The second inequality follows from the optimality of the posted-price mechanism.

    In order to establish that $R^* < 2 R_1$, it remains to show the following claim:
    \begin{claim*}
    Fix any $\theta_0$, we have $2 R_1 \geq \theta (1-F(\theta)) +
    v(\theta_0)(1-F(\theta_0))$ for all $\theta\geq \theta_0$.
    \end{claim*}
    
    \begin{proof}[Proof of the claim]
    For all $\theta\geq \theta_0$, because $v(\theta)$ is increasing, we have $v(\theta)\geq v(\theta_0)$ and thus
        \[
        R_1
        \geq
        \left(
        \theta\frac{1+F(\theta)}2
        +
        v(\theta)
        \right)(1-F(\theta))
        \geq
        \left(
        \theta\frac{1+F(\theta)}2
        +
        v(\theta_0)
        \right)(1-F(\theta)).
        \]
    Also, 
    \[
     R_1
    \geq
    \left(
    \theta_0\frac{1+F(\theta_0)}2
    +
    v(\theta_0)
    \right)(1-F(\theta_0))
    \geq
    v(\theta_0)(1-F(\theta_0)).
    \]
    There are two cases.
    First, if $v(\theta_0)(1-F(\theta_0))\geq \theta (1-F(\theta))$, then the second inequality above implies
    $$2 R_1 \geq 2 v(\theta_0)(1-F(\theta_0)) \geq \theta (1-F(\theta)) + v(\theta_0)(1-F(\theta_0)).$$
    Otherwise, if $v(\theta_0)(1-F(\theta_0))< \theta (1-F(\theta))$, then
    \[
    0\leq v(\theta_0)
<
\frac{\theta(1-F(\theta))}{1-F(\theta_0)}.
    \]
It is enough to show that
\[
\left(
\theta\frac{1+F(\theta)}{2}
+
v(\theta_0)
\right)(1-F(\theta))
\geq
\frac{1}{2}
\left[
\theta(1-F(\theta))
+
v(\theta_0)(1-F(\theta_0))
\right],
\]
or equivalently,
\[
\frac{1}{2}
\left(
\theta F(\theta)(1-F(\theta))
+
v(\theta_0)
\left[
2(1-F(\theta))-(1-F(\theta_0))
\right]
\right)\geq0.
\]
This expression is affine in $v(\theta_0)$. Therefore, 
it is enough to check the two endpoints $v(\theta_0)=0$ and $v(\theta_0)=\frac{\theta(1-F(\theta))}{1-F(\theta_0)}$.
At $v(\theta_0)=0$, the expression equals
\[
\frac{1}{2}\theta F(\theta)(1-F(\theta))\geq 0.
\]
At $v(\theta_0) = \frac{\theta(1-F(\theta))}{1-F(\theta_0)}$,
the expression equals
\[
\begin{aligned}
&\frac{1}{2}
\left[
\theta F(\theta)(1-F(\theta))
+
\frac{\theta(1-F(\theta))}{1-F(\theta_0)}
\left[
2(1-F(\theta))-(1-F(\theta_0))
\right]
\right] 
=
\frac{1}{2}\theta(1-F(\theta))^2
\frac{1+F(\theta_0)}{1-F(\theta_0)}
\geq 0.
\end{aligned}
\]
Thus,
\[
\left(
\theta\frac{1+F(\theta)}{2}
+
v(\theta_0)
\right)(1-F(\theta))
\geq
\frac{1}{2}
\left[
\theta(1-F(\theta))
+
v(\theta_0)(1-F(\theta_0))
\right].
\]
Since $R_1$ is at least the left-hand side, we have
\[
2R_1
\geq
\theta(1-F(\theta))
+
v(\theta_0)(1-F(\theta_0)).
\]
Therefore, the desired inequality holds for every $\theta\geq\theta_0$.
\end{proof}
Hence, we have established that $ R_1 > \frac{1}{2} R^*$.
\end{proof}

\subsubsection{Proofs of Section~\ref{sec:welfare}}
\begin{proof}[Proof of Proposition~\ref{prop:consumer}]
    (i) 
    First, I show the following lemma.
    The lemma follows from applying Fan--Lorentz theorem to 
    \begin{align}
    W(\theta_0) =   \int_{\theta_0}^{\bar \theta} \left( \frac{1-F(\theta)}{f(\theta)} \right)  (s(\theta)+ v'(\theta))  + U(\theta_0)  \d F(\theta).
    \end{align}
    \begin{lemma}\label{lem:fanlorentz2}
        For any two incentive-compatible mechanisms $(s_1(\theta),p_1(\theta))$ and $( {s}_2(\theta),{p}_2(\theta))$ with the same cutoff $\theta_0$ and $U(\theta_0)=0$, if $s_1\succ s_2$, then $(s_1(\theta),p_1(\theta))$ generates lower (higher) consumer surplus than $({s}_2(\theta),{p}_2(\theta))$ if $F$ satisfies IFR (DFR).
    \end{lemma}

    The lemma implies that, fixing the exclusion level $\theta_0$, the consumer surplus decreases (increases) when the status profile becomes finer under IFR (DFR).

    By the same argument as in Corollary~\ref{cor:exclusion}, if $v(0) \geq \frac{v'(0)+1}{f(0)}$, the optimal exclusion level is $\theta_0^*=0$.
    Alternatively, we have $\theta_0^*=0$ if the basic tier is offered for free.

    (ii)
    When the number of levels is unconstrained, because $J(\theta)$ is increasing, the seller chooses $s^*(\theta)=F(\theta)$ for all $\theta\in[\theta_0,\bar\theta]$.
    Thus, consumer surplus is given by
\begin{align*} 
   W(\theta_0) 
   =  \int_{\theta_0}^{\bar \theta}  \frac{1-F(\theta)}{f(\theta)}  (F(\theta)+ v'(\theta))    \d F(\theta),
\end{align*}
which is decreasing in $\theta_0$.

    When the number of levels is constrained to a single level, I show a stronger result: holding the status profile of consumers of higher tiers than the basic tier fixed, decreasing the exclusion level $\theta_0$ increases consumer surplus under IFR.
    The single-tier case is covered because there are no tiers above the basic tier.

    Fix an allocation $s(\theta)$ and its exclusion level $\theta_0$.
    If $s(\theta)$ is \emph{strictly} increasing at $\theta_0$, then decreasing $\theta_0$ to $\theta_0-\varepsilon$ increases the consumer surplus for all $\theta\in[\theta_0-\varepsilon,\theta_0)$ and does not affect the status for all $\theta\in[\theta_0,\bar \theta]$.
    Hence, the consumer surplus increases.

    If $s(\theta)$ is constant in a neighborhood of $\theta_0$, there exists some $\theta_1\in(\theta_0,\bar\theta]$ such that $s(\theta)=s_0$ is constant on $[\theta_0,\theta_1]$ and $s(\theta)> s_0$ for $\theta>\theta_1$.
    Thus, consumer surplus is
    \begin{equation*}
        W(\theta_0)  =    \int_{\theta_0}^{\theta_1}   \frac{1-F(\theta)}{f(\theta)} \left( v'(\theta) +F(\theta_1) /2  +  F(\theta_0)/2 \right)   \d F(\theta)  
        + \int_{\theta_1}^{\bar \theta}   \frac{1-F(\theta)}{f(\theta)} \left( v'(\theta) + s(\theta) \right)   \d F(\theta),
   \end{equation*}
   with derivative given by
   \begin{equation} \label{eqn:W'}
    W'(\theta_0) = - (1-F(\theta_0))\left( v'(\theta_0) + \frac{F(\theta_1) + F(\theta_0)}{2} \right) + \frac{f(\theta_0)}{2} \int_{\theta_0}^{\theta_1} (1-F(\theta)) \d\theta
   \end{equation}

   Finally, to prove that $W'(\theta_0)\leq 0$ under IFR, I show that $W'(\theta_0)\leq 0$ holds under a weaker condition.
   \begin{lemma} \label{lem:W'<0}
    $W'(\theta_0)\leq 0$ if  
    \begin{equation} \label{cond:W'<0}
        \frac{1}{F(\theta_1)-F(\theta_0)} \int_{\theta_0}^{\theta_1} \frac{1-F(t)}{f(t)} \d F(t) \leq \frac{1-F(\theta_0)}{f(\theta_0)}.
    \end{equation}
    The condition holds under IFR.   
    \end{lemma}
    \begin{proof}
        Rearranging the inequality yields
        \[
        \frac{f(\theta_0)}2\int_{\theta_0}^{\theta_1}(1-F(t))\d t
\leq
(1-F(\theta_0)) \frac{F(\theta_1)-F(\theta_0)}{2}.
        \]
    Substituting this into \eqref{eqn:W'} yields
\[
\begin{aligned}
W'(\theta_0)
&   \leq
-(1-F(\theta_0))
\left[
v'(\theta_0)+\frac{F(\theta_0)+F(\theta_1)}{2}
\right]
+
\frac{1-F(\theta_0)}{2}
\left[F(\theta_1)-F(\theta_0)\right] \\
& =
-(1-F(\theta_0))
\left[
v'(\theta_0)+F(\theta_0)
\right]\leq 0.
\end{aligned}
\]
The last inequality is strict for all $\theta\in(0,\bar\theta)$.

If $F$ satisfies IFR, then $\frac{1-F(t)}{f(t)}$ is decreasing, so condition~\eqref{cond:W'<0} holds.
    \end{proof}
By the lemma above, IFR implies $W'(\theta_0)\leq 0$.
Hence, decreasing $\theta_0$ increases the consumer surplus $W(\theta_0)$ in both cases.

Finally, the following counter-example shows that when IFR fails, reducing exclusion may lower consumer surplus.
\begin{example*} 
Assume \(v(\theta)=0\), \(\Theta = [1,\infty)\), and
\(
F(\theta)
=
1-
\theta^{-2}\exp\left[-\left(1-\frac1\theta\right)\right]
\).
The failure rate is
\[
\frac{f(\theta)}{1-F(\theta)}
=
\frac{2}{\theta}+\frac{1}{\theta^2},
\]
which is decreasing.

If the seller offers a single tier, revenue is
\[
R(\theta_0)
=
\frac{\theta_0}{2}
(1-F(\theta_0))(1+F(\theta_0)).
\]
Optimizing over $\theta_0$, the optimal exclusion cutoff is $\theta_0^*\approx 1.08$.

For consumer surplus
$W(\theta_0)
=
\frac{1+F(\theta_0)}{2}
\int_{\theta_0}^{\infty}(1-F(t))\d t$, we have
\[
W'(\theta_0)
=
\frac{1-F(\theta_0)}{2}
\left[
\frac{f(\theta_0)}{1-F(\theta_0)} e^{-1}\left(e^{1/\theta_0}-1\right)
-
(1+F(\theta_0))
\right].
\]
Evaluating at the seller-optimal cutoff gives $W'(\theta_0^*)\approx 0.12>0$.
Thus, reducing exclusion lowers consumer surplus.
\end{example*}
\end{proof}

\begin{proof}[Proof of Proposition~\ref{prop:welfaremax}]
First, I show that exclusion is suboptimal. Consider any feasible
mechanism $(s(\theta),p(\theta))$ with participation cutoff $\theta_0>0$. Since
$U(\theta_0)=0$, for every $\theta\geq\theta_0$,
\[
U(\theta)
=
v(\theta)-v(\theta_0)
+\int_{\theta_0}^{\theta}s(t)\d t
\]

Construct a new mechanism $(\tilde s(\theta), \tilde p(\theta))$ that sets $\tilde s(\theta)=F(\theta)$ for all $\theta<\theta_0$ and $\tilde s(\theta) =s (\theta)$ for all  $\theta\geq\theta_0$.
In words, it serves each previously excluded type at a distinct tier below all original participants, while keeping original participants' status unchanged. This is feasible because the newly served types are strictly separated below
the original participants and therefore do not affect their status.

Setting $\tilde p(0)=0$ and $\tilde p(\theta)= \int_{0}^{\theta} t \d \tilde s(t)$, we have
\[
\widetilde U(\theta)
=
v(\theta)+\int_0^\theta F(t)\d t\geq0,
\qquad \mbox{ for all } \theta<\theta_0,
\]
and, for all $\theta\geq\theta_0$,
\[
\widetilde U(\theta)
=
U(\theta)+v(\theta_0)+\int_0^{\theta_0}F(t)\d t
>
U(\theta).
\]
Thus, new participants are better off than not participating, while original participants are also strictly better off (because they pay less), so the new mechanism yields strictly higher consumer surplus. Hence, exclusion is never optimal.

    Therefore, welfare maximization under the nonnegative price constraint reduces to the following problem:
    $$\max_{s\in\MPS(F)} \int_{0}^{\bar \theta} \left( \frac{1-F(\theta)}{f(\theta)}  (s(\theta)+ v'(\theta))  + v(0)  \right)  \d F(\theta).$$
    Let $H(\theta) = \int_0^\theta \frac{1-F(t)}{f(t)} \d F(t)$ and $\conv H$ denote its convex hull in quantile space.
    By Proposition 2 in KMS, we have
    \begin{enumerate}
        \item If $\frac{1-F(\theta)}{f(\theta)}$ is decreasing (i.e., IFR), then $s^*(\theta)=1/2$.
        \item If $\frac{1-F(\theta)}{f(\theta)}$ is increasing (i.e., DFR), then $s^*(\theta) = F(\theta)$.
        \item If $\frac{1-F(\theta)}{f(\theta)}$ is not monotonic, then $s^*$ separates types when $\conv H = H$ and pools them at the same level otherwise.
    \end{enumerate}
    When the failure rate is single-peaked (single-dipped), $H= \int_0^\theta \frac{1-F(t)}{f(t)} \d F(t)$ is concave-convex (convex-concave), and the results in (iii) and (iv) follow immediately from 3.
\end{proof}

\subsection{Proofs of Section~\ref{sec:extensions}}

\begin{proof}[Proof of Observation~\ref{obs:gamma=0}]
The proof follows from the observation below:
\begin{lemma}
 If $z(\theta)$ single-crosses zero from below, the problem
\[
\max_{s\in\mathcal I(\lbar F,\bar F)}\int_{0}^{\bar\theta} z(\theta)s(\theta) \d F(\theta)
\quad (\text{subject to $s$ increasing, and }   \lbar F(\theta)\le s(\theta)\leq \bar F(\theta))
\]
has an optimal solution given by   
 $s^*(\theta) = \lbar F(\theta)$ if $z(\theta)<0$ and $s^*(\theta) = \bar F(\theta)$ if $z(\theta)>0$.
\end{lemma}
 (i) For welfare maximization under the nonnegative price constraint, $z(\theta)=\frac{1-F(\theta)}{f(\theta)}>0$; for welfare maximization without the constraint and subject to budget balance, $z(\theta)=\theta>0$.
 Thus, in either case, $s^*(\theta) = F(\theta)$ for all $\theta\in[0,\bar\theta]$.

 (ii) For revenue maximization, $z(\theta)=J(\theta)$, and the assumption $v(\theta)=\alpha\theta$ implies that $J_v(\theta)=\alpha J(\theta)$ and that the objective is $\int_{0}^{\bar\theta} J(\theta)(s(\theta)+\alpha) \d F(\theta)$.
 The result follows from the single-crossing assumption on $J(\theta)$.
\end{proof}

\begin{proof}[Proof of Observation~\ref{obs:gamma=1}]
    The proof follows from the lemma below:
    \begin{lemma}
The problem \[
    \max_{s \text{ increasing},\; \theta_0}\int_{\theta_0}^{\bar\theta} z(\theta)s(\theta) \d F(\theta)
\]
subject to
\[ s(\theta)=0 \quad\text{for all }\theta<\theta_0, \qquad F(\theta)\le s(\theta)\le 1 \quad\text{for all }\theta\in[\theta_0,\bar\theta] \]
has an optimal solution of the form $ s^*(\theta)=\mathbf 1_{\theta\ge\theta_0^*}$ for some $\theta_0^*\in[0,\bar\theta]$.
\end{lemma}
\begin{proof}
    Consider a relaxed problem
    \[\max_{s\in \mathcal M}\int_{0}^{\bar\theta} z(\theta)s(\theta) \d F(\theta),\]
    where the constraint on $s$ is relaxed to $s\in \mathcal M \equiv \{s\colon \Theta\to [0,1] \mid s \text{ increasing} \}$.
    Then, a
standard extreme-point argument implies that a posted-price mechanism $s(\theta)=\mathbf 1_{\theta\geq \theta_0^*}$ for some $\theta_0^*\in[0,\bar\theta]$ is optimal.
Because it satisfies the original constraints, it is also optimal for the original problem.
\end{proof}

(i) For welfare maximization, because $z(\theta)>0$ for all $\theta\in[0,\bar\theta]$, we have $\theta_0^*=0$.

(ii) For revenue maximization, because the maximized revenue is $\int_{\theta_0}^{\bar\theta} J(\theta) \d F(\theta)= \theta_0(1-F(\theta_0))$, we have $\theta_0^*<\bar\theta$.
\end{proof}

\begin{proof}[Proof of Proposition~\ref{prop:convex_concave}]
    I prove the result for revenue maximization (i.e., $z=J$)  only; other cases ($z=J_\lambda$ and $z=\frac{1-F(\theta)}{f(\theta)}$) can be proven analogously.

    The revenue maximization problem can be written as
    \begin{equation}
      \text{[M]}\qquad  \sup_{{\tilde s \in \overline{\conv} \mathcal F(\theta_0)},\,\theta_0} \int_{\theta_0}^{\bar \theta} J(\theta) \tilde s(\theta)  \d F(\theta) + v(\theta_0) (1-F(\theta_0)).
    \end{equation}
    For any fixed $\theta_0$, the objective is linear in $\tilde s$.
    Thus, the supremum over a set $\mathcal F(\theta_0)$ is the same as the supremum over its closed convex hull.
    Thus, the problem can be rewritten as
    \begin{equation}
       \begin{aligned}
        \text{[M]}\qquad  
        \sup_{s\in\ext\MPS(F\cdot \mathbf1_{[\theta_0,\bar\theta]}), \theta_0} \int_{\theta_0}^{\bar \theta} J(\theta) \phi(s(\theta))  \d F(\theta) + v(\theta_0) (1-F(\theta_0)).
        \end{aligned}
    \end{equation}
    Now consider a \emph{relaxed} problem [R] where the constraint is relaxed to $s \in \MPS(F\cdot \mathbf1_{[\theta_0,\bar\theta]})$.
        \begin{equation*}
        \text{[R]}\qquad \max_{s\in\MPS(F\cdot \mathbf1_{[\theta_0,\bar\theta]}), \theta_0} \int_{\theta_0}^{\bar \theta} J(\theta) \phi(s(\theta))  \d F(\theta) + v(\theta_0) (1-F(\theta_0)).  
        \end{equation*}
    The rest of the proof solves the relaxed problem [R] for each $\theta_0$ and then shows that its solution is an extreme point of $\MPS(F\cdot \mathbf1_{[\theta_0,\bar\theta]})$.
    Hence, it is also the solution to the original problem [M].

    (i) Suppose $\phi$ is strictly increasing and convex. Let $\theta_0^J =  \inf\{\theta\in[0,\bar\theta]\colon J(\theta)> 0\}$.
        Because $v(\theta)=\alpha\theta$, we have $J_v(\theta) = \alpha J(\theta)$, so the objective can be written as
        \[
        \int_{\theta_0}^{\bar \theta} J(\theta) \phi(s(\theta)) + J_v(\theta) \d F(\theta) 
        = \int_{\theta_0}^{\bar \theta} J(\theta) \left( \phi(s(\theta)) + \alpha  \right) \d F(\theta),
        \]
        where the integrand is positive if and only if  $\theta>\theta_0^J$.
        Thus, the optimal cutoff type is $\theta_0^J$.
    
    For all $\theta\geq \theta_0^J$, the kernel $K(s,\theta)=J(\theta)\phi(s)$ is convex in $s$ and supermodular in $(s,\theta)$, since
    $J(\theta)$ is positive and increasing. 
    Fix a participation cutoff $\theta_0\geq\theta_0^J$. By the Fan--Lorentz theorem, a finer
    status profile weakly increases expected revenue, so the revenue-maximizing status profile is 
    $ s^*(\theta)=F(\theta) \cdot \mathbf 1_{[\theta_0,\bar\theta]}$.

        Therefore, $s^*(\theta)=F(\theta) \cdot \mathbf 1_{[\theta_0^J,\bar\theta]}$ is the solution to the relaxed problem [R].
 Because $s^*(\theta)$ is an extreme point of $\MPS(F\cdot \mathbf1_{[\theta_0,\bar\theta]})$, it is also the solution to the original problem [M].

    (ii)  Suppose $\phi$ is strictly increasing and concave, and that
$J(\theta)\phi'(F(\theta))$ is increasing. Since $\phi'>0$,
$J(\theta)\phi'(F(\theta))$ has the same sign as $J(\theta)$.
Therefore, $J$ also single-crosses zero from below. 
For the same reason as in (i), because $v(\theta)=\alpha\theta$, the objective in the relaxed problem is
\[
\int_{\theta_0}^{\bar \theta} J(\theta) \left( \phi(s(\theta)) + \alpha  \right) \d F(\theta)
\]
and the optimal cutoff type is $\theta_0^J$.

Consider any feasible status profile
$s \in \MPS(F\cdot\mathbf{1}_{[\theta_0^J,\bar\theta]})$.
By the concavity of $\phi$,
\[
\phi(s(\theta))-\phi(F(\theta))
\leq
\phi'(F(\theta))(s(\theta)-F(\theta)).
\]
Multiplying by $J(\theta)\geq 0$ and integrating yields
\begin{align*}
\int_{\theta_0}^{\bar\theta}
J(\theta)
[\phi(s(\theta))-\phi(F(\theta))]
\d  F(\theta)\leq
\int_{\theta_0}^{\bar\theta}
J(\theta)\phi'(F(\theta))
(s(\theta)-F(\theta))
\d  F(\theta).
\end{align*}

Define
\[
D(\theta)
=
\int_{\theta}^{\bar\theta}
(F(t)-s(t))\d  F(t).
\]
Feasibility implies that $D(\theta)\geq 0$ for all
$\theta\in[\theta_0,\bar\theta]$, with
$D(\theta_0)=D(\bar\theta)=0$. %
Hence, integration by parts gives
\begin{align*}
\int_{\theta_0}^{\bar\theta}
J(\theta)\phi'(F(\theta))
(s(\theta)-F(\theta))
\d  F(\theta) =
\int_{\theta_0}^{\bar\theta}
J(\theta)\phi'(F(\theta))\d  D(\theta) =
-\int_{\theta_0}^{\bar\theta}
D(\theta)\,
\d  \left( J(\theta)\phi'(F(\theta)) \right) 
\leq 0,
\end{align*}
where the last inequality follows because
$J(\theta)\phi'(F(\theta))$ is increasing.

Therefore,
\[
\int_{\theta_0}^{\bar\theta}
J(\theta)\phi(s(\theta))\d  F(\theta)
\leq
\int_{\theta_0}^{\bar\theta}
J(\theta)\phi(F(\theta))\d  F(\theta),
\]
so $s^*=F(\theta)\cdot\mathbf{1}_{[\theta_0,\bar\theta]}$ is the optimal solution to the relaxed problem [R].
Because $s^*(\theta)$ is an extreme point of $\MPS(F\cdot \mathbf1_{[\theta_0,\bar\theta]})$, it is also the solution to the original problem [M].
    \end{proof}

\begin{proof}[Proof of Proposition~\ref{prop:intrinsic}]
    First, I show the following lemma, which implies that the optimal mechanism is induced by a deterministic increasing allocation.
\begin{lemma}[Deterministic monotone implementation] \label{lemma:compatibility}
If $J(\theta)$ is increasing, the optimal
solution is induced by a deterministic
increasing allocation $\chi:\Theta\to \tilde X$ such that
\[
    x(\theta)=\chi(\theta),\qquad
    s(\theta)=S(\chi(\theta),G_\chi).
\]
\end{lemma}
\begin{proof}[Proof of Lemma~\ref{lemma:compatibility}]
Fix any feasible randomized allocation $\chi(\theta,\omega)$. For each
realization $\omega$, let $\chi^\uparrow(\cdot,\omega)$ be the increasing
rearrangement of $\chi(\cdot,\omega)$ with respect to $F$. Then,
\(
    G_{\chi^\uparrow}(\cdot\mid \omega)
    =
    G_\chi(\cdot\mid \omega)
\).
Since $S(x,G)$ is increasing in $x$ for fixed $G$, the function $x+S(x,
G_\chi(\cdot\mid\omega))$
is increasing in $x$. Hence, because $J$ is increasing, the monotone
rearrangement inequality gives
\[
\begin{aligned}
&\int_\Theta J(\theta)
\left[
\chi^\uparrow(\theta,\omega)
+
S\!\left(\chi^\uparrow(\theta,\omega),
G_{\chi^\uparrow}(\cdot\mid\omega)\right)
\right]dF(\theta) \\
&\qquad\geq
\int_\Theta J(\theta)
\left[
\chi(\theta,\omega)
+
S\!\left(\chi(\theta,\omega),
G_{\chi}(\cdot\mid\omega)\right)
\right]dF(\theta).
\end{aligned}
\]
Moreover, since $\chi^\uparrow(\cdot,\omega)$ and $\chi(\cdot,\omega)$ have
the same distribution under $F$,
\[
    \int_\Theta c(\chi^\uparrow(\theta,\omega))\,dF(\theta)
    =
    \int_\Theta c(\chi(\theta,\omega))\,dF(\theta).
\]
Thus we can restrict attention to randomized allocations that are increasing
in $\theta$ in every realization $\omega$.

Now suppose $\chi(\cdot,\omega)$ is increasing for every $\omega$. Define the
deterministic average allocation $\bar x(\theta)=\E_\omega[\chi(\theta,\omega)]$,
which is also increasing. The quality part of revenue is unchanged because
\[
    \E_\omega\int_\Theta J(\theta)\chi(\theta,\omega)\,dF(\theta)
    =
    \int_\Theta J(\theta)\bar x(\theta)\,dF(\theta).
\]
By Jensen's inequality and the strict convexity of $c$,
\[
    \E_\omega\int_\Theta c(\chi(\theta,\omega))\,dF(\theta)
    \geq
    \int_\Theta c(\bar x(\theta))\,dF(\theta).
\]
The inequality is strict unless $\chi(\theta,\omega)$ is constant in $\omega$ for almost every $\theta$.

It remains to compare the status part. Let
\[
    s_\omega(\theta)
    =
    S\!\left(\chi(\theta,\omega),
    G_{\chi}(\cdot\mid\omega)\right),
    \qquad 
    \bar s(\theta)
    =
    S(\bar x(\theta),G_{\bar x}).
\]
Since $\bar x(\theta)=\E_\omega[\chi(\theta,\omega)]$ and each
$\chi(\cdot,\omega)$ is increasing in $\theta$, whenever $\bar x$ is flat on an interval,
each $\chi(\cdot,\omega)$ must also be flat on that interval for almost every
$\omega$. Hence, every pooling interval of $\bar x$ is contained in a pooling interval of $\chi(\cdot,\omega)$. Therefore, $\bar s \succ s_\omega$ in the majorization order.
By Lemma~\ref{lem:fanlorentz}, because $J$ is increasing, 
this implies
\[
    \int_\Theta J(\theta)\bar s(\theta)\,dF(\theta)
    \geq
    \int_\Theta J(\theta)s_\omega(\theta)\,dF(\theta)
\]
for almost every $\omega$, and therefore 
\[
    \int_\Theta J(\theta)\bar s(\theta)\,dF(\theta)
    \geq
    \E_\omega\int_\Theta J(\theta)s_\omega(\theta)\,dF(\theta)
\]
in expectation.

Combining the three parts, the deterministic increasing allocation
$\bar x$ (strictly) dominates the original (nondegenerate) randomized allocation. Finally, it is straightforward to verify that the
deterministic allocation $\bar x$ induces the interim quality $\bar x$ and status $\bar s=
    S(\bar x(\theta),G_{\bar x})$ at the same time. %
\end{proof}

By Lemma~\ref{lemma:compatibility}, it is without loss of optimality to restrict
attention to deterministic increasing allocation rules
$\chi:\Theta\to\tilde X$ such that $x(\theta)=\chi(\theta)$ and $s(\theta)=S(\chi(\theta),G_\chi)$.

Consider a relaxed optimization problem that ignores the compatibility constraint:
\[
  \max_{s\in\MPS_w(F),\; x \text{ increasing}}   \int_0^{\bar\theta}
    \left[
        J(\theta)s(\theta)
        + J(\theta)x(\theta)
        - c(x(\theta))
    \right]dF(\theta).
\]
Recall that $\theta_0^J = \inf\{\theta\in[0,\bar\theta]\colon J(\theta)> 0\}$ denotes the cutoff type such that $J(\theta)<0$ for $\theta<\theta_0^J$ and $J(\theta)>0$ for $\theta> \theta_0^J$.

Consider first the quality $x(\theta)$.
Using pointwise maximization, the solution is 
\[
    x^*(\theta)=
    \begin{cases}
        0, & \text{if } \theta<\theta_0^J, \\
        c'^{-1}(J(\theta)), & \text{if } \theta\geq\theta_0^J.
    \end{cases}
\]
When $J$ is increasing, $x^*(\theta)$ is increasing.
For all $\theta\geq \theta_0^J$, $x^*(\theta)$ is strictly increasing if $J(\theta)$ is strictly increasing, and $x^*(\theta)$ is constant if $J(\theta)$ is constant.

Now consider the status $s(\theta)$. By the same argument as in
Proposition~\ref{prop:revmax}, the revenue-maximizing status
allocation excludes types below $\theta_0^J$.
On $[\theta_0^J,\bar\theta]$, on intervals where $J(\theta)$ is strictly increasing, the optimal status is $s^*(\theta)=F(\theta)$.
On any maximal interval $[\underline\theta_i,\bar\theta_i]\subseteq
[\theta_0^J,\bar\theta]$ on which $J(\theta)=\bar J_i$ is constant, we have
\[
    \int_{\underline\theta_i}^{\bar\theta_i} J(\theta)s(\theta)\,dF(\theta)
    =
    \bar J_i
    \int_{\underline\theta_i}^{\bar\theta_i} s(\theta)\,dF(\theta),
\]
Thus, it is without loss of optimality to set
\[
    s^*(\theta)=\frac{F(\underline\theta_i)+F(\bar\theta_i)}{2},
    \qquad \text{for all }\theta\in[\underline\theta_i,\bar\theta_i].
\]

On the participating interval $[\theta_0^J,\bar\theta]$, in the regions where
$J$ is strictly increasing, the quality scheme $x^*(\theta)$ is strictly increasing and induces separating status $s^*(\theta)=F(\theta)$;
in each maximal interval $[\underline\theta_i,\bar\theta_i]\subseteq
[\theta_0^J,\bar\theta]$ on which $J(\theta)$ is constant, the quality scheme $x^*(\theta)$ is constant and induces constant status $s^*(\theta)=(F(\underline\theta_i)+F(\bar\theta_i))/2$.
Hence, in either case, we have
\[
    s^*(\theta)=S(x^*(\theta),G_{x^*})
    \qquad
    \text{for all } \theta\geq\theta_0^J,
\]
i.e., the optimal status and quality schemes are induced by the same allocation $\chi^* =x^* $.
Moreover, the exclusion cutoff $\theta_0^J$ is the same for both $s^*(\theta)$ and $x^*(\theta)$.
Hence, the solution to the relaxed problem is compatible and solves
the original problem.
\end{proof}

\begin{proof}[Proof of Proposition~\ref{prop:negstatus}]
    Let $\MPS^- (F)$ denote the lower set of $\MPS_0(F)$, that is, the set of increasing functions $s\colon \Theta\to(-\infty,1]$ that satisfy $s\leq \hat{s}$ for some $\hat{s} \in \MPS_0(F)$.
    
By the same standard arguments, it is optimal to set $U(\theta_0)=0$, and the revenue maximization problem is
\[ \max_{s \in \MPS^-(F),\theta_0}
\int_{\theta_0}^{\bar \theta}   J(\theta) s(\theta)    \d F(\theta) + v(\theta_0) (1-F(\theta_0)).
\]
For all $\theta\geq \theta_0^J=\inf\{\theta\in\Theta:J(\theta)>0\}$,  because $J(\theta)$ is increasing and nonnegative, we have $s^*(\theta) = F(\theta)$.

For all $\theta<\theta_0^J$, we have $J(\theta)<0$, so the revenue is decreasing in $s(\theta)$.
Because $U(\theta_0)=0$, to maintain $U(\theta) \geq 0$ (IR), the allocation must satisfy $U'(\theta) = s(\theta) + v'(\theta) \geq 0$. Consequently, the lowest permissible allocation is $s^*(\theta) = -v'(\theta)$.
Moreover, $p(\theta)=-\theta v'(\theta) + v(\theta)\geq0$ for all $\theta\leq \theta_0^J$ follows from the concavity of $v(\theta)$ and that $v(0)\geq0$.

Compared to the revenue-maximizing mechanism characterized in Proposition~\ref{prop:revmax}, the revenue gain is
\begin{align*}
\Delta R
&=
\int_0^{\theta_0^*}
\left[-J(\theta)v'(\theta)+J_v(\theta)\right]\d F(\theta)  +
\int_{\theta_0^*}^{\theta_0^J}
J(\theta)\left[-v'(\theta)-F(\theta)\right]\d F(\theta)\\
&=
\int_0^{\theta_0^*}
\left[v(\theta)-\theta v'(\theta)\right]\d F(\theta)
-
\int_{\theta_0^*}^{\theta_0^J}
J(\theta)\left[F(\theta)+v'(\theta)\right]\d F(\theta).
\end{align*}
By the concavity of $v$ and $v(0)\geq 0$, we have
\[
v(\theta)-\theta v'(\theta)\geq v(0)\geq 0
\qquad\text{for all }\theta\geq 0,
\]
so the first term is nonnegative. The second term is also nonnegative because $J(\theta)<0$ for all $\theta\in(\theta_0^*,\theta_0^J)$.
Moreover, if $v(\theta)-\theta v'(\theta)>0$ for all $\theta>0$, the first term is strictly positive whenever $\theta_0^*>0$. If instead
$\theta_0^*=0$, then $f(0)<\infty$ implies $J(0)<0$
and hence $\theta_0^J>0$. Since $F(\theta)>0$ for every $\theta>0$ and
$v'(\theta)\geq0$, the second term is strictly positive. Therefore, the revenue gain is strictly
positive.

For consumer surplus, types below $\theta_0^*$ obtain zero payoffs under both mechanisms.
Types in $(\theta_0^*,\theta_0^J)$ obtain strictly positive payoffs in the benchmark
but zero payoffs here. 
By Corollary~\ref{cor:exclusion}, the optimal exclusion cutoff $\theta_0^*\leq\theta_0^J$, so the consumer surplus decreases.
If $f(0)<\infty$ and $v(\theta)-\theta v'(\theta)>0$ for all $\theta>0$, we have $\theta_0^*<\theta_0^J$, so the consumer surplus strictly decreases.
\end{proof}

\begin{proof}[Proof of Proposition~\ref{prop:suffering}]
    (i) 
    Let $L(\theta) = \theta + \frac{F(\theta)}{f(\theta)}$.
    After exchanging the order of integration and setting $U(\theta_0)=0$, the revenue can be written as
    \begin{align}
        R 
        = \int_{0}^{\theta_0} \left( L(\theta) s(\theta)  + v(\theta) + \frac{F(\theta)}{f(\theta)} v'(\theta) \right) \d F(\theta)
    \end{align}
    Then, the proof is similar to that of Proposition~\ref{prop:revmax}. 
    
   For part (c), because  $s(\theta) = [F(\theta)+(1-F(\theta_0))]\cdot\mathbf1_{\theta\leq\theta_0}$ is optimal, we have 
   \[R'(\theta_0)
=
\left[
v(\theta_0)
+\frac{F(\theta_0)}{f(\theta_0)}
(v'(\theta_0)+1)
+\theta_0(1-F(\theta_0))
\right]f(\theta_0) .\]
Define
\[
\psi_L(\theta)
:=
v(\theta)
+\frac{F(\theta)}{f(\theta)}\bigl(v'(\theta)+1\bigr)
+\theta\bigl(1-F(\theta)\bigr).
\]
Because $L(\theta)=\theta+F(\theta)/f(\theta)$ is increasing,
$v'(\theta)\leq-1$, and $v''(\theta)\leq0$, we have
\[
\psi_L'(\theta)
=
L'(\theta)\bigl(v'(\theta)+1\bigr)
+\frac{F(\theta)}{f(\theta)}v''(\theta)
-F(\theta)-\theta f(\theta)
\leq0.
\]
Thus, if
\[
\psi_L(\bar\theta)
=
v(\bar\theta)
+\frac{v'(\bar\theta)+1}{f(\bar\theta)}
\geq0,
\]
we have $R'(\theta_0)\geq0$ for all $\theta_0<\bar\theta$, so zero exclusion is optimal
(i.e., $\theta_0^*=\bar\theta$).

    (ii) The proof is similar to that of Proposition~\ref{prop:approx}.
    The only difference is that because $v'(\theta)\leq0$, all types $\theta\leq \theta(p)$ buy the good.
    Thus, the auxiliary problem becomes a screening problem in procurement where the \emph{lower} types receive the good.

   (iii) Similarly to the proof of Proposition~\ref{prop:welfaremax}, for any $\theta_0<\bar\theta$, we have
   \[U(\theta)
=
v(\theta)-v(\theta_0)
-\int_\theta^{\theta_0}s(t)\d t
\leq v(\theta) \]
for all $\theta\leq\theta_0$, and thus
  \[
  W(\theta_0) = \int_{0}^{\theta_0}U(\theta)\d F(\theta) \leq \E[v(\theta)] < \E[v(\theta)+\theta/2].
  \]
  Hence, exclusion is dominated by total pooling $s(\theta)=1/2$, which results in a consumer surplus of $\E[v(\theta)+\theta/2]$.   Then, the argument is the same as that of Proposition~\ref{prop:welfaremax}.
    \end{proof}

\end{appendices}

\singlespacing
\bibliographystyle{jpe}

\bibliography{status.bib}

\begin{thebibliography}{53}
\newcommand{\enquote}[1]{``#1''}
\providecommand{\natexlab}[1]{#1}
\providecommand{\url}[1]{\texttt{#1}}
\providecommand{\urlprefix}{URL }

\bibitem[{Akbarpour, Dworczak, and Kominers(2024)}]{AkbarpourDworczakKominers2024}
Akbarpour, Mohammad, Piotr Dworczak, and Scott~Duke Kominers. 2024.
\newblock \enquote{Redistributive {{Allocation Mechanisms}}.}
\newblock \emph{Journal of Political Economy} 132~(6):1831--1875.

\bibitem[{Alpizar, Carlsson, and {Johansson-Stenman}(2005)}]{AlpizarCarlssonJohansson-Stenman2005}
Alpizar, Francisco, Fredrik Carlsson, and Olof {Johansson-Stenman}. 2005.
\newblock \enquote{How Much Do We Care about Absolute versus Relative Income and Consumption?}
\newblock \emph{Journal of Economic Behavior \& Organization} 56~(3):405--421.

\bibitem[{Anderson and Dana(2009)}]{AndersonDana2009}
Anderson, Eric~T. and James~D. Dana. 2009.
\newblock \enquote{When {{Is Price Discrimination Profitable}}?}
\newblock \emph{Management Science} 55~(6):980--989.

\bibitem[{Bagwell and Bernheim(1996)}]{BagwellBernheim1996}
Bagwell, Laurie~Simon and B.~Douglas Bernheim. 1996.
\newblock \enquote{Veblen {{Effects}} in a {{Theory}} of {{Conspicuous Consumption}}.}
\newblock \emph{The American Economic Review} 86~(3):349--373.

\bibitem[{Board(2009)}]{Board2009}
Board, Simon. 2009.
\newblock \enquote{Monopolistic {{Group Design}} with {{Peer Effects}}.}
\newblock \emph{Theoretical Economics} 4:89--125.

\bibitem[{B{\"o}rgers(2015)}]{Borgers2015}
B{\"o}rgers, Tilman. 2015.
\newblock \emph{An {{Introduction}} to the {{Theory}} of {{Mechanism Design}}}.
\newblock New York, NY: Oxford University Press.

\bibitem[{Bursztyn et~al.(2018)Bursztyn, Ferman, Fiorin, Kanz, and Rao}]{BursztynFermanFiorin2018}
Bursztyn, Leonardo, Bruno Ferman, Stefano Fiorin, Martin Kanz, and Gautam Rao. 2018.
\newblock \enquote{Status {{Goods}}: {{Experimental Evidence}} from {{Platinum Credit Cards}}.}
\newblock \emph{The Quarterly Journal of Economics} 133~(3):1561--1595.

\bibitem[{Carlsson, {Johansson-Stenman}, and Martinsson(2007)}]{CarlssonJohansson-StenmanMartinsson2007}
Carlsson, Fredrik, Olof {Johansson-Stenman}, and Peter Martinsson. 2007.
\newblock \enquote{Do {{You Enjoy Having More}} than {{Others}}? {{Survey Evidence}} of {{Positional Goods}}.}
\newblock \emph{Economica} 74~(296):586--598.

\bibitem[{Chao and Schor(1998)}]{ChaoSchor1998}
Chao, Angela and Juliet~B. Schor. 1998.
\newblock \enquote{Empirical Tests of Status Consumption: {{Evidence}} from Women's Cosmetics.}
\newblock \emph{Journal of Economic Psychology} 19~(1):107--131.

\bibitem[{Charles, Hurst, and Roussanov(2009)}]{CharlesHurstRoussanov2009}
Charles, Kerwin~Kofi, Erik Hurst, and Nikolai Roussanov. 2009.
\newblock \enquote{Conspicuous {{Consumption}} and {{Race}}*.}
\newblock \emph{The Quarterly Journal of Economics} 124~(2):425--467.

\bibitem[{Csorba(2008)}]{Csorba2008}
Csorba, Gergely. 2008.
\newblock \enquote{Screening Contracts in the Presence of Positive Network Effects.}
\newblock \emph{International Journal of Industrial Organization} 26~(1):213--226.

\bibitem[{Deneckere and McAfee(1996)}]{DeneckereMcAfee1996}
Deneckere, Raymond~J. and R.~Preston McAfee. 1996.
\newblock \enquote{Damaged {{Goods}}.}
\newblock \emph{Journal of Economics \& Management Strategy} 5~(2):149--174.

\bibitem[{Dubey and Geanakoplos(2010)}]{DubeyGeanakoplos2010}
Dubey, Pradeep and John Geanakoplos. 2010.
\newblock \enquote{Grading Exams: 100,99,98,\dots{} or {{A}},{{B}},{{C}}?}
\newblock \emph{Games and Economic Behavior} 69~(1):72--94.

\bibitem[{Duesenberry(1949)}]{Duesenberry1949}
Duesenberry, James~Stemble. 1949.
\newblock \emph{Income, {{Saving}}, and the {{Theory}} of {{Consumer Behavior}}}.
\newblock Harvard University Press.

\bibitem[{Durst(2021)}]{Durst2021}
Durst, Alessa~K. 2021.
\newblock \enquote{Education as a {{Positional Good}}? {{Evidence}} from the {{German Socio-Economic Panel}}.}
\newblock \emph{Social Indicators Research} 155~(2):745--767.

\bibitem[{Dworczak et~al.(2026)Dworczak, Reuter, Kominers, and Lee}]{DworczakReuterKominers2026}
Dworczak, Piotr, Marco Reuter, Scott~Duke Kominers, and Changhwa Lee. 2026.
\newblock \enquote{Optimal {{Membership Design}}.}
\newblock \emph{SSRN Electronic Journal} .

\bibitem[{Frank(1985{\natexlab{a}})}]{Frank1985a}
Frank, Robert~H. 1985{\natexlab{a}}.
\newblock \emph{Choosing the {{Right Pond}}: {{Human Behavior}} and the {{Quest}} for {{Status}}}.
\newblock Oxford University Press.

\bibitem[{Frank(1985{\natexlab{b}})}]{Frank1985}
---{}---{}---. 1985{\natexlab{b}}.
\newblock \enquote{The {{Demand}} for {{Unobservable}} and {{Other Nonpositional Goods}}.}
\newblock \emph{The American Economic Review} 75~(1):101--116.

\bibitem[{Frank(2005)}]{Frank2005}
---{}---{}---. 2005.
\newblock \enquote{Positional {{Externalities Cause Large}} and {{Preventable Welfare Losses}}.}
\newblock \emph{American Economic Review} 95~(2):137--141.

\bibitem[{Frank(2008)}]{Frank2008}
---{}---{}---. 2008.
\newblock \enquote{Should Public Policy Respond to Positional Externalities?}
\newblock \emph{Journal of Public Economics} 92~(8):1777--1786.

\bibitem[{Gershkov and Schweinzer(2010)}]{GershkovSchweinzer2010}
Gershkov, Alex and Paul Schweinzer. 2010.
\newblock \enquote{When Queueing Is Better than Push and Shove.}
\newblock \emph{International Journal of Game Theory} 39~(3):409--430.

\bibitem[{Gershkov and Winter(2023)}]{GershkovWinter2023}
Gershkov, Alex and Eyal Winter. 2023.
\newblock \enquote{Gainers and {{Losers}} in {{Priority Services}}.}
\newblock \emph{Journal of Political Economy} 131~(11):3103--3155.

\bibitem[{Hirsch(1976)}]{Hirsch1976}
Hirsch, Fred. 1976.
\newblock \emph{Social {{Limits}} to {{Growth}}}.
\newblock Cambridge, MA: Harvard University Press.

\bibitem[{Hopkins(2023)}]{Hopkins2023}
Hopkins, Ed. 2023.
\newblock \enquote{Is Everything Relative? {{A}} Survey of the Theory of Matching Tournaments.}
\newblock \emph{Journal of Economic Surveys} 37~(3):688--714.

\bibitem[{Hopkins and Kornienko(2004)}]{HopkinsKornienko2004}
Hopkins, Ed and Tatiana Kornienko. 2004.
\newblock \enquote{Running to {{Keep}} in the {{Same Place}}: {{Consumer Choice}} as a {{Game}} of {{Status}}.}
\newblock \emph{American Economic Review} 94~(4):1085--1107.

\bibitem[{Hoppe, Moldovanu, and Sela(2009)}]{HoppeMoldovanuSela2009}
Hoppe, Heidrun~C., Benny Moldovanu, and Aner Sela. 2009.
\newblock \enquote{The {{Theory}} of {{Assortative Matching Based}} on {{Costly Signals}}.}
\newblock \emph{The Review of Economic Studies} 76~(1):253--281.

\bibitem[{Immorlica, Stoddard, and Syrgkanis(2015)}]{ImmorlicaStoddardSyrgkanis2015}
Immorlica, Nicole, Greg Stoddard, and Vasilis Syrgkanis. 2015.
\newblock \enquote{Social {{Status}} and {{Badge Design}}.}
\newblock In \emph{Proceedings of the 24th {{International Conference}} on {{World Wide Web}}}, {{WWW}} '15. Republic and Canton of Geneva, CHE: International World Wide Web Conferences Steering Committee, 473--483.

\bibitem[{Jehiel and Moldovanu(2006)}]{JehielMoldovanu2006}
Jehiel, Philippe and Benny Moldovanu. 2006.
\newblock \enquote{Allocative and {{Informational Externalities}} in {{Auctions}} and {{Related Mechanisms}}.}
\newblock Available at SSRN: \url{https://papers.ssrn.com/abstract=912258}.

\bibitem[{Jehiel, Moldovanu, and Stacchetti(1996)}]{JehielMoldovanuStacchetti1996}
Jehiel, Philippe, Benny Moldovanu, and Ennio Stacchetti. 1996.
\newblock \enquote{How ({{Not}}) to {{Sell Nuclear Weapons}}.}
\newblock \emph{The American Economic Review} 86~(4):814--829.

\bibitem[{Jia, Li, and Cousineau(2025)}]{JiaLiCousineau2025}
Jia, Ruixue, Hongbin Li, and Claire Cousineau. 2025.
\newblock \emph{The {{Highest Exam}}: {{How}} the {{Gaokao Shapes China}}}.
\newblock Harvard University Press.

\bibitem[{Kapferer and {Valette-Florence}(2021)}]{KapfererValette-Florence2021}
Kapferer, Jean-No{\"e}l and Pierre {Valette-Florence}. 2021.
\newblock \enquote{Which Consumers Believe Luxury Must Be Expensive and Why? {{A}} Cross-Cultural Comparison of Motivations.}
\newblock \emph{Journal of Business Research} 132:301--313.

\bibitem[{Kim, Tertilt, and Yum(2024)}]{KimTertiltYum2024}
Kim, Seongeun, Mich{\`e}le Tertilt, and Minchul Yum. 2024.
\newblock \enquote{Status {{Externalities}} in {{Education}} and {{Low Birth Rates}} in {{Korea}}.}
\newblock \emph{American Economic Review} 114~(6):1576--1611.

\bibitem[{Kleiner, Moldovanu, and Strack(2021)}]{KleinerMoldovanuStrack2021}
Kleiner, Andreas, Benny Moldovanu, and Philipp Strack. 2021.
\newblock \enquote{Extreme {{Points}} and {{Majorization}}: {{Economic Applications}}.}
\newblock \emph{Econometrica} 89~(4):1557--1593.

\bibitem[{Krishna et~al.(2026)Krishna, Lychagin, Olszewski, Siegel, and Tergiman}]{KrishnaLychaginOlszewski2026}
Krishna, Kala, Sergey Lychagin, Wojciech Olszewski, Ron Siegel, and Chloe Tergiman. 2026.
\newblock \enquote{Pareto {{Improvements}} in the {{Contest}} for {{College Admissions}}.}
\newblock \emph{The Review of Economic Studies} 93~(1):629--663.

\bibitem[{Leibenstein(1950)}]{Leibenstein1950}
Leibenstein, H. 1950.
\newblock \enquote{Bandwagon, {{Snob}}, and {{Veblen Effects}} in the {{Theory}} of {{Consumers}}' {{Demand}}.}
\newblock \emph{The Quarterly Journal of Economics} 64~(2):183--207.

\bibitem[{Loertscher and Muir(2022)}]{LoertscherMuir2022}
Loertscher, Simon and Ellen~V. Muir. 2022.
\newblock \enquote{Monopoly {{Pricing}}, {{Optimal Randomization}}, and {{Resale}}.}
\newblock \emph{Journal of Political Economy} 130~(3):566--635.

\bibitem[{Lundy et~al.(2025)Lundy, Raman, Kominers, and {Leyton-Brown}}]{LundyRamanKominers2025}
Lundy, Taylor, Narun Raman, Scott~Duke Kominers, and Kevin {Leyton-Brown}. 2025.
\newblock \enquote{{{NFTs}} as a {{Data-Rich Test Bed}}: {{Conspicuous Consumption}} and Its {{Determinants}}.}
\newblock {arXiv}:2503.17457.

\bibitem[{Luttmer(2005)}]{Luttmer2005}
Luttmer, Erzo F.~P. 2005.
\newblock \enquote{Neighbors as {{Negatives}}: {{Relative Earnings}} and {{Well-Being}}*.}
\newblock \emph{The Quarterly Journal of Economics} 120~(3):963--1002.

\bibitem[{McAfee(2002)}]{McAfee2002}
McAfee, R.~Preston. 2002.
\newblock \enquote{Coarse {{Matching}}.}
\newblock \emph{Econometrica} 70~(5):2025--2034.

\bibitem[{McAfee and McMillan(1992)}]{McAfeeMcMillan1992}
McAfee, R.~Preston and John McMillan. 1992.
\newblock \enquote{Bidding {{Rings}}.}
\newblock \emph{The American Economic Review} 82~(3):579--599.

\bibitem[{Meisner and Pillath(2025)}]{MeisnerPillath2025}
Meisner, Vincent and Pascal Pillath. 2025.
\newblock \enquote{Monetizing Digital Content with Network Effects: {{A}} Mechanism-Design Approach.}
\newblock {arXiv}:2408.15196.

\bibitem[{Moldovanu, Sela, and Shi(2007)}]{MoldovanuSelaShi2007}
Moldovanu, Benny, Aner Sela, and Xianwen Shi. 2007.
\newblock \enquote{Contests for {{Status}}.}
\newblock \emph{Journal of Political Economy} 115~(2):338--363.

\bibitem[{Mussa and Rosen(1978)}]{MussaRosen1978}
Mussa, Michael and Sherwin Rosen. 1978.
\newblock \enquote{Monopoly and Product Quality.}
\newblock \emph{Journal of Economic Theory} 18~(2):301--317.

\bibitem[{Myerson(1981)}]{Myerson1981}
Myerson, Roger~B. 1981.
\newblock \enquote{Optimal {{Auction Design}}.}
\newblock \emph{Mathematics of Operations Research} 6~(1):58--73.

\bibitem[{Ostrizek and Sartori(2023)}]{OstrizekSartori2023}
Ostrizek, Franz and Elia Sartori. 2023.
\newblock \enquote{Screening While Controlling an Externality.}
\newblock \emph{Games and Economic Behavior} 139:26--55.

\bibitem[{Pollak(1976)}]{Pollak1976}
Pollak, Robert~A. 1976.
\newblock \enquote{Interdependent {{Preferences}}.}
\newblock \emph{The American Economic Review} 66~(3):309--320.

\bibitem[{Rayo(2013)}]{Rayo2013}
Rayo, Luis. 2013.
\newblock \enquote{Monopolistic {{Signal Provision}}.}
\newblock \emph{The B.E. Journal of Theoretical Economics} 13~(1):27--58.

\bibitem[{Robson(1992)}]{Robson1992}
Robson, Arthur~J. 1992.
\newblock \enquote{Status, the {{Distribution}} of {{Wealth}}, {{Private}} and {{Social Attitudes}} to {{Risk}}.}
\newblock \emph{Econometrica} 60~(4):837--857.

\bibitem[{{The Economist}(2021)}]{Economist2021}
{The Economist}. 2021.
\newblock \enquote{The Case for Mutual Educational Disarmament.}
\newblock \emph{The Economist} \href{https://www.economist.com/finance-and-economics/2021/08/21/the-case-for-mutual-educational-disarmament}{https://www.economist.com/finance-and-economics/2021/08/21/the-case-for-mutual-educational-disarmament}.

\bibitem[{Toikka(2011)}]{Toikka2011}
Toikka, Juuso. 2011.
\newblock \enquote{Ironing without Control.}
\newblock \emph{Journal of Economic Theory} 146~(6):2510--2526.

\bibitem[{Truyts(2010)}]{Truyts2010}
Truyts, Tom. 2010.
\newblock \enquote{Social {{Status}} in {{Economic Theory}}.}
\newblock \emph{Journal of Economic Surveys} 24~(1):137--169.

\bibitem[{Veblen(1899)}]{Veblen1899}
Veblen, Thorstein. 1899.
\newblock \emph{The {{Theory}} of the {{Leisure Class}}}.
\newblock New York, NY: Macmillan.

\bibitem[{Yang and Zentefis(2024)}]{YangZentefis2024}
Yang, Kai~Hao and Alexander~K. Zentefis. 2024.
\newblock \enquote{Monotone {{Function Intervals}}: {{Theory}} and {{Applications}}.}
\newblock \emph{American Economic Review} 114~(8):2239--2270.

\end{thebibliography}

\end{document}